\tikzstyle{wirelable}=[font={\scriptsize}, scale=.9, fill=none, inner sep=1pt, tikzit category=strings]
\tikzstyle{box}=[fill=white, draw=black, shape=rectangle, inner sep=2.5pt, font={\scriptsize}, tikzit category=strings]
\tikzstyle{thick_box}=[fill=white, line width=.8pt, tikzit draw=blue, draw=black, shape=rectangle, inner sep=2.5pt, font={\scriptsize}, tikzit category=strings]
\tikzstyle{background}=[-, fill={rgb,255: red,220; green,220; blue,220}, draw=none, tikzit draw={rgb,255: red,128; green,128; blue,128}, tikzit category=strings]
\tikzstyle{bbox}=[-, fill=white, tikzit category=strings]
\tikzstyle{thick_bbox}=[-, fill=white, line width=.8pt, tikzit draw=blue, tikzit category=strings]
\tikzstyle{thick}=[-, line width=.8pt, tikzit draw=blue, tikzit category=strings]
\tikzstyle{dash_edge}=[-, dashed]
\tikzstyle{control}=[thick, fill=black, circle, scale=1, inner sep=.05cm, tikzit category=circuits]
\tikzstyle{target}=[inner sep=2.5pt, draw, circle, path picture={ \draw[black](path picture bounding box.east) -- (path picture bounding box.west) (path picture bounding box.south) -- (path picture bounding box.north);}, tikzit category=circuits]
\tikzstyle{gate}=[fill=white, draw=black, shape=rectangle, tikzit category=circuits]
\tikzstyle{bgate}=[-, fill=white, tikzit category=circuits]
\tikzstyle{wn}=[font={\scriptsize\boldmath}, inner sep=1mm, outer sep=-1.8mm, scale=0.8, tikzit shape=circle, draw=black, fill=black!01, tikzit fill=white, tikzit draw=black, shape=circle, tikzit category=GLA]
\tikzstyle{bn}=[font={\scriptsize\boldmath}, inner sep=1mm, outer sep=-1.8mm, scale=0.8, tikzit shape=circle, draw=black, fill={rgb,255: red,100; green,100; blue,100}, tikzit draw=black, shape=circle, tikzit category=GLA]
\tikzstyle{gwn}=[font={\scriptsize\boldmath}, inner sep=1mm, outer sep=-1.8mm, scale=0.8, tikzit shape=circle, draw=black, fill=black!01, tikzit draw=blue, tikzit fill=white, tikzit category=ZX, shape=circle, line width=.8pt]
\tikzstyle{gbn}=[font={\scriptsize\boldmath}, inner sep=1mm, outer sep=-1.8mm, scale=0.8, tikzit shape=circle, draw=black, fill={rgb,255: red,100; green,100; blue,100}, tikzit fill=black, tikzit draw=blue, shape=circle, tikzit category=ZX, line width=.8pt]
\tikzstyle{wphase}=[rounded rectangle, rounded rectangle arc length=120, fill={white}, inner sep=2pt, font={\tiny\boldmath}, label distance=1mm, fill opacity=.8, text opacity=1, tikzit category=GLA, tikzit fill=white, tikzit draw=black]
\tikzstyle{bphase}=[rounded rectangle, rounded rectangle arc length=120, fill={rgb,255: red,100; green,100; blue,100}, inner sep=2pt, font={\tiny\boldmath}, label distance=1mm, fill opacity=.6, text opacity=1, tikzit category=GLA, tikzit fill=gray, tikzit draw=gray]
\tikzstyle{mphase}=[rounded rectangle, rounded rectangle arc length=120, fill={rgb,255: red,180; green,180; blue,180}, inner sep=2pt, font={\tiny\boldmath}, label distance=1mm, fill opacity=.6, text opacity=1, tikzit category=GLA]
\tikzstyle{lmult}=[shape=signal, signal to=west, signal from=east, fill={rgb,255: red,180; green,180; blue,180}, draw=black, minimum height=6pt, inner sep=1pt, font={\scriptsize\boldmath}, tikzit fill=gray, tikzit category=GLA, anchor=center, outer sep=-.1cm, signal pointer angle=\arrowangle,scale=\arrowscaling]
\tikzstyle{rmult}=[shape=signal, signal to=east, signal from=west, fill={rgb,255: red,180; green,180; blue,180}, draw=black, minimum height=6pt, inner sep=1pt, font={\scriptsize\boldmath}, tikzit fill=gray, tikzit category=GLA, anchor=center, outer sep=-.1cm, signal pointer angle=\arrowangle,scale=\arrowscaling]
\tikzstyle{dmult}=[shape=signal, signal to=east, signal from=west, fill={rgb,255: red,180; green,180; blue,180}, draw=black, minimum height=6pt, inner sep=1pt, font={\scriptsize\boldmath}, tikzit fill=gray, tikzit category=GLA, rotate=270, anchor=center, outer sep=-.1cm, signal pointer angle=\arrowangle,scale=\arrowscaling]
\tikzstyle{umult}=[shape=signal, signal to=east, signal from=west, fill={rgb,255: red,180; green,180; blue,180}, draw=black, minimum height=6pt, inner sep=1pt, font={\scriptsize\boldmath}, tikzit fill=gray, tikzit category=GLA, rotate=90, anchor=center, outer sep=-.1cm, signal pointer angle=\arrowangle,scale=\arrowscaling]
\tikzstyle{lmat}=[shape=signal, signal to=west, signal from=east, fill={zx_grey}, draw=black, minimum height=6pt, inner sep=1pt, font={\scriptsize\boldmath}, tikzit fill=gray, tikzit category=GLA, anchor=center, outer sep=-.1cm, signal pointer angle=\arrowangle,scale=\arrowscaling, line width=.8pt, tikzit draw=blue]
\tikzstyle{rmat}=[shape=signal, signal to=east, signal from=west, fill={zx_grey}, draw=black, minimum height=6pt, inner sep=1pt, font={\scriptsize\boldmath}, tikzit fill=gray, tikzit category=GLA, anchor=center, outer sep=-.1cm, signal pointer angle=\arrowangle,scale=\arrowscaling, line width=.8pt, tikzit draw=blue]
\tikzstyle{dmat}=[shape=signal, signal to=east, signal from=west, fill={zx_grey}, draw=black, minimum height=6pt, inner sep=1pt, font={\scriptsize\boldmath}, tikzit fill=gray, tikzit category=GLA, rotate=270, anchor=center, outer sep=-.1cm, signal pointer angle=\arrowangle,scale=\arrowscaling, line width=.8pt, tikzit draw=blue]
\tikzstyle{umat}=[shape=signal, signal to=east, signal from=west, fill={zx_grey}, draw=black, minimum height=6pt, inner sep=1pt, font={\scriptsize\boldmath}, tikzit fill=gray, tikzit category=GLA, rotate=90, anchor=center, outer sep=-.1cm, signal pointer angle=\arrowangle,scale=\arrowscaling, line width=.8pt, tikzit draw=blue]
\tikzstyle{gn}=[font={\scriptsize\boldmath}, inner sep=1mm, outer sep=-1.8mm, scale=0.8, tikzit shape=circle, draw=black, fill={rgb,255: red,216; green,248; blue,216}, tikzit draw=black, tikzit fill=green, tikzit category=GSA, shape=circle]
\tikzstyle{rn}=[font={\scriptsize\boldmath}, inner sep=1mm, outer sep=-1.8mm, scale=0.8, tikzit shape=circle, draw=black, fill={rgb,255: red,232; green,165; blue,165}, tikzit fill=red, tikzit draw=black, shape=circle, tikzit category=GSA]
\tikzstyle{ggn}=[font={\scriptsize\boldmath}, inner sep=1mm, outer sep=-1.8mm, scale=0.8, tikzit shape=circle, draw=black, fill={zx_green}, tikzit draw=blue, tikzit fill=green, tikzit category=ZX, shape=circle, line width=.8pt]
\tikzstyle{grn}=[font={\scriptsize\boldmath}, inner sep=1mm, outer sep=-1.8mm, scale=0.8, tikzit shape=circle, draw=black, fill={zx_red}, tikzit fill=red, tikzit draw=blue, shape=circle, tikzit category=ZX, line width=.8pt]
\tikzstyle{had}=[fill=yellow, draw=black, shape=rectangle, tikzit category=GSA, tikzit fill=yellow, tikzit draw=black, inner sep=1.5pt, minimum height = 5pt, minimum width = 5pt, font={\scriptsize\boldmath}]
\tikzstyle{ghad}=[shading=hadballshading, fill={zx_grey}, draw=black, shape=rectangle, tikzit category=GSA, tikzit fill=yellow, tikzit draw=blue, thick, minimum size=5pt, inner sep=1.5pt, scale={\boxscaling}, font={\scriptsize\boldmath}]
\tikzstyle{gphase}=[rounded rectangle, rounded rectangle arc length=120, fill={zx_green}, inner sep=2pt, font={\tiny\boldmath}, label distance=1mm, fill opacity=.8, text opacity=1, tikzit category=GSA, tikzit fill=green, tikzit draw=green]
\tikzstyle{rphase}=[rounded rectangle, rounded rectangle arc length=120, fill={zx_red}, inner sep=2pt, font={\tiny\boldmath}, label distance=1mm, fill opacity=.6, text opacity=1, tikzit category=GSA, tikzit fill=red, tikzit draw=red]
\tikzstyle{gbox}=[fill=white, line width=1pt, draw=black, shape=rectangle, inner sep=2.5pt, font={\scriptsize}, tikzit draw=blue]
\tikzstyle{graph_vertex}=[fill=black, draw=black, shape=circle, tikzit category=mbqc, minimum size=2.4mm, inner sep=.8mm]
\tikzstyle{graph_weight}=[fill=white, draw=none, shape=rectangle, tikzit category=mbqc, inner sep=2pt, scale=.8]
  \newcommand{\robert}[1]{%
    \noindent{\color{purple} \textsf{[RIB: #1]}}
  }
    \newcommand{\cole}[1]{%
      \noindent{\color{blue} \textsf{[CC: #1]}}
    }
    \newcommand{\cole}[1]{}
  \newcommand{\robert}[1]{}
  \newcommand{\cole}[1]{}
\newcommand{\Aff}{%
  \mathsf{Aff}
}
\newcommand{\Lin}{%
  \mathsf{Lin}
}
\newcommand{\Lag}{%
  \mathsf{Lag}
}
\newcommand{\Isot}{%
  \mathsf{Isot}
}
\newcommand{\zx}{%
  \mathsf{GSA}
}
\newcommand{\gla}{%
  \mathsf{GLA}
}
\newcommand{\gaa}{%
  \mathsf{GAA}
}
\newcommand{\Proj}{%
  \operatorname{\mathbb{P}}
}
\renewcommand{\op}{\mathsf{op}}
\newcommand{\Conf}[2]{\bigl[\,#1 \bigm| #2\,\bigr]}
\newcommand{\aden}[3]{\bigl\langle #1\bigr\rangle_{#3}^{#2}} % (a)_{Δ}^{Γ}
\newcommand{\interp}[1]{%
  \left\llbracket #1 \right\rrbracket
}
\newcommand{\trans}{%
  \mathsf{T}
}
\DeclareRobustCommand{\disc}{
  {!_{\sf isom}}
}
\newcommand{\N}{%
  \mathbb{N}
}
\newcommand{\Z}{%
  \mathbb{Z}
}
\newcommand{\Zp}{%
  \mathbb{F}_p
}
\newcommand{\K}{%
  \mathbb{K}
}
\newcommand{\C}{%
  \mathbb{C}
}
\newcommand{\F}{%
  \mathbb{F}
}
\NewDocumentCommand{\Fpnn}{O{n}}{\F_p^{2(#1)}}
\NewDocumentCommand{\Mat}{ O{m} O{n} O{\F_p} }{%
  \operatorname{M}_{#2 \times #1}(#3)
}
\newcommand{\FinRel}{\mathsf{FRel}}
\newcommand{\FHilb}{\mathsf{FHilb}}
\newcommand{\Stab}{\mathsf{Stab}}
\NewDocumentCommand{\Hp}{O{n}}{\mathcal{H}_p^{\otimes #1}}
\NewDocumentCommand{\HS}{O{S}}{\mathcal{H}_{#1}}
\NewDocumentCommand{\Pauli}{O{n}}{\mathcal{P}_p^{\otimes #1}}
\NewDocumentCommand{\Cliff}{O{n}}{\mathcal{C}\!\ell_p^{#1}}
\DeclareMathOperator{\CPM}{CPM}
\newcommand{\Split}{\operatorname{Split}^\dagger}
\DeclareMathOperator{\Caus}{Caus}
\DeclareMathOperator{\Total}{Total}
\DeclareMathOperator{\im}{im}
\DeclareMathOperator{\rel}{Rel}
\NewDocumentCommand{\Rel}{O{X}}{%
  \mathsf{Rel}_{#1}
}
\NewDocumentCommand{\Symp}{O{\F_p}}{%
  \mathsf{Symp}_{#1}
}
\NewDocumentCommand{\ASymp}{O{\F_p}}{%
  {\Aff}\Symp[#1]
}
\NewDocumentCommand{\AR}{O{\F_p}}{%
  {\Aff}\Rel[#1]
}
\NewDocumentCommand{\lR}{O{\F_p}}{%
  {\Lin}\Rel[#1]
}
\NewDocumentCommand{\LR}{O{\F_p}}{%
  {\Lag}\Rel[#1]
}
\NewDocumentCommand{\IR}{O{\F_p}}{%
  {\Isot}\Rel[#1]
}
\NewDocumentCommand{\CR}{O{\F_p}}{%
  {\mathsf{Coisot}}\Rel[#1]
}
\NewDocumentCommand{\ALR}{O{\F_p}}{%
  {\Aff}\LR[#1]
}
\NewDocumentCommand{\AIR}{O{\F_p}}{%
  {\Aff}\IR[#1]
}
\NewDocumentCommand{\ACR}{O{\F_p}}{%
  {\Aff}\CR[#1]
}
\NewDocumentCommand{\ARQ}{O{\F_p}}{%
  {\Aff}\Rel[#1]^Q
}
\NewDocumentCommand{\NLQ}{O{\F_p}}{%
  \Rel[#1]^Q
}
\NewDocumentCommand{\ZX}{O{\F_p}}{%
  \zx_{#1}
}
\NewDocumentCommand{\ZXdisc}{O{\F_p}}{%
  \ZX[#1]^\disc
}
\NewDocumentCommand{\GLA}{O{\F_p}}{%
  \gla_{#1}  
}
\NewDocumentCommand{\GAA}{O{\F_p}}{%
  \gaa_{#1}
}
\NewDocumentCommand{\Matrices}{ O{m} O{n} O{\K} }{%
  \operatorname{Mat}_{#3}(#1,#2)
}
\NewDocumentCommand{\Sym}{O{n} O{\K} }{%
  \operatorname{Sym}_{#1}(#2)
}
\newlength\oversetwidth
\newlength\underwidth
\newcommand{\swap}{\mathtt{swap}}
\renewcommand{\phi}{%
  \varphi
}
\renewcommand{\epsilon}{%
  \varepsilon
}
\renewcommand{\leq}{%
  \leqslant
}
\renewcommand{\xi}{\chi}
\newcommand{\Ty}{\mathsf{Ty}}
\newcommand{\Reg}{\ensuremath{\mathsf{Reg}}\xspace}
\DeclareFontFamily{U}{bbold}{}
\DeclareFontShape{U}{bbold}{m}{n}{<-6>bbold5<6-8>bbold7<8->bbold10}{}
\newcommand{\fby}{%
  \mathbin{\text{\usefont{U}{bbold}{m}{n}\char"3B}}%
}
\newcommand{\pit}{\mathbf{init}}
\newcommand{\qpit}{\mathbf{qinit}}
\newcommand{\discard}{\mathbf{disc}}
\newcommand{\mul}{\mathbf{mul}}
\newcommand{\measure}{\mathbf{meas}}
\newcommand{\control}{\mathbf{ctrl}}
\newcommand{\noop}{\mathbf{skip}}
\newcommand{\muleqq}{\!\mathrel{\boldsymbol{*}\!\!\boldsymbol{=}}\!}
\newcommand{\affmul}{\!\!\mathrel{\boldsymbol{*}}\!}
\newcommand{\pittype}{\mathbf{\mathtt{pit}}}
\newcommand{\qpittype}{\mathbf{\mathtt{qpit}}}
\DeclareRobustCommand{\vbreg}[1]{%
  \underline{\mbox{\normalfont\upshape
    \usefont{T1}{pcr}{b}{n}\detokenize{#1}}}%  % Courier Bold
}
\DeclareRobustCommand{\reg}[1]{%
  \underline{\mbox{\normalfont\upshape
    \usefont{T1}{pcr}{m}{n}\detokenize{#1}}}%  % Courier Regular
}
\DeclareTextFontCommand{\texttt}{%
  \normalfont\upshape\fontencoding{T1}\fontfamily{pcr}\selectfont
}
\DeclareMathOperator{\dom}{dom}
\newcommand{\StabChan}{\mathsf{StabChan}}
\newcommand{\QChan}{\mathsf{QuantChan}}
\newcommand{\StabLang}{\mathrm{SPL}}
\newcommand{\nlStabLang}{\mathrm{NLSPL}}
\newcommand{\QPL}{\mathrm{QPL}}
\let\oldrightsquigarrow\rightsquigarrow
\renewcommand{\rightsquigarrow}{\mathrel{\!\oldrightsquigarrow\!}}
\DeclareMathOperator{\Gr}{Gr}
\renewcommand{\emptyset}{\varnothing}
\newcommand{\update}{\;\triangleright\;}
\renewcommand{\eta}{\mathtt{cup}}
\renewcommand{\epsilon}{\mathtt{cap}}
\title{Denotational semantics for stabiliser quantum programs}
\author{Robert I. Booth}{University of Oxford, United Kingdom}{firstname dot lastname@cs.ox.ac.uk}{https://orcid.org/0000-0002-1146-3380}{}
\author{Cole Comfort}{Universit\'e Paris-Saclay, CNRS, ENS Paris-Saclay, Inria, CentraleSup\'elec, Laboratoire M\'ethodes Formelles}{firstname dot lastname@inria.fr}{}{}
\authorrunning{R.\,I. Booth and C. Comfort}
\keywords{stabiliser theory, denotational semantics, quantum error-correction, symplectic linear algebra, categorical semantics, quantum programming languages} %TODO mandatory; please add comma-separated list of keywords
\begin{document}

\maketitle

\begin{abstract}
  The stabiliser fragment of quantum theory is a foundational building  block for quantum error correction and the fault-tolerant compilation  of quantum programs. In this article, we develop a sound, universal and complete  denotational semantics for stabiliser operations which include measurement,  classically-controlled Pauli operators, and affine classical operations,  in which quantum error-correcting codes are  first-class objects. The operations are interpreted as certain \emph{affine
  relations} over finite fields. This offers a conceptually motivated and computationally-tractable alternative to the standard  operator-algebraic semantics of quantum programs (whose time complexity grows exponentially as the state space increases in size).   We demonstrate the power of the resulting semantics by describing a   small, proof-of-concept assembly language for stabiliser programs with  fully-abstract denotational semantics. 
\end{abstract}
\section{Introduction}

The problem of compiling quantum algorithms into fault-tolerant hardware-level
instructions is a central challenge in the design of scalable quantum systems
\cite{campbell_roads_2017, beverland_assessing_2022, Paler2017}. Quantum
error-correcting codes are central to this challenge, among which
stabiliser codes are the most common and well-studied \cite{0904.2557}.
For fault‑tolerant compilation to scale, we need a better understanding
of the compositional structure of fault-tolerance, and therefore of the
stabiliser fragment. Unlike general quantum programs, stabiliser quantum
programs can be simulated efficiently on a probabilistic classical computer
\cite{aaronson2004improved}. Despite this fact, the formal denotational
semantics of stabiliser quantum programs has not been thoroughly studied. That is to say, the mathematical structure of quantum programs built from stabiliser operations is not well-understood. 

In this article, we develop a nondeterministic denotational semantics
for quantum programs built from stabiliser operations, including Clifford
operators, Pauli errors, Pauli measurement, affine classical operations and
classically-controlled Pauli operators. Our finely tuned semantics is to be
contrasted with the usual, much larger denotational semantics of non-stabiliser
quantum programs in terms of quantum channels \cite{mingsheng_foundations_2016}. Our work draws from two lines
of research: the categorical semantics of quantum programming languages and
quantum computing \cite{selinger_towards_2004, selinger_idempotents_2008,
selinger_dagger_2007, heunen_completely_2014}; and the symplectic
representation of pure stabiliser circuits \cite{gross_finite_2005,
neretin_lectures_2011, heinrich, comfort_graphical_2021,
booth_graphical_2024}. Ultimately, these results constitute the first
step towards the development of formally verified \emph{fault-tolerant}
quantum compilation frameworks, integrating current approaches to
compilation \cite{cross_improved_2024,cowtan_css_2024, he_extractors_2025,
poirson_engineering_2025} and verification \cite{rand_gottesman_2020,
chen_verifying_2025, huang_efficient_2025, wu_qecv_2021, sundaram_hoare_2025,
fang_symbolic_2024, qwire, reqwire}.

The categorical semantics of quantum theory builds on the mathematical
formulation of finite-dimensional quantum processes with measurement. This
formulation can be rigorously stated in the language of operator algebras
\cite{Busch2016} in three stages of increasing expressivity:
\begin{enumerate}
  \item \emph{Pure quantum mechanics} via linear maps between finite-dimensional Hilbert spaces;
  \item \emph{Mixed quantum mechanics} via completely-positive maps between matrix algebras;
  \item \emph{Quantum measurements and classical control} via completely-positive maps between finite-dimensional \(C^*\)-algebras.
\end{enumerate}
These increasing stages of expressivity can be restated  by applying
the following functorial constructions to the \dag-compact-closed category, \(\FHilb\), of finite-dimensional Hilbert
spaces and linear maps:
\begin{equation*}
  \begin{tikzcd}[column sep=1.5cm]
    {\text{\emph{pure} QM}}
    &&
    {\text{\emph{mixed} QM}}
    &&
    {\text{QM with \emph{measurements}}}
    \arrow["\text{CPM construction}", "\text{\cite{selinger_dagger_2007}}"', from=1-1, to=1-3]
    \arrow["\text{split \dag-idempotents}", "\text{\cite{selinger_idempotents_2008}}"', from=1-3, to=1-5]
  \end{tikzcd}
\end{equation*}

Finite-dimensional quantum mechanics can therefore
be understood in purely categorical terms, agnostic to the theory of operator algebras.
This point of view is highly amenable to generalisation and
specialisation: simply replace \(\FHilb\) with any other \dag-compact-closed
category, and apply these constructions to add abstract notions of mixing
and measurement.

In this article, we work with \dag-compact-closed categories specifically tailored to the stabiliser fragment. The first semantics is obtained directly by restricting \(\FHilb\) to the stabiliser fragment; whereas the second semantics is given by the symplectic representation of stabiliser maps. Specifically, we work throughout with odd-prime-dimensional quantum systems, which ensures that the symplectic representation is well-behaved \cite[Chapter 9]{neretin_lectures_2011}. Whilst the full symplectic semantics breaks down in even characteristic, we can nevertheless recover the theory of CSS codes in the qubit case \cite{Calderbank,Steane,comfortthesis, kissinger_phase-free_2022}.

\vspace{-4mm}
\subparagraph{Outline.}
Section~\ref{section:categories} recalls the basic theory of symmetric monoidal categories and \dag-compact closed examples.
Section~\ref{section:stabiliser} begins with a review of the stabiliser formalism and its symplectic formulation. We then describe novel denotational semantics for mixed stabiliser processes in section~\ref{section:mixed} and stabiliser processes with measurement in section~\ref{section:measurement}.  Throughout  sections~\ref{section:stabiliser}-\ref{section:measurement}, we develop the standard operator-theoretic semantics given by restriction, and the corresponding symplectic representation, in each section proving their equivalence, i.e.:
\[
  \renewcommand{\arraystretch}{.55}
  \begin{tikzcd}[column sep=3.66cm, row sep=.24cm]
    {\text{pure stabiliser theory}} & {\text{affine Lagrangian relations}} \\ 
    {\text{mixed stabiliser theory}} & {\text{affine coisotropic relations}} \\ 
    {\begin{array}{c} \text{stabiliser theory} \\ \text{with measurements} \end{array}} & {\begin{array}{c} \text{affine relations} \\ \text{with symplectic modality} \end{array}}
    \arrow[
    "\text{Section~\ref{section:stabiliser}}",
      "\text{\cite[Chapter~9]{neretin_lectures_2011}, \cite{comfort_graphical_2021}}"',
      leftrightarrow,
      from=1-1,
      to=1-2
    ]
    \arrow["\text{Section~\ref{section:mixed}}", leftrightarrow, from=2-1, to=2-2]
    \arrow["\text{Section~\ref{section:measurement}}", leftrightarrow, from=3-1, to=3-2]
    \arrow["\text{ CPM construction }"', from=1-1, to=2-1]
    \arrow["\text{ CPM construction }", from=1-2, to=2-2]
    \arrow["\text{ split \dag-idempotents }"', from=2-1, to=3-1]
    \arrow["\text{ split \dag-idempotents }", from=2-2, to=3-2]
  \end{tikzcd}
\]
Finally, in section~\ref{section:language}, we define a simple imperative
language for stabiliser quantum programs, including Pauli measurement, affine classical operations,  and
classically-controlled Pauli operators, equipped with a fully abstract denotational
semantics derived from section~\ref{section:measurement}.
\vspace{-4mm}
\subparagraph{Contributions.}
We present several novel contributions:
\begin{itemize}[nosep]
    \item Corollary~\ref{corollary:cpm_acr}: a symplectic, relational semantics for completely positive stabiliser maps;
    \item Theorem~\ref{theorem:splitting_modality}: we model stabiliser quantum measurements and classical control as affine relations augmented with a modality to represent quantum data;
    \item Propositions~\ref{proposition:causal_total_cpm},\ref{proposition:causal_total_split}: we prove that the physically-realisable stabiliser programs, i.e. \emph{stabiliser quantum channels}, are represented by the total relations;
    \item Theorem~\ref{theorem:full_abstraction}: we interpret a toy programming language in this relational semantics, and prove full abstraction.
\end{itemize}
%We do not merely apply the CPM construction and split \dag-idempotents to obtain our denotational semantics.
We construct finely-tuned, yet equivalent, categories of relations which offer an algebraically simpler and computationaly tractable alternative to the standard operator-theoretic semantics, whilst supporting concrete computational tools native to the stabiliser formalism and stabiliser quantum error correction. For instance, observational equivalence is decidable in polynomial time in our symplectic semantics, but in the operator-theoretic semantics it is exponential.

\section{Preliminaries: symmetric monoidal and \dag-compact closed categories}
\label{section:categories}
In this section, we review the basic theory of symmetric monoidal categories and \dag-compact closed categories.  These will serve as the mathematical structures with which we model theories of \emph{processes}, and allowing us to formally state our denotational semantics. 

One should keep in mind that the category \(\FHilb\) of finite dimensional Hilbert spaces and linear maps has all of the structure which is exposed abstractly in this section; namely, linear maps can be composed in sequence, composed in parallel using the tensor product, they can be reversed using the Hermitian adjoint, and they can also be turned into states via the Choi–Jamio\l kowski isomorphism. However, there is nothing special about \(\FHilb\) in this regard; these operations can be performed in completely different mathematical settings. Therefore, we give the structural, abstract definitions below so that we can formally state the representations which we develop in this paper.

As a technical note, the categories we are working with carry with them certain coherence isomorphisms. %, so we will first give definitions taking these isomorphisms into account. 
However, importantly, these isomorphisms can be turned into equalities, meaning that they  can effectively be ignored for the purpose of calculation. We give the definitions with the isomorphisms omitted, sweeping the technicalities under the rug.

\subsection{Symmetric monoidal categories}

We begin by recalling the definition of a symmetric monoidal category. These describe theories of processes which can be composed in sequence and in parallel, where moreover parallel systems can be freely exchanged with each other:

\begin{definition}
  A \textbf{strict symmetric monoidal category} (SMC) is a category \(\mathcal{C}\)
  equipped with a functor
  \(
    \otimes : \mathcal{C} \times \mathcal{C} \to \mathcal{C},
  \)
  called the \textbf{monoidal product}, and an object \(I \in \mathcal{C}\), called the
  \textbf{monoidal unit}. For all objects \(X,Y,Z\), we impose that:
  \[(X \otimes Y) \otimes Z = X \otimes (Y \otimes Z), \qquad\text{and}\qquad I \otimes X = X = X \otimes I,\]
   in addition to a natural isomorphism, the \textbf{symmetry}:
  \[
    \swap_{X,Y} : X \otimes Y \to Y \otimes X\qquad\text{such that}\qquad \swap_{X,Y} = \swap_{Y,X}^{-1}.
  \]
\end{definition}

Strict  symmetric monoidal categories admit a graphical representation using \textbf{string diagrams}.
Objects are drawn as wires and morphisms as boxes. So for example a morphism \(f:A\to B\) is drawn as \begin{tikzpicture}
	\begin{pgfonlayer}{nodelayer}
		\node [style=none] (42) at (-1, 0.525) {};
		\node [style=none] (43) at (-1, -0.525) {};
		\node [style=none] (44) at (1, 0.525) {};
		\node [style=none] (45) at (1, -0.525) {};
		\node [style=wirelable] (46) at (0.75, 0.25) {${B}$};
		\node [style=wirelable] (47) at (-0.75, 0.25) {${A}$};
		\node [style=none] (48) at (-1, 0) {};
		\node [style=none] (49) at (1, 0) {};
		\node [style=box] (50) at (0, 0) {$f$};
	\end{pgfonlayer}
	\begin{pgfonlayer}{edgelayer}
		\draw [style=background] (44.center)
			 to (45.center)
			 to (43.center)
			 to (42.center)
			 to cycle;
		\draw [style=bbox] (49.center) to (48.center);
	\end{pgfonlayer}
\end{tikzpicture}.
We think of the wires as the \emph{systems}, where the boxes are \emph{processes} acting on the systems on the left to produce systems on the right.

Given any objects \(A,B\)  and \(C\); and maps \(f : A \to B\), \(g : B \to C\), and \(h : C \to D\); the composite \(g \circ f\) is drawn by horizontal horizontal pasting; whereas, the monoidal product \(f\otimes g\) is drawn by vertical stacking: 
\begin{center}
 \begin{tikzpicture}
	\begin{pgfonlayer}{nodelayer}
		\node [style=wirelable] (5) at (1.5, 0.25) {${C}$};
		\node [style=box] (8) at (0.75, 0) {$g$};
		\node [style=none] (9) at (-1.75, 0.65) {};
		\node [style=none] (10) at (-1.75, -0.65) {};
		\node [style=none] (11) at (1.75, 0.65) {};
		\node [style=none] (12) at (1.75, -0.65) {};
		\node [style=wirelable] (13) at (0, 0.25) {${B}$};
		\node [style=wirelable] (14) at (-1.5, 0.25) {${A}$};
		\node [style=none] (15) at (-1.75, 0) {};
		\node [style=none] (16) at (1.75, 0) {};
		\node [style=box] (17) at (-0.75, 0) {$f$};
	\end{pgfonlayer}
	\begin{pgfonlayer}{edgelayer}
		\draw [style=background] (11.center)
			 to (12.center)
			 to (10.center)
			 to (9.center)
			 to cycle;
		\draw [style=bbox] (16.center) to (15.center);
	\end{pgfonlayer}
\end{tikzpicture}
 \quad \quad and \quad\quad\quad
 \begin{tikzpicture}
	\begin{pgfonlayer}{nodelayer}
		\node [style=none] (9) at (-1, 1.15) {};
		\node [style=none] (10) at (-1, -1.15) {};
		\node [style=none] (11) at (1, 1.15) {};
		\node [style=none] (12) at (1, -1.15) {};
		\node [style=wirelable] (13) at (0.75, 0.75) {${B}$};
		\node [style=wirelable] (14) at (-0.75, 0.75) {${A}$};
		\node [style=none] (15) at (-1, 0.5) {};
		\node [style=none] (16) at (1, 0.5) {};
		\node [style=box] (17) at (0, 0.5) {$f$};
		\node [style=wirelable] (18) at (-0.75, -0.75) {${C}$};
		\node [style=box] (19) at (0, -0.5) {$h$};
		\node [style=none] (20) at (-1, -0.5) {};
		\node [style=none] (21) at (1, -0.5) {};
		\node [style=wirelable] (22) at (0.75, -0.75) {${D}$};
	\end{pgfonlayer}
	\begin{pgfonlayer}{edgelayer}
		\draw [style=background] (11.center)
			 to (12.center)
			 to (10.center)
			 to (9.center)
			 to cycle;
		\draw [style=bbox] (16.center) to (15.center);
		\draw [style=bbox] (21.center) to (20.center);
	\end{pgfonlayer}
\end{tikzpicture}.
\end{center}
The symmetry \(\swap_{A,B}\) is drawn by crossing wires: \begin{tikzpicture}
	\begin{pgfonlayer}{nodelayer}
		\node [style=none] (7) at (-0.375, 0.25) {};
		\node [style=none] (9) at (0.625, 0.25) {};
		\node [style=none] (10) at (-0.375, -0.25) {};
		\node [style=none] (11) at (0.625, -0.25) {};
		\node [style=none] (18) at (0.625, 0.375) {};
		\node [style=none] (19) at (0.625, -0.375) {};
		\node [style=none] (20) at (-0.375, 0.375) {};
		\node [style=none] (21) at (-0.375, -0.375) {};
		\node [style=wirelable] (24) at (-0.625, 0.25) {${A}$};
		\node [style=wirelable] (25) at (-0.625, -0.25) {${B}$};
		\node [style=wirelable] (26) at (0.875, 0.25) {${A}$};
		\node [style=wirelable] (27) at (0.875, -0.25) {${B}$};
	\end{pgfonlayer}
	\begin{pgfonlayer}{edgelayer}
		\draw [style=background] (21.center)
			 to (19.center)
			 to (18.center)
			 to (20.center)
			 to cycle;
		\draw [in=-180, out=0, looseness=1.50] (7.center) to (11.center);
		\draw [in=0, out=-180, looseness=1.50] (9.center) to (10.center);
	\end{pgfonlayer}
\end{tikzpicture}.
The coherence conditions are reflected by the following isotopies, so that for all objects \(A,B,C\), and \(D\), and maps \(f:A\to B\) and \(g:C\to D\):

\begin{center}
 \begin{tikzpicture}
	\begin{pgfonlayer}{nodelayer}
		\node [style=none] (0) at (0.5, 0.75) {};
		\node [style=none] (1) at (0.5, -0.75) {};
		\node [style=none] (2) at (3, 0.75) {};
		\node [style=none] (3) at (3, -0.75) {};
		\node [style=none] (4) at (0.5, 0.25) {};
		\node [style=none] (5) at (1.75, 0.25) {};
		\node [style=none] (6) at (0.5, -0.25) {};
		\node [style=none] (7) at (1.75, -0.25) {};
		\node [style=none] (8) at (3, -0.25) {};
		\node [style=none] (9) at (3, 0.25) {};
		\node [style=none] (10) at (3.75, 0) {$=$};
		\node [style=none] (11) at (4.5, 0.25) {};
		\node [style=none] (12) at (6, 0.25) {};
		\node [style=none] (13) at (4.5, -0.25) {};
		\node [style=none] (14) at (6, -0.25) {};
		\node [style=none] (15) at (11.25, 0.125) {$=$};
		\node [style=box] (16) at (8.25, 0.375) {$f$};
		\node [style=none] (17) at (8.875, 0.375) {};
		\node [style=none] (18) at (7.25, 0.375) {};
		\node [style=none] (19) at (9.875, -0.375) {};
		\node [style=none] (20) at (10.5, -0.375) {};
		\node [style=none] (21) at (9.875, 0.375) {};
		\node [style=none] (22) at (10.5, 0.375) {};
		\node [style=none] (23) at (7.25, -0.375) {};
		\node [style=none] (24) at (8.875, -0.375) {};
		\node [style=none] (25) at (4.5, 0.75) {};
		\node [style=none] (26) at (4.5, -0.75) {};
		\node [style=none] (27) at (6, 0.75) {};
		\node [style=none] (28) at (6, -0.75) {};
		\node [style=none] (29) at (7.25, 1) {};
		\node [style=none] (30) at (7.25, -1) {};
		\node [style=none] (31) at (10.5, 1) {};
		\node [style=none] (32) at (10.5, -1) {};
		\node [style=box] (33) at (14.25, -0.375) {$f$};
		\node [style=none] (34) at (13.625, -0.375) {};
		\node [style=none] (35) at (15.25, -0.375) {};
		\node [style=none] (36) at (12.625, 0.375) {};
		\node [style=none] (37) at (12, 0.375) {};
		\node [style=none] (38) at (12.625, -0.375) {};
		\node [style=none] (39) at (12, -0.375) {};
		\node [style=none] (40) at (15.25, 0.375) {};
		\node [style=none] (41) at (13.625, 0.375) {};
		\node [style=none] (42) at (15.25, -1) {};
		\node [style=none] (43) at (15.25, 1) {};
		\node [style=none] (44) at (12, -1) {};
		\node [style=none] (45) at (12, 1) {};
		\node [style=wirelable] (46) at (7.5, 0.625) {${A}$};
		\node [style=wirelable] (47) at (10.25, -0.625) {${B}$};
		\node [style=wirelable] (48) at (7.5, -0.625) {${C}$};
		\node [style=wirelable] (49) at (10.25, 0.625) {${C}$};
		\node [style=wirelable] (50) at (12.25, 0.625) {${A}$};
		\node [style=wirelable] (51) at (15, -0.625) {${B}$};
		\node [style=wirelable] (52) at (12.25, -0.625) {${C}$};
		\node [style=wirelable] (53) at (15, 0.625) {${C}$};
		\node [style=wirelable] (65) at (0.75, 0.5) {${A}$};
		\node [style=wirelable] (66) at (0.75, -0.5) {${B}$};
		\node [style=wirelable] (67) at (2.75, 0.5) {${A}$};
		\node [style=wirelable] (68) at (2.75, -0.5) {${B}$};
		\node [style=wirelable] (69) at (5.25, 0.5) {${A}$};
		\node [style=wirelable] (70) at (5.25, -0.5) {${B}$};
		\node [style=wirelable] (71) at (1.75, -0.5) {${A}$};
		\node [style=wirelable] (72) at (1.75, 0.5) {${B}$};
		\node [style=none] (73) at (6.5, 0) {,};
		\node [style=none] (74) at (16.25, 0) {and};
		\node [style=box] (75) at (18.25, 0.375) {$f$};
		\node [style=none] (76) at (17.25, -0.375) {};
		\node [style=none] (77) at (20.25, -0.375) {};
		\node [style=none] (79) at (17.25, 0.375) {};
		\node [style=none] (82) at (20.25, 0.375) {};
		\node [style=none] (84) at (20.25, -1) {};
		\node [style=none] (85) at (20.25, 1) {};
		\node [style=none] (86) at (17.25, -1) {};
		\node [style=none] (87) at (17.25, 1) {};
		\node [style=wirelable] (88) at (17.5, 0.625) {$A$};
		\node [style=wirelable] (89) at (20, -0.625) {$D$};
		\node [style=wirelable] (90) at (17.5, -0.625) {$C$};
		\node [style=wirelable] (91) at (20, 0.625) {$B$};
		\node [style=box] (92) at (19.25, -0.375) {$g$};
		\node [style=none] (93) at (21, 0.125) {$=$};
		\node [style=box] (94) at (23.75, 0.375) {$f$};
		\node [style=none] (95) at (21.75, -0.375) {};
		\node [style=none] (96) at (24.75, -0.375) {};
		\node [style=none] (97) at (21.75, 0.375) {};
		\node [style=none] (98) at (24.75, 0.375) {};
		\node [style=none] (99) at (24.75, -1) {};
		\node [style=none] (100) at (24.75, 1) {};
		\node [style=none] (101) at (21.75, -1) {};
		\node [style=none] (102) at (21.75, 1) {};
		\node [style=wirelable] (103) at (22, 0.625) {$A$};
		\node [style=wirelable] (104) at (24.5, -0.625) {$D$};
		\node [style=wirelable] (105) at (22, -0.625) {$C$};
		\node [style=wirelable] (106) at (24.5, 0.625) {$B$};
		\node [style=box] (107) at (22.75, -0.375) {$g$};
	\end{pgfonlayer}
	\begin{pgfonlayer}{edgelayer}
		\draw [style=background] (30.center)
			 to [in=270, out=90] (29.center)
			 to [in=180, out=0] (31.center)
			 to [in=90, out=-90] (32.center)
			 to [in=360, out=180] cycle;
		\draw [style=background] (26.center)
			 to (25.center)
			 to (27.center)
			 to (28.center)
			 to cycle;
		\draw [style=background] (3.center)
			 to (1.center)
			 to (0.center)
			 to (2.center)
			 to cycle;
		\draw (4.center)
			 to [in=-180, out=0, looseness=1.50] (7.center)
			 to [in=180, out=0, looseness=1.50] (9.center);
		\draw (8.center)
			 to [in=360, out=180, looseness=1.50] (5.center)
			 to [in=0, out=-180, looseness=1.50] (6.center);
		\draw (11.center) to (12.center);
		\draw (13.center) to (14.center);
		\draw (20.center)
			 to (19.center)
			 to [in=0, out=-180, looseness=1.25] (17.center)
			 to (18.center);
		\draw (22.center)
			 to (21.center)
			 to [in=0, out=-180, looseness=1.25] (24.center)
			 to (23.center);
		\draw [style=white] (3.center) to (1.center);
		\draw [style=background] (28.center) to (26.center);
		\draw [style=background] (32.center) to (30.center);
		\draw [style=background] (43.center)
			 to [in=90, out=-90] (42.center)
			 to (44.center)
			 to [in=-90, out=90] (45.center)
			 to [in=180, out=0] cycle;
		\draw (35.center)
			 to (34.center)
			 to [in=0, out=180, looseness=1.25] (36.center)
			 to (37.center);
		\draw (40.center)
			 to (41.center)
			 to [in=0, out=180, looseness=1.25] (38.center)
			 to (39.center);
		\draw [style=background] (45.center) to (43.center);
		\draw [style=background] (85.center)
			 to [in=90, out=-90] (84.center)
			 to (86.center)
			 to [in=-90, out=90] (87.center)
			 to cycle;
		\draw [style=background, in=90, out=-90] (85.center) to (84.center);
		\draw [style=background] (84.center) to (86.center);
		\draw [style=background, in=-90, out=90] (86.center) to (87.center);
		\draw [style=background, in=180, out=0] (87.center) to (85.center);
		\draw (77.center) to (76.center);
		\draw [style=background] (87.center) to (85.center);
		\draw (79.center) to (82.center);
		\draw [style=background] (100.center)
			 to [in=90, out=-90] (99.center)
			 to (101.center)
			 to [in=-90, out=90] (102.center)
			 to cycle;
		\draw [style=background, in=90, out=-90] (100.center) to (99.center);
		\draw [style=background] (99.center) to (101.center);
		\draw [style=background, in=-90, out=90] (101.center) to (102.center);
		\draw [style=background, in=180, out=0] (102.center) to (100.center);
		\draw (96.center) to (95.center);
		\draw [style=background] (102.center) to (100.center);
		\draw (97.center) to (98.center);
	\end{pgfonlayer}
\end{tikzpicture}.
 \end{center}

\begin{example}
    Given a semiring \(R\), the category \(\sf{Mat}_R\) has objects given by the natural numbers and morphisms given by matrices over \(R\). It is a strict SMC where the monoidal unit is \(0\), the monoidal product is given by the Kronecker product, and the symmetries are given by permutation matrices.
\end{example}

Non-strict  SMCs are defined by replacing the equalities above with coherent natural isomorphisms; where symmetric monoidal functors preserve the relevant structure coherently.  A symmetric monoidal functor is a functor which coherently preserves this structure.
Because every SMC is symmetric monoidally equivalent to a strict one, we abuse notation, and ignore these isomorphisms throughout the paper.

The following example is the first approximation of a quantum semantics which we work with:

\begin{example}
The symmetric monoidal category, \(\FHilb\), of \textbf{finite-dimensional Hilbert spaces and linear maps} has:
\begin{itemize}
  \item \textbf{objects:} finite-dimensional Hilbert spaces \(\mathcal H\), i.e.\ finite-dimensional complex vector spaces equipped with an inner product \(\langle - | = \rangle\) antilinear in the first argument and linear in the second argument;
  \item \textbf{morphisms, identities and composition} given by linear maps, the identity linear map and the composition of linear maps;
  \item \textbf{monoidal product:} the bilinear tensor product \(\mathcal H \otimes \mathcal K\);
  \item \textbf{monoidal unit:} \(\C\);
  \item \textbf{symmetry:} for basis vectors \(\ket{h} \in \mathcal H\), \(\ket{k} \in \mathcal K\),
  \(
     \swap_{\mathcal H,\mathcal K}(\ket{h} \otimes \ket{k})
     \coloneqq
     \ket{k} \otimes \ket{h}
  \).
  
  Note that this not depend on the chosen basis.
\end{itemize}
\end{example}
Whereas, the next example serves as a semantics for nondeterministic processes:
\begin{example}
The symmetric monoidal category, \(\FinRel\), of \textbf{finite sets and relations}:
\begin{itemize}
  \item \textbf{objects:} finite sets;
  \item \textbf{morphisms:} relations \(R \subseteq A \times B\);
  \item \textbf{identities:} diagonal relations \(1_A \coloneqq \{(a,a) \mid a \in A\}\);
  \item \textbf{composition:} for \(R \subseteq A \times B\) and \(S \subseteq B \times C\), the relational composite is given by:
  \[
     S \circ R \coloneqq \{(a,c) \mid \exists b \in B \text{ with } (a,b)\in R \text{ and } (b,c)\in S\} \subseteq A\times C;
  \]
  \item \textbf{monoidal product:} Cartesian product \(A \times B\);
  \item \textbf{monoidal unit:} the singleton set \(\{\bullet\}\);
  \item \textbf{symmetry:}
  \(
     \swap_{A,B} \coloneqq \{((a,b),(b,a)) \mid a \in A, b \in B\} .
  \)
\end{itemize}
\end{example}

Given a symmetric monoidal category \(\mathsf{C}\), the homset
\(\mathsf{C}(I,I)\) forms a monoid called the \textbf{scalars} of
\(\mathsf{C}\). Let \(\Proj(\mathsf{C})\) denote the symmetric monoidal
category obtained by quotienting by the subgroup of invertible scalars.
\begin{example}
  In \(\FHilb\), the group of invertible scalars is the multiplicative subgroup
  \(\C^\times\subseteq \C\), so that \(\Proj(\FHilb)\) has finite-dimensional Hilbert spaces as objects and \emph{projectivised linear maps} as morphisms,
  i.e. equivalence classes of linear maps with respect to the relation
  \begin{equation}
    \phi \sim \psi \qq{if and only if} \exists \lambda \in \C^\times:\ \phi = \lambda \cdot \psi.
  \end{equation}
  In \(\FinRel\) the monoid of scalars is the trivial group, so
  \(\Proj(\FinRel) = \FinRel\).
\end{example}

To the reader who is uncomfortable with the fact that we have ignored the coherence isomorphisms, both examples can be made into the following concrete strict SMCs:
\begin{lemma}
\(\FHilb\) is symmetric monoidally equivalent to matrices over the complex numbers.
\end{lemma}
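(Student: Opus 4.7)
The plan is to construct an explicit symmetric monoidal equivalence by sending a matrix to the linear map it represents in the standard basis, and then to invoke Choose-an-orthonormal-basis to witness essential surjectivity. Concretely, define a functor $F : \mathsf{Mat}_\C \to \FHilb$ by $F(n) \coloneqq \C^n$ equipped with the standard Hermitian inner product, and by sending a matrix $M : m \to n$ to the unique linear map $\C^m \to \C^n$ whose matrix in the standard bases is $M$. Functoriality amounts to observing that matrix multiplication computes composition of linear maps, and that the identity matrix corresponds to the identity linear map.

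Next I would equip $F$ with the structure of a symmetric monoidal functor. Since $\mathsf{Mat}_\C$ is strict and $\FHilb$ is not, $F$ cannot be strict: instead, I supply the canonical coherence isomorphisms
\[
  \phi_{m,n} : \C^{mn} \xrightarrow{\;\sim\;} \C^m \otimes \C^n, \qquad \phi_0 : \C \xrightarrow{\;\sim\;} I_{\FHilb} = \C,
\]
which identify the standard basis vector $e_{i\cdot n + j}$ of $\C^{mn}$ with $e_i \otimes e_j$. The key compatibility to check is that, under these identifications, the Kronecker product of matrices corresponds to the tensor product of linear maps, and that the block permutation matrix implementing $\swap$ in $\mathsf{Mat}_\C$ corresponds to $\swap_{\C^m,\C^n}$ in $\FHilb$; both verifications are routine basis-level computations. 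The hexagon and pentagon coherence axioms then reduce to associativity and identity properties of the Kronecker product, which hold on the nose.

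To conclude equivalence, I would verify that $F$ is fully faithful and essentially surjective. Full faithfulness is exactly the statement that a linear map $\C^m \to \C^n$ is uniquely determined by, and can be freely specified by, its matrix in the standard basis. For essential surjectivity, any finite-dimensional Hilbert space $\mathcal{H}$ of dimension $n$ admits an orthonormal basis $\{h_1,\dots,h_n\}$, and the assignment $e_i \mapsto h_i$ extends to a unitary isomorphism $F(n) = \C^n \xrightarrow{\;\sim\;} \mathcal{H}$ in $\FHilb$.

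The main technical point—rather than any conceptual obstacle—is handling the non-strictness of the tensor product in $\FHilb$ cleanly: one must check that the coherence isomorphisms $\phi_{m,n}$ assemble into a monoidal natural transformation satisfying the hexagon and pentagon axioms, and are compatible with the symmetry. These checks are bookkeeping on the standard bases, and reflect precisely the "sweeping under the rug" of coherence isomorphisms mentioned above the lemma, so that once the equivalence is established one may subsequently treat $\FHilb$ as if it were strict.
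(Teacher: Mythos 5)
Your proof is correct and is the standard argument: the paper states this lemma without proof, and your explicit functor $n \mapsto \C^n$ (fully faithful by the matrix--linear-map correspondence, essentially surjective by choosing an orthonormal basis, strong symmetric monoidal via the Kronecker/tensor coherence isomorphisms) is exactly the construction the paper implicitly relies on. The only minor imprecision is terminological: the conditions you need to verify for $\phi_{m,n}$ are the associativity and unit compatibility axioms for a monoidal \emph{functor} (and compatibility with the symmetries), rather than the pentagon and hexagon axioms of the categories themselves, but the verifications are the routine basis-level computations you describe.
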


\begin{lemma}
\(\FinRel\) is symmetric monoidally equivalent to matrices over the booleans, 
\end{lemma}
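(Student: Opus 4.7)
The plan is to construct an explicit symmetric monoidal functor $F \colon \mathsf{Mat}_\B \to \FinRel$ and show that it is fully faithful and essentially surjective. On objects, $F$ sends each natural number $n$ to the finite set $\underline{n} \coloneqq \{1, \ldots, n\}$ (with $F(0) = \emptyset$). On morphisms, $F$ sends a boolean matrix $M \in \mathsf{Mat}_\B(m,n)$ to the relation $F(M) \coloneqq \{(i,j) \in \underline{m} \times \underline{n} \mid M_{ji} = 1\}$. This prescription is already a bijection between boolean $n \times m$ matrices and relations from $\underline{m}$ to $\underline{n}$, since a relation is nothing other than its characteristic function valued in $\B$.

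First I would verify functoriality. The identity boolean matrix maps to the diagonal relation because $\delta_{ji} = 1$ iff $i = j$. For composition, boolean matrix multiplication is defined by $(NM)_{ki} = \bigvee_j (N_{kj} \wedge M_{ji})$, so $(i,k) \in F(NM)$ iff there exists $j$ with $(i,j) \in F(M)$ and $(j,k) \in F(N)$, which is exactly the condition for $(i,k) \in F(N) \circ F(M)$.

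Next I would check the symmetric monoidal structure. Using the lexicographic bijection $\underline{m} \times \underline{n} \cong \underline{mn}$, the Kronecker product of boolean matrices realises the monoidal product of $\FinRel$; the unit object $0 \in \mathsf{Mat}_\B$ is sent to the singleton $\{\bullet\}$; and the permutation matrices implementing swap at the matrix level correspond precisely to the symmetries $\swap$ of $\FinRel$. These are all index-level bookkeeping checks, entirely analogous to the $\FHilb$ case treated in the preceding lemma.

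Finally, $F$ is essentially surjective because every finite set $X$ is isomorphic in $\FinRel$ to $\underline{|X|}$ via any choice of bijection (bijections being invertible relations), and $F$ is fully faithful by the bijection between boolean matrices and relations noted above. A fully faithful, essentially surjective symmetric monoidal functor is a symmetric monoidal equivalence. The only step requiring genuine care is checking compatibility with the monoidal product under the identification $\underline{m} \times \underline{n} \cong \underline{mn}$; this follows from a direct computation on indices and is essentially the same argument needed to establish the preceding lemma for $\FHilb$.
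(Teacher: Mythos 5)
The paper states this lemma without proof, treating it as standard; your argument is exactly the expected one (the bijection between Boolean matrices and relations, with join-of-meets matrix multiplication matching relational composition and the Kronecker product matching the Cartesian product), and it is correct in substance. One small slip: the monoidal unit of \(\mathsf{Mat}_\B\) under the Kronecker product is \(1\), not \(0\) (the paper's own definition of \(\mathsf{Mat}_R\) contains the same typo), and by your own object assignment \(F(0)=\emptyset\), which is not the monoidal unit \(\{\bullet\}\) of \(\FinRel\); you should instead check \(F(1)=\underline{1}\cong\{\bullet\}\). This is a one-line fix and does not affect the rest of the argument.
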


\subsection{\dag-symmetric monoidal categories}

In the previous subsection, we reviewed the theory of symmetric monoidal categories: which model processes which can be composed in sequence, in parallel, and where systems can be freely exchanged with each other. In this section, we add additional structure which allows for processes to be reversed:

\begin{definition}
  A strict \textbf{\dag-symmetric monoidal category} (\dag-SMC) is a symmetric
  monoidal category equipped with a strict SMC functor \( (-)^\dag : \mathsf C^{\op}
  \to \mathsf C \) that is identity on objects, and for any map \(f\), \( (f^\dag)^\dag =
  f \).
\end{definition}

\begin{definition}
  A map \(f:A\to B\) in \(\mathsf{C}\) is:
  \begin{itemize}
      \item an \textbf{isometry} if \(f^\dag \circ f = 1_A\);
      \item a \textbf{coisometry} if \(f \circ f^\dag = 1_B\);
      \item a \textbf{unitary} if \(f^\dag = f^{-1}\).
  \end{itemize}
\end{definition}

Again the nonstrict version, is defined by replacing some of the equalities by coherent natural isomorphism, and the notion of symmetric monoidal functor preserves this structure coherently.
Both of our running examples are \dag-symmetric monoidal categories:

\begin{example}
 \(\FHilb\) is a \dag-symmetric monoidal  category with respect to the Hermitian adjoint.
Concretely give a linear map \(f : \mathcal H \to \mathcal K\), the Hermitian adjoint \(f^\dag : \mathcal K \to \mathcal H\) is the unique map such that for all \(x \in \mathcal H\) and \(y \in \mathcal K\),
\(
\langle f x | y \rangle = \langle x | f^\dag y \rangle
\).
The isometries, coisometries and unitaries in the \dag-symmetric monoidal sense, correspond with the usual notions in linear algebra.
\end{example}

\begin{example}
 \(\FinRel\) is a \dag-symmetric monoidal category with respect to the relational converse,
\(
R^\dag \coloneqq \{(b,a) \mid (a,b) \in R\}
\).
The unitaries are isomorphisms.
\end{example}

% \begin{definition}
% A functor between \dag-symmetric monoidal categories is a symmetric monoidal functor satisfying \(
% F(f^\dag) = F(f)^\dag\).
% A \dag-symmetric monoidal equivalence is a symmetric monoidal equivalence preserving and reflecting the dagger.
% \end{definition}

% \begin{example}
% There is an equivalence of \dag-symmetric monoidal categories
% \(
% \FHilb \simeq \mathsf{Mat}_{\C},
% \)
% where the Hermitian adjoint corresponds to the complex conjugate transpose.
% \end{example}

% \begin{example}
% There is an equivalence of \dag-symmetric monoidal categories
% \(
% \FinRel \simeq \mathsf{Mat}_{\B},
% \)
% where relational converse corresponds to transpose.
% \end{example}

\subsection{\dag-compact closed categories}

In the previous subsection, we reviewed the theory of \dag-symmetric monoidal categories, modelling processes which can be composed in parallel, in sequence, where systems can be freely exchanged with each other, and where processes can be reversed. In this section, we ask for additional structure which accommodates a notion of feedback, so that the outputs of processes can be turned into inputs and vice-versa:

\begin{definition}
A strict \dag-symmetric monoidal category is \textbf{\dag-compact closed} (\dag-CCC) when every object \(A\) has a chosen \textbf{dual} object \(A^*\), extending to a symmetric monoidal functor \((-)^*:\mathsf{C}^\op\to \mathsf{C}\);
where moreover, we impose that for all objects \(A\), there are morphisms \(
\eta_A : I \to A \otimes A^*\) and \(
\epsilon_A : A^* \otimes A \to I\) satisfying
\begin{itemize}
  \item the \textbf{snake equations},
  \[
%  u^R_A
%  \circ
  (1_{A^*} \otimes \epsilon_A)
  \circ
%  \alpha_{A,A^*,A}
%  \circ
  (\eta_A \otimes 1_{A^*})
%  \circ
%  (u^L_A)^{-1}
  =
  1_{A^*},
\qquad\text{and}\qquad
%  u^L_{A^*}
%  \circ
  (\epsilon_A \otimes 1_{A})
%  \circ
%  \alpha^{-1}_{A^*,A,A^*}
  \circ
  (1_{A} \otimes \eta_A)
%  \circ
%  (u^R_{A^*})^{-1}
  =
  1_{A};\]

  \item and \textbf{compatibility with the dagger},
  \(
  \epsilon_A^\dag
  =
  \swap_{A,A^*}
  \circ
  \eta_A
  \).
\end{itemize}

In string diagrams, that is when we can bend wires as follows:

% A \textbf{strict \dag-compact closed category} is a \dag-compact closed category whose underlying symmetric monoidal category is strict; where moreover for all objects \(A\) and \(B\), we have that \((A\otimes B)^* = B^*\otimes A^*\) and \(I^* =I\).
% In string diagrams, we represent \(\eta_A\) by bending the wire into a cup and \(\epsilon_A\) into cap, where the coherence conditions are reduced to the additional isotopies:

\begin{center}
    \begin{tikzpicture}
	\begin{pgfonlayer}{nodelayer}
		\node [style=none] (0) at (20.75, 0.875) {};
		\node [style=none] (1) at (20.75, -0.875) {};
		\node [style=none] (2) at (22, 0.875) {};
		\node [style=none] (3) at (22, -0.875) {};
		\node [style=none] (8) at (9.75, 1) {};
		\node [style=none] (9) at (9.75, -1) {};
		\node [style=none] (10) at (12.75, 1) {};
		\node [style=none] (11) at (12.75, -1) {};
		\node [style=none] (12) at (6.5, 0.5) {};
		\node [style=none] (13) at (6.5, -0.5) {};
		\node [style=none] (14) at (8.25, 0.5) {};
		\node [style=none] (15) at (8.25, -0.5) {};
		\node [style=none] (16) at (2, 1) {};
		\node [style=none] (17) at (2, -1) {};
		\node [style=none] (18) at (5, 1) {};
		\node [style=none] (19) at (5, -1) {};
		\node [style=none] (20) at (6.5, 0) {};
		\node [style=none] (21) at (8.25, 0) {};
		\node [style=wirelable] (22) at (7.375, 0.225) {${A}$};
		\node [style=none] (23) at (5.75, 0) {$=$};
		\node [style=none] (24) at (2.5, 0) {};
		\node [style=none] (25) at (3.25, -0.75) {};
		\node [style=none] (26) at (3.75, -0.75) {};
		\node [style=none] (27) at (3.75, 0) {};
		\node [style=none] (28) at (3.25, 0) {};
		\node [style=none] (29) at (3.25, 0.75) {};
		\node [style=none] (30) at (3.75, 0.75) {};
		\node [style=none] (31) at (4.5, 0) {};
		\node [style=none] (32) at (2, 0) {};
		\node [style=none] (33) at (5, 0) {};
		\node [style=none] (34) at (12.25, 0) {};
		\node [style=none] (35) at (11.5, -0.75) {};
		\node [style=none] (36) at (11, -0.75) {};
		\node [style=none] (37) at (11, 0) {};
		\node [style=none] (38) at (11.5, 0) {};
		\node [style=none] (39) at (11.5, 0.75) {};
		\node [style=none] (40) at (11, 0.75) {};
		\node [style=none] (41) at (10.25, 0) {};
		\node [style=none] (42) at (12.75, 0) {};
		\node [style=none] (43) at (9.75, 0) {};
		\node [style=wirelable] (44) at (4.75, 0.225) {${A}$};
		\node [style=wirelable] (45) at (2.25, 0.225) {${A}$};
		\node [style=none] (46) at (14.25, 0.5) {};
		\node [style=none] (47) at (14.25, -0.5) {};
		\node [style=none] (48) at (16, 0.5) {};
		\node [style=none] (49) at (16, -0.5) {};
		\node [style=none] (50) at (14.25, 0) {};
		\node [style=none] (51) at (16, 0) {};
		\node [style=wirelable] (52) at (15.125, 0.225) {${A^*}$};
		\node [style=none] (53) at (13.5, 0) {$=$};
		\node [style=none] (54) at (18.25, 0) {{and}};
		\node [style=none] (55) at (8.75, -0.125) {,};
		\node [style=none] (64) at (20.75, -0.375) {};
		\node [style=none] (65) at (21.25, -0.375) {};
		\node [style=none] (66) at (21.25, 0.375) {};
		\node [style=none] (67) at (20.75, 0.375) {};
		\node [style=wirelable] (68) at (21, 0.65) {${A}^*$};
		\node [style=wirelable] (69) at (21, -0.65) {${A}$};
		\node [style=none] (70) at (20.75, -0.375) {};
		\node [style=none] (71) at (20.75, 0.375) {};
		\node [style=none] (73) at (25.5, -0.125) {.};
		\node [style=none] (78) at (25, 0.875) {};
		\node [style=none] (79) at (25, -0.875) {};
		\node [style=none] (80) at (23.25, 0.875) {};
		\node [style=none] (81) at (23.25, -0.875) {};
		\node [style=none] (82) at (24, -0.375) {};
		\node [style=none] (83) at (24, 0.375) {};
		\node [style=wirelable] (84) at (24, -0.65) {${A}^*$};
		\node [style=wirelable] (85) at (24, 0.65) {${A}$};
		\node [style=none] (86) at (25, 0.375) {};
		\node [style=none] (87) at (24, -0.375) {};
		\node [style=none] (88) at (24, 0.375) {};
		\node [style=none] (89) at (25, -0.375) {};
		\node [style=none] (98) at (22.5, 0) {$=$};
		\node [style=none] (99) at (16.5, -0.125) {,};
		\node [style=wirelable] (100) at (12.5, 0.225) {${A}^*$};
		\node [style=wirelable] (101) at (10, 0.225) {${A}^*$};
		\node [style=none] (102) at (22.25, 0.875) {\(\dag\)};
	\end{pgfonlayer}
	\begin{pgfonlayer}{edgelayer}
		\draw [style=background] (0.center)
			 to (2.center)
			 to (3.center)
			 to (1.center)
			 to cycle;
		\draw [style=background] (8.center)
			 to (10.center)
			 to [in=90, out=-90] (11.center)
			 to (9.center)
			 to cycle;
		\draw [style=background] (12.center)
			 to (14.center)
			 to (15.center)
			 to (13.center)
			 to cycle;
		\draw [style=background] (19.center)
			 to (17.center)
			 to (16.center)
			 to (18.center)
			 to cycle;
		\draw [style=background] (47.center)
			 to (46.center)
			 to (48.center)
			 to (49.center)
			 to cycle;
		\draw (21.center) to (20.center);
		\draw (32.center)
			 to (24.center)
			 to [in=180, out=0] (25.center)
			 to [in=180, out=0] (26.center)
			 to [bend right=90, looseness=2.00] (27.center)
			 to [in=0, out=180] (28.center)
			 to [bend right=270, looseness=2.00] (29.center)
			 to [in=180, out=0] (30.center)
			 to [in=-180, out=0] (31.center)
			 to (33.center);
		\draw (43.center)
			 to (41.center)
			 to [in=180, out=0] (40.center)
			 to [in=180, out=360] (39.center)
			 to [bend left=90, looseness=1.75] (38.center)
			 to [in=360, out=180] (37.center)
			 to [bend right=90, looseness=1.75] (36.center)
			 to [in=180, out=360] (35.center)
			 to [in=180, out=0] (34.center)
			 to (42.center);
		\draw (51.center) to (50.center);
		\draw (67.center)
			 to [in=180, out=360] (66.center)
			 to [bend left=90, looseness=1.75] (65.center)
			 to [in=360, out=180] (64.center);
		\draw [style=background] (79.center)
			 to (78.center)
			 to (80.center)
			 to (81.center)
			 to cycle;
		\draw [style=background] (78.center) to (80.center);
		\draw [style=background] (80.center) to (81.center);
		\draw [style=background] (81.center) to (79.center);
		\draw [style=background] (79.center) to (78.center);
		\draw (89.center)
			 to [in=0, out=180, looseness=1.25] (88.center)
			 to (83.center)
			 to [bend right=90, looseness=1.75] (82.center)
			 to (87.center)
			 to [in=180, out=0, looseness=1.25] (86.center);
	\end{pgfonlayer}
\end{tikzpicture}
\end{center}
\end{definition}

Note that a \dag-symmetric monoidal functor preserves \dag-compact closed structure.

\begin{example}
In \(\FHilb\), the dual is given on objects \(\mathcal{H}\) by the internal hom \(\mathcal{H}^* \coloneqq [\mathcal{H}, \C]\) of linear functions from \(\mathcal{H}\) into \(\C\).
Given an orthonormal basis \(\{\ket{j}\}_{j=0}^{\dim(\mathcal{H})-1}\) of \(\mathcal H\), the inner product induces an orthonormal basis \(\{\bra{j}\}_{j=0}^{\dim(\mathcal{H})-1}\) of \(\mathcal H^*\). The cup and cap  are given by:
\[
\eta_{\mathcal H}(1) \coloneqq \sum_{j =0}^{\dim(\mathcal{H})-1} \!\!\! \ket{j}\otimes \bra{j}: \C\to  \mathcal{H} \otimes \mathcal{H}^*
\qand
\epsilon_{\mathcal{H}}(\bra{j}\otimes \ket{k}) \coloneqq \bra{j}\ket{k}:\mathcal{H}^* \otimes \mathcal{H} \to \C .
\]
Note this doesn't depend on the choice of basis. By interpreting \(\mathcal{H}\) as the space of pure quantum states, the normalised cup \(\eta_\mathcal{H}/\sqrt{\dim(\mathcal{H})}\) is interpreted as the \emph{maximally entangled state}. Therefore, the structure of \dag-compact closure is the abstract property needed to possess a notion of the Choi–Jamio\l kowski isomorphism.

In mixed state quantum mechanics, the set \(\mathcal{B}(\mathcal{H})\) of bounded linear maps from \(\mathcal{H}\) to \(\mathcal{H}\) can be regarded as a \(C^*\)-algebra, called a \emph{matrix algebra}, which in  finite dimensions is represented by the internal hom \(\mathcal{B}(\mathcal{H})\cong \mathcal{H}\otimes \mathcal{H}^*\). In this setting, the normalised cup \(\eta_\mathcal{H}/\sqrt{\dim(\mathcal{H})}\) is interpreted as the \emph{maximally mixed state} \(\rho_{\mathsf{max}}\).  On the other hand, composing the cap with the symmetry, we have the \emph{quantum trace}:
\[\Tr_{\mathcal{H}}(\ket{j}\otimes \bra{k}) \coloneqq \bra{k}\ket{j}:\mathcal{H} \otimes \mathcal{H}^* \to \C \]
In particular, given a linear map \(T:\mathcal{H}\to \mathcal{H}\) regarded as an element \(\lfloor T \rfloor: \C\to  \mathcal{H}\otimes \mathcal{H}^*\), the quantum trace coincides with the usual linear-algebraic trace \(\Tr_{\mathcal{H}}\circ \lfloor T \rfloor = \Tr(T)\).

%In fact there is a \dag-compact closed equivalence \(\FHilb \cong \mathsf{Mat}_\C\), where \(\mathsf{Mat}_{\C}\) is regarded as a strict \dag-compact closed category. Therefore, even though \(\FHilb\) possesses these annoying coherence isomorphisms, they can effectively be ignored by choosing a basis and working instead with matrices.
\end{example}
There is an analogous correspondence for relations:
\begin{example}
In \(\FinRel\), the dual is trivial \(A^* = A\).
The cup and cap are given by the diagonal relations:
\[
\eta_A \coloneqq \{(\bullet,(a,a)) \mid a \in A\}: \{\bullet\}\to  A\times A,
\ \ \text{and} \ \ 
\epsilon_A \coloneqq \{((a,a),\bullet) \mid a \in A\}:  A\times A\to \{\bullet\}.
\]
%In fact there is a \dag-compact closed equivalence \(\FinRel \cong \mathsf{Mat}_\B\), where \(\mathsf{Mat}_{\B}\) is regarded as a strict \dag-compact closed category.
\end{example}

% The following example is of central importance in quantum mechanics:

% \begin{example}
%     . Because \(\FHilb\) is compact closed, this means that the internal hom \([\mathcal{H}, \mathcal{H}]\cong  \mathcal{H} \otimes \mathcal{H}^*\) is itself a finite dimensional Hilbert space. In this article will abuse notation \(\mathcal{B}(\mathcal{H}) \cong \mathcal{H} \otimes \mathcal{H}^*\) by conflating the internal hom with the external hom.
% \end{example}

\section{The pure stabiliser theory}
\label{section:stabiliser}
We review the key elements of the stabiliser theory, and its representation
with symplectic linear algebra. This leads to the definition of two \dag-CCCs
for the pure stabiliser theory; one in terms of finite dimensional Hilbert
spaces and the other in terms of structure-preserving relations between sets
carrying symplectic structure:
\begin{enumerate}
    \item a \emph{concrete} \dag-CCC, \(\Stab_p\), given by restricting
 \(\FHilb\);
    \item an \emph{abstract} \dag-CCC, \(\ALR\),
described in terms of affine subspaces over finite fields.
\end{enumerate}
These two \dag-CCCs are
known to be equivalent up to nonzero scalars \cite{neretin_lectures_2011,comfort_graphical_2021}. We will take \(\ALR\)
as the basis from which we build our abstract denotational semantics.

\subsection{The Hilbert space picture}
\label{section:stabiliser:hilbert}

% \begingroup
% \relpenalty=10000   % forbid breaks at relation symbols (e.g. “=”, “<”)
% \binoppenalty=10000 % forbid breaks at binary operators (e.g. “+”, “-”)
% \mathsurround=0pt   % no extra glue around formulas
% %

\subparagraph{Notation.} 
Throughout this paper, let \(p\) denote an \emph{odd} prime, so that \(\F_p \coloneqq
\Z/p\Z\) is the field of integers modulo \(p\). 

Let \(\mathcal{H}_p \coloneqq
\C^p\) be the \(p\)-dimensional Hilbert space, equipped with the canonical orthonormal basis \(\{\ket{x} \mid x
\in \F_p\}\). The Hilbert space \(\mathcal{H}_p\) models the pure states of a quantum system
called a \textbf{qupit}.  

\begin{definition}
    Let \(\xi(x) \coloneqq \exp(i 2\pi x /p)\), then
    the \textbf{Pauli operators} on \(\mathcal{H}_p\) are generated by \(Z \ket{x}
    \coloneqq \xi(x) \ket{x}\) and \(X \ket{x} \coloneqq \ket{x+1}\) and assemble
    into the qupit \textbf{Pauli group}: \[\mathcal{P}_p \coloneqq \{ \xi(y) X^x
    Z^z \mid x,y,z \in \F_p\} \subseteq \mathcal{U}(\mathcal{H}_p).\] The \textbf{\(n\)-qupit Pauli group} is defined to be the
    \(n\)-fold tensor product \(\mathcal{P}_p^{\otimes n}\subseteq \mathcal{U}(\mathcal{H}_p^{\otimes n})\), so that an arbitrary
    Pauli operator takes the form \(\xi(y)  \bigotimes_{j=1}^n X_j^{x_j} Z_j^{z_j}\)
    for some \(\mathbf{x}, \mathbf{z} \in \F_p^n\) and \(y \in \F_p\). 
\end{definition}

The \emph{key observation} of the stabiliser theory is that (certain) subgroups
of \(\Pauli\) determine pure states in \(\mathcal{H}_p^{\otimes n}\):
\begin{lemma}
   Take a maximal Abelian subgroup \(S \subseteq \mathcal{P}_p^{\otimes n}\)
   such that \(\xi(a) 1_{\mathcal{H}_p}^{\otimes n} \in S\) if and only if
   \(a \equiv 0\mod p\). Up to global phase \(\exp(2\pi i \theta)\),  \(S\) uniquely determines a normalised quantum
   state \(\ket{S} \in \mathcal{H}_p^{\otimes n}\) such that \(s \ket{S} =
   \ket{S}\) for all \(s \in S\).
\end{lemma}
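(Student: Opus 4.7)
The plan is to build an explicit orthogonal projector from \(S\) and show, by a trace computation, that its image is one-dimensional.

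\textbf{Step 1 (sizing \(S\)).} First I would establish that \(|S| = p^n\). The assignment \(\xi(a)\bigotimes_j X_j^{x_j}Z_j^{z_j} \mapsto (\mathbf{x},\mathbf{z})\) yields a surjective group homomorphism \(\Pauli \twoheadrightarrow \F_p^{2n}\) whose kernel is exactly the scalar subgroup \(\{\xi(a)\cdot 1^{\otimes n} : a \in \F_p\}\). Two Paulis commute if and only if their images are orthogonal under the standard symplectic form on \(\F_p^{2n}\); this is the stabiliser-to-symplectic dictionary developed in the sequel. The hypothesis on \(S\) says the restricted homomorphism \(S \to \F_p^{2n}\) has trivial kernel, so \(S\) embeds as an isotropic subspace, and maximality of \(S\) among abelian subgroups with trivial scalar part forces the image to be a Lagrangian subspace of dimension \(n\). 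Hence \(|S| = p^n\).

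\textbf{Step 2 (projector).} Next I would define \(P_S \coloneqq p^{-n}\sum_{s\in S} s\). Closure of \(S\) under multiplication gives \(P_S^2 = P_S\), and since every \(s\) is unitary with \(s^\dag = s^{-1} \in S\), one also has \(P_S^\dag = P_S\). For any \(s\in S\) the substitution \(t \mapsto st\) in the sum shows \(sP_S = P_S\), so \(P_S\) is an orthogonal projector onto the common eigenspace \(V_S \coloneqq \{\ket{\psi} \in \mathcal{H}_p^{\otimes n} : s\ket{\psi} = \ket{\psi} \text{ for all } s\in S\}\).

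\textbf{Step 3 (rank).} Finally, I would compute \(\dim V_S = \operatorname{tr}(P_S) = p^{-n}\sum_{s\in S}\operatorname{tr}(s)\). Because \(\operatorname{tr}\bigl(X^{\mathbf x}Z^{\mathbf z}\bigr) = 0\) whenever \((\mathbf x,\mathbf z) \neq (0,0)\), only elements of \(S\) of the form \(\xi(a)\cdot 1^{\otimes n}\) contribute; by hypothesis the only such element is the identity, so \(\operatorname{tr}(P_S) = p^{-n}\cdot p^n = 1\). Thus \(V_S\) is one-dimensional, and any unit vector \(\ket{S}\) in it is uniquely determined up to a global phase \(e^{2\pi i\theta}\), with \(s\ket{S} = \ket{S}\) for all \(s\in S\) by construction.

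The main obstacle is Step 1: it is the only place where the stabiliser-to-symplectic dictionary is genuinely needed, and without \(|S| = p^n\) the trace of \(P_S\) does not pin down \(\dim V_S = 1\). Steps 2 and 3 then reduce to routine group averaging and the vanishing of traces of non-scalar Paulis.
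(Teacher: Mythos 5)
Your proof is correct. The paper states this lemma without proof, as part of its review of the standard stabiliser formalism, so there is nothing to compare against; your argument (group-average projector plus the trace computation, with the counting $|S|=p^n$ supplied by the symplectic dictionary) is the standard one. The only step that deserves a little extra care is the claim in Step 1 that maximality forces the image to be Lagrangian: one must check that any $\vb v \in L^{\omega_n}\setminus L$ lifts to a Pauli of order $p$ commuting with $S$ and generating no new scalars, which is exactly what the paper's map $\pi$ of equation~\eqref{eq:symplectic_presentation} provides (using that $p$ is odd), so the gap is easily closed.
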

The equivalence class \([\exp(2\pi i \theta) \ket{S}]_{\theta\in [0,1)}\)
is called the \textbf{stabiliser state} associated to the \textbf{stabiliser
group} \(S\). The unitary operations that preserve this structure can also
be captured by their action on the Pauli group:
\begin{definition}
    The \textbf{Clifford group} is the unitary normaliser
    of \(\mathcal{P}_p^{\otimes n}\): \[\Cliff \coloneqq \{ U
    \in \mathcal{U}(\mathcal{H}_p^{\otimes n}) \mid \forall P \in
    \mathcal{P}_p^{\otimes n}, UPU^\dagger \in \mathcal{P}_p^{\otimes n}\}\subseteq \mathcal{U}(\mathcal{H}_p^{\otimes n}).\]
\end{definition}
Consider a stabiliser group \(S \subseteq \Pauli\), then for any \(C
\in \Cliff\) and \(s \in S\), we have that \(CsC^\dagger \cdot C\ket{S} =
Cs\ket{S} = C\ket{S}\). It follows that the stabiliser group \(CSC^\dagger =
\{ CsC^\dagger \mid s \in S\}\) stabilises the state \(\ket{CSC^\dagger} =
C \ket{S}\). Clifford unitaries therefore map stabiliser states to stabiliser
states. Therefore, it is natural to assemble these operations together into a \dag-CCC:

% Clifford operators, stabiliser states and their adjoints form a \dag-SMC:  
% %%
% \begin{definition}
%     The \dag-SMC of qupit \textbf{stabiliser contractions} is the \dag-symmetric monoidal subcategory of \(\FHilb\) generated by the qupit stabiliser states and Clifford operators under tensor product, composition and the Hermitian adjoint.
% \end{definition}
% 
% %Because stabiliser states and their adjoints are normalised and Clifford operators are unitary, their composites are necessarily contractions.  However, t
% The cup and cap in \(\FHilb\) are not normalised, so stabiliser contractions are not compact closed.  This can be easily remedied:
%
\begin{definition}
    The \dag-CCC, \(\Stab_p\), of (pure) \textbf{qupit stabiliser maps} is the \dag-compact-closed subcategory of \(\FHilb\) generated by the qupit stabiliser states and Clifford operators as well as the scalars \(1/\sqrt p\) and \(\sqrt{p}\) under tensor product, composition and the Hermitian adjoint.
\end{definition}

We add the scalars \(1/\sqrt p\) and \(\sqrt{p}\) to obtain a compact closed
category as the compact closed structure is not normalised.

\subsection{The symplectic picture}
\label{section:stabiliser:symplectic}

We recall how the theory of pure stabiliser maps can be restated in purely
symplectic terms by taking the notion of a stabiliser group, and their
symplectic representation as fundamental.

%\vspace{-4mm}
%\subparagraph{Notation.} 
%Throughout, \(p\) denotes an \emph{odd} prime, so that \(\F_p \coloneqq
%\Z/p\Z\) is the field of integers modulo \(p\). 

\begin{definition}
  A \textbf{symplectic vector space} \((V, \omega)\) is a (finite-dimensional)
  \(\F_p\)-vector space \(V\) equipped with a non-degenerate bilinear form
  \(\omega : V \oplus V \to \F_p\) such that for any \(v \in V\), \(\omega(v,v)
  = 0\),
\end{definition}
\begin{example}[Standard symplectic form]
  Given any  \(n \in \N\), \((\F_p^{2n},\omega_n)\)
  is a symplectic vector space where
  \(
    \omega_n\bigl((\vb{x},\vb{z}),(\vb{x'},\vb{z'})\bigr) \coloneqq \vb{x}^\trans \vb{z}' - \vb{z}^\trans \vb{x}'
  \).
%  for \(\vb{x},\vb{z},\vb{x}',\vb{z}'\in\F_p^n\). 
\end{example}

We endow the set \(\F_p\oplus \F_p^{2n}\) with the structure of a non-Abelian group, with identity and inverse given pointwise by \(0\) and negation,  where multiplication is given by
\[(a,\vb{v})\cdot (b,\vb{w}) = (a+b,\vb{v}+ \vb{w}+\tfrac{1}{2}\omega_n(\vb{v},\vb{w})).\]
Moreover, there is a function 
\begin{equation}
  \label{eq:symplectic_presentation}
  \pi : \F_p \oplus  \F_p^{2n} \longrightarrow \Pauli ; \quad 
  \pi(a, \vb{x}, \vb{z}) \coloneqq \xi(a+\tfrac{1}{2}\vb{z}^\trans \vb{x})  \bigotimes_{k=1}^n X^{x_k} Z^{z_k},
\end{equation}
chosen such that \(\pi(a, \vb{x},\vb{z}) \pi(a', \vb{x'},\vb{z'}) =
\pi(a+a' + \tfrac{1}{2} \omega_n((\vb{x},\vb{z}),(\vb{x'}, \vb{z'})),
\vb{x}+\vb{x'},\vb{z}+\vb{z'})\), making \(\pi\) into an isomorphism of groups. Using this isomorphism,  the commutation of Paulis can immediately be framed in terms of the symplectic form:
\begin{lemma}
  \label{lem:pauli_commutation}
   Paulis \(\pi(a, \vb{x},\vb{z})\) and \(\pi(a', \vb{x'},\vb{z'})\) commute
  if and only if \(\omega_n((\vb{x},\vb{z}),
  (\vb{x'},\vb{z'})) =0\).
\end{lemma}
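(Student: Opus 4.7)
The plan is to derive both directions directly from the multiplication formula for \(\pi\) stated just before the lemma, exploiting the antisymmetry of \(\omega_n\).

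First, I would record that \(\omega_n\) is antisymmetric: bilinearity together with \(\omega_n(v,v) = 0\) gives \(\omega_n(v+w, v+w) = \omega_n(v,w) + \omega_n(w,v) = 0\), so \(\omega_n(w,v) = -\omega_n(v,w)\). This step needs only that we can add the two mixed terms and cancel, which is valid since \(p\) is an odd prime (so \(\tfrac{1}{2}\) also makes sense in \(\F_p\), justifying the multiplication formula itself).

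Next, I would apply the multiplication law for \(\pi\) in both orders:
\begin{align*}
\pi(a,\vb{x},\vb{z})\,\pi(a',\vb{x'},\vb{z'}) &= \pi\bigl(a+a' + \tfrac{1}{2}\omega_n((\vb{x},\vb{z}),(\vb{x'},\vb{z'})),\ \vb{x}+\vb{x'},\ \vb{z}+\vb{z'}\bigr), \\
\pi(a',\vb{x'},\vb{z'})\,\pi(a,\vb{x},\vb{z}) &= \pi\bigl(a+a' - \tfrac{1}{2}\omega_n((\vb{x},\vb{z}),(\vb{x'},\vb{z'})),\ \vb{x}+\vb{x'},\ \vb{z}+\vb{z'}\bigr),
\end{align*}
where I used antisymmetry to rewrite \(\omega_n((\vb{x'},\vb{z'}),(\vb{x},\vb{z}))\) in the second line. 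The spatial \(X/Z\)-exponents coincide, while the phase coordinates differ by exactly \(\omega_n((\vb{x},\vb{z}),(\vb{x'},\vb{z'}))\).

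Finally, I would conclude using that \(\pi\) is an isomorphism of groups (hence in particular injective): the two products are equal in \(\Pauli\) iff their preimages under \(\pi\) agree, iff \(\omega_n((\vb{x},\vb{z}),(\vb{x'},\vb{z'})) \equiv 0 \pmod p\). This gives both implications at once. There is no real obstacle; the only subtlety is remembering that the character \(\xi\) is faithful on \(\F_p\) (i.e. \(\xi(k) = 1\) iff \(k \equiv 0\)), which is immediate from its definition as \(\exp(i2\pi k/p)\), and that antisymmetry of \(\omega_n\) genuinely uses \(p \ne 2\).
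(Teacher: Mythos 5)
Your proof is correct and is precisely the argument the paper intends: the paper states the lemma as an immediate consequence of the multiplication law for \(\pi\) and its injectivity, which is exactly what you spell out (compute the product in both orders, use antisymmetry of \(\omega_n\), and compare phase coordinates via injectivity of \(\pi\)). Your remarks on where \(p\neq 2\) is used are accurate and a welcome bit of extra care.
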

This allows us to represent stabiliser states in terms of maximal Abelian subgroups of \(\F_p\oplus \F_p^{2n}\). These subgroups have the convenient property that they correspond to a particular kind of affine subspaces of the symplectic vector space \((\F_p^{2n},\omega_n)\):
\begin{definition}
  Given a linear subspace \(S\) of a symplectic vector space \((V, \omega)\),
  its \textbf{symplectic complement} is the space \(S^\omega \coloneqq \{\vb{v} \in V \mid
  \forall \vb{s} \in S,\, \omega(\vb{v},\vb{s}) = 0\}\) of vectors in \(V\) which are orthogonal to those in \(S\) with respect to the symplectic form \(\omega\). A subspace \(S\subseteq (V, \omega)\) is:
  \begin{itemize}
    \item \textbf{isotropic} if \(S \subseteq S^\omega\);
    \item \textbf{coisotropic} if \(S^\omega \subseteq S\);
    \item \textbf{Lagrangian} if \(S = S^\omega\).
  \end{itemize}
  An affine subspace is isotropic, coisotropic, or Lagrangian if its linear
  component\footnote{The \textbf{linear component} of a non-empty affine subspace \(S \subseteq \F_p^n\)
  is the linear subspace\\ \(\{\vb{x}-\vb{y} \mid \vb{x},\vb{y}
  \in S\}\subseteq \F_p^n\).} is. By convention, empty subspaces are affine Lagrangian.
\end{definition}
\begin{proposition}[\cite{gross_finite_2005}]
  \label{proposition:lagrangian_to_stabiliser}
  The isomorphism \(\pi\) induces a bijection between affine Lagrangian
  subspaces of \((\F_p^{2n},\omega_n)\) and stabiliser subgroups of
  \(\Pauli\). Given a nonempty affine Lagrangian subspace \(L+\vb{a}
  \subseteq (\Zp^{2n},\omega_n)\), the associated stabiliser group is the
  Abelian subgroup \(\{\pi\bigl(\omega_n(\vb{a},\vb b),\vb{b}) \,\mid \,
  \vb{b} \in L\} \subseteq \Pauli\).
\end{proposition}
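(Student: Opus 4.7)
The plan is to analyse a stabiliser subgroup $S \subseteq \Pauli$ by pulling it back along the group isomorphism $\pi$ to an Abelian subgroup $\pi^{-1}(S) \subseteq \F_p \oplus \F_p^{2n}$, and then to read off its structure from the projection $L \coloneqq \mathrm{proj}_{\F_p^{2n}}(\pi^{-1}(S))$ to the symplectic component. The first step is to show that $L$ is Lagrangian. Isotropy of $L$ follows directly from \Cref{lem:pauli_commutation}: since any two elements of $S$ commute, the symplectic form vanishes on every pair in $L$. For maximality, I would argue that if $L$ failed to be Lagrangian there would exist $\vb{v} \in L^{\omega_n} \setminus L$, and any linear functional $\phi : L \to \F_p$ extending to $L + \F_p \vb{v}$ would yield a strictly larger Abelian subgroup of $\Pauli$ satisfying the same trivial-center condition, contradicting the maximality of $S$.

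Next I would translate the condition $\xi(a) 1^{\otimes n} \in S \iff a \equiv 0$ into the statement that $\pi^{-1}(S) \cap (\F_p \oplus \{0\}) = \{(0,0)\}$. This makes $\pi^{-1}(S)$ the graph of a function $\phi : L \to \F_p$. Expanding the group law $(a,\vb{v}) \cdot (b,\vb{w}) = (a+b+\tfrac{1}{2}\omega_n(\vb{v},\vb{w}), \vb{v}+\vb{w})$ on such a graph and using isotropy of $L$ (so that the $\omega_n$ cocycle vanishes), the subgroup axiom reduces to $\phi(\vb{v}+\vb{w}) = \phi(\vb{v}) + \phi(\vb{w})$. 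Thus $\phi$ must be $\F_p$-linear, and conversely any $\F_p$-linear $\phi$ determines a valid subgroup. By non-degeneracy of $\omega_n$, every linear functional on $L$ has the form $\vb{b} \mapsto \omega_n(\vb{a}, \vb{b})$ for some $\vb{a} \in \F_p^{2n}$, with $\vb{a}$ determined modulo $L^{\omega_n} = L$. The set of such functionals is therefore in bijection with cosets $\vb{a} + L \in \F_p^{2n}/L$, which are exactly the nonempty affine subspaces with linear part $L$, all of which are affine Lagrangian.

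Stitching these two correspondences together produces the desired bijection between stabiliser subgroups of $\Pauli$ and nonempty affine Lagrangian subspaces of $(\F_p^{2n},\omega_n)$. The explicit formula then falls out by running the construction in reverse: starting from $L + \vb{a}$, the associated subgroup is $\{(\omega_n(\vb{a}, \vb{b}), \vb{b}) \mid \vb{b} \in L\}$, and applying $\pi$ gives precisely $\{\pi(\omega_n(\vb{a}, \vb{b}), \vb{b}) \mid \vb{b} \in L\} \subseteq \Pauli$. Independence of this set from the choice of coset representative follows from the fact that translating $\vb{a}$ by an element of $L$ only reparametrises the elements via the isotropy of $L$.

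The main obstacle I anticipate is arguing for maximality cleanly: showing that no properly larger Abelian subgroup with trivial central intersection can exist reduces to extending $\phi$ from $L$ to a strictly larger isotropic subspace $L'$, which requires a short linear-algebra argument (using injectivity of $\F_p$ as an $\F_p$-module, or equivalently the splitting of short exact sequences of vector spaces). The remaining work is a routine translation between the group-theoretic data of $\pi^{-1}(S)$ and the symplectic data of $L + \vb{a}$.
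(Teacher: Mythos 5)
The paper offers no proof of this proposition --- it is imported from \cite{gross_finite_2005}, with only a three-sentence gloss afterwards (isotropy $\leftrightarrow$ commutation, Lagrangian $\leftrightarrow$ maximality, affine shift $\leftrightarrow$ phase) --- and your argument is a correct, essentially complete reconstruction along exactly those lines: project $\pi^{-1}(S)$ to get an isotropic $L$, use the trivial-centre condition to realise $\pi^{-1}(S)$ as the graph of a linear functional on $L$, and use non-degeneracy to identify functionals with cosets $\vb a + L$. The only point worth tightening is the reverse direction: you should note that for a Lagrangian $L$ the group $\{\pi(\omega_n(\vb a,\vb b),\vb b)\mid\vb b\in L\}$ is automatically maximal (any Abelian extension with trivial centre intersection projects to an isotropic superspace of $L$, which must equal $L$), but this is the same argument you already give and is a one-line addition.
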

The isotropic condition on a subspace is equivalent to the commutation of the
corresponding Pauli operators. The additional Lagrangian condition imposes
that these subgroups must be maximal. The affine component accounts for the
phase factor \(\chi(a)\). Therefore,  stabiliser states can be represented using only symplectic linear algebra.

% Note that a symplectic basis for an affine Lagrangian subspace is precisely what is known as a \textbf{stabiliser tableau} for the corresponding stabiliser subgroup \cite{aaronson2004improved, de_beaudrap_linearized_2013}. Therefore, thinking of stabiliser subgroups in terms of their tableaux, the reason we have introduced this categorical machinery is to allow us to formally compose tableaux, as we will see shortly.

Next we give a symplectic representation of the Clifford group, starting with the following definition:
%Using a similar argument, it follows that the Clifford unitaries admit a completely symplectic characterisation:
\begin{definition}
  A \textbf{symplectomorphism} \(\phi:(V,\omega_V)\xrightarrow{\cong}(W,\omega_W)\)
  is a linear isomorphism such that for all \(\vb{v},\vb{v'} \in V\),
  \(\omega_W(\phi(\vb{v}),\phi(\vb{v'})) = \omega_V(\vb{v},\vb{v'})\).
  An \textbf{affine symplectomorphism} is an affine transformation \(V \to W\) whose
  linear component\footnote{A function \(\phi\) between vector spaces is an \textbf{affine transformation} if there
  is a linear transformation \(T\) called the \textbf{linear component} of \(\phi\) such that \(\phi(\vb{x}) - \phi(\vb{0}) =
  T(x)\).\\  Given an affine transformation \(\phi\), \(\phi(\vb{0})\) is the \textbf{affine shift} of \(\phi\).}
  is a symplectomorphism.
\end{definition}

We can without loss of generality always use the standard symplectic vector space: 
\begin{lemma} \label{lem:darboux}
    Every symplectic vector space \((V, \omega_V)\) over \(\F_p\) is symplectomorphic to \((\F_p^{2n}, \omega_n)\).
\end{lemma}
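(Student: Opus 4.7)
The plan is to construct a symplectic basis (a Darboux basis) for $(V,\omega_V)$ by induction on $\dim V$, and then use this basis to define the required symplectomorphism onto $(\F_p^{2n},\omega_n)$. Along the way, the induction will also show that $\dim V$ must be even.

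First I would handle the base case $V = 0$, which is trivially symplectomorphic to $\F_p^0$. For the inductive step, assuming $\dim V \geq 1$, I would pick any nonzero $\vb{e}_1 \in V$. Non-degeneracy of $\omega_V$ guarantees some $\vb{u} \in V$ with $\omega_V(\vb{e}_1,\vb{u}) \neq 0$; rescaling gives $\vb{f}_1 \coloneqq \vb{u}/\omega_V(\vb{e}_1,\vb{u})$, so $\omega_V(\vb{e}_1,\vb{f}_1) = 1$. The vectors $\vb{e}_1,\vb{f}_1$ are linearly independent (since $\omega_V(\vb{e}_1,\vb{e}_1) = 0$ by the alternating condition), so $W \coloneqq \Span(\vb{e}_1,\vb{f}_1)$ is 2-dimensional and the restriction $\omega_V|_W$ is non-degenerate.

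The key algebraic step is to verify that $V = W \oplus W^{\omega_V}$ and that $\omega_V$ restricts to a non-degenerate symplectic form on $W^{\omega_V}$. For the direct sum decomposition: given $\vb{v} \in V$, set $\vb{v}' \coloneqq \vb{v} - \omega_V(\vb{v},\vb{f}_1)\vb{e}_1 + \omega_V(\vb{v},\vb{e}_1)\vb{f}_1$, and check that $\vb{v}' \in W^{\omega_V}$, so $V = W + W^{\omega_V}$; the intersection $W \cap W^{\omega_V}$ is trivial because $\omega_V|_W$ is non-degenerate. Non-degeneracy of $\omega_V|_{W^{\omega_V}}$ follows by the same dimension/pairing argument. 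Applying the induction hypothesis to $(W^{\omega_V},\omega_V|_{W^{\omega_V}})$ yields a symplectic basis $\vb{e}_2,\ldots,\vb{e}_n,\vb{f}_2,\ldots,\vb{f}_n$ (so $\dim V = 2n$), and concatenating with $\vb{e}_1,\vb{f}_1$ produces a symplectic basis for $V$.

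Finally, I would define $\phi \colon V \to \F_p^{2n}$ by sending $\vb{e}_i \mapsto \vb{x}_i$ and $\vb{f}_i \mapsto \vb{z}_i$, where $\vb{x}_i,\vb{z}_i$ are the standard basis vectors of $\F_p^{2n} = \F_p^n \oplus \F_p^n$. This is a linear isomorphism by construction, and a direct computation on basis pairs shows $\omega_n(\phi(\vb{u}),\phi(\vb{v})) = \omega_V(\vb{u},\vb{v})$, so $\phi$ is a symplectomorphism. The main obstacle is a bookkeeping one rather than a conceptual one: carefully verifying the decomposition $V = W \oplus W^{\omega_V}$ and the non-degeneracy of the induced form on $W^{\omega_V}$; these depend only on the alternating and non-degenerate conditions, and hence go through uniformly for any odd prime $p$.
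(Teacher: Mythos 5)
Your proof is correct: it is the standard inductive construction of a Darboux (symplectic) basis, and each step checks out --- in particular the projection $\vb{v}' = \vb{v} - \omega_V(\vb{v},\vb{f}_1)\vb{e}_1 + \omega_V(\vb{v},\vb{e}_1)\vb{f}_1$ does land in $W^{\omega_V}$, and non-degeneracy of the restricted form follows as you say. The paper gives no proof of this lemma at all, treating it as standard linear symplectic algebra (elsewhere it cites Rudolph--Schmidt for symplectic basis extension), so your argument simply supplies the standard proof the paper implicitly relies on; note only that the argument needs nothing about $p$ being an odd prime, since the alternating condition already yields antisymmetry in any characteristic.
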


Just as the above group on \(\F_p \oplus \F_p^{2n}\) represents the Pauli group; the affine symplectomorphisms on the standard symplectic vector space represent the Clifford operators:
%All symplectic vector spaces \((V,\omega_V)\) are even-dimensional and symplectomorphic to a standard symplectic vector space \((\F_p^{2n},\omega_n)\) for \(n \coloneqq \dim(V)/2\).
%
\begin{proposition}
  \label{lemma:clifford_symplectic}
  There is an isomorphism between the Clifford group modulo global phase,  \(\Proj(\Cliff)\),  and the group of affine symplectomorphisms on \((\F_p^{2n}, \omega_n)\).
\end{proposition}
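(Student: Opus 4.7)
The plan is to construct a group homomorphism $\Phi: \Cliff \to \mathrm{ASp}(\F_p^{2n}, \omega_n)$ and show that it descends to an isomorphism on $\Proj(\Cliff)$, where $\mathrm{ASp}$ denotes the affine symplectic group.

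First, I would define $\Phi$ via the conjugation action on the Pauli group. For $C \in \Cliff$, the map $P \mapsto C P C^\dagger$ is a group automorphism of $\Pauli$ by definition of the Clifford group. Transported along the isomorphism $\pi$ of Equation~\ref{eq:symplectic_presentation}, it yields an automorphism $\tilde\Phi_C$ of $(\F_p \oplus \F_p^{2n},\cdot)$. Since $C$ commutes with every global phase, $\tilde\Phi_C$ fixes the center $\F_p \oplus \{0\}$ pointwise, so it descends to an automorphism $T_C$ of the quotient $\F_p^{2n}$. Since conjugation preserves group commutators and by Lemma~\ref{lem:pauli_commutation} the commutator on $\F_p^{2n}$ is precisely $\omega_n$, the map $T_C$ lies in the symplectic group.

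Next I would extract the affine shift from the remaining phase data. Writing $\tilde\Phi_C(0,\vb{v}) = (\alpha_C(\vb{v}), T_C(\vb{v}))$, the homomorphism condition together with $T_C \in \mathrm{Sp}$ forces $\alpha_C$ to be $\F_p$-linear by a short computation using the group law on $\F_p \oplus \F_p^{2n}$. By nondegeneracy of $\omega_n$, there is a unique $\vb{b}_C \in \F_p^{2n}$ with $\alpha_C(\vb{v}) = \omega_n(\vb{b}_C, \vb{v})$. I then set $\Phi(C)$ to be the affine symplectomorphism with linear component $T_C$ and affine shift $T_C(\vb{b}_C)$; unwinding Proposition~\ref{proposition:lagrangian_to_stabiliser}, this is exactly the map induced by $C$ on affine Lagrangians via conjugation of the associated stabiliser groups. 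From this description, $\Phi$ is a homomorphism, and it is manifestly invariant under multiplication of $C$ by a global phase, so it factors through $\Proj(\Cliff)$.

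For injectivity, if $\Phi([C]) = \mathrm{id}$ then $C P C^\dagger = P$ for every $P \in \Pauli$; since the operators $\{\pi(0,\vb{v})\}_{\vb{v} \in \F_p^{2n}}$ form a Hilbert--Schmidt orthogonal basis of $\mathcal{B}(\Hp)$, the Pauli group acts irreducibly and Schur's lemma forces $C$ to be a scalar, i.e.\ $[C] = [I]$. Surjectivity is the main obstacle. I would handle translations and the linear symplectic part separately. Conjugation by $\pi(0,\vb{a})$ yields $\pi(\omega_n(\vb{a},\vb{v}), \vb{v})$, so Paulis already cover all pure translations. It then remains to lift a generating set of $\mathrm{Sp}(\F_p^{2n}, \omega_n)$: a Fourier transform $F$ lifts the swap $(\vb{x},\vb{z})\mapsto(-\vb{z},\vb{x})$, a phase gate $S$ lifts a shear, and a controlled-$Z$ lifts a symplectic transvection. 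The delicate point here is that one must invoke a generation theorem for the finite-field symplectic group (e.g.\ by symplectic transvections, cf.~\cite[Ch.~9]{neretin_lectures_2011}) and verify the prescribed lifts at odd prime $p$; this is essentially the only nontrivial computational step in the argument.
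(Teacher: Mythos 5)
The paper does not actually prove Proposition~\ref{lemma:clifford_symplectic}: it is recalled as a known result of the symplectic representation of the Clifford group (the surrounding text cites the standard literature, e.g.\ Gross and Neretin), so there is no in-paper argument to compare yours against. Judged on its own terms, your proof is the standard one and is essentially correct. The construction of $T_C$ from the conjugation action, the additivity (hence $\F_p$-linearity) of the phase cocycle $\alpha_C$ forced by the group law on $\F_p\oplus\F_p^{2n}$ together with $T_C\in\mathrm{Sp}$, the extraction of $\vb{b}_C$ by nondegeneracy, the Schur's-lemma kernel computation, and the realisation of all translations by Pauli conjugation all check out (and your affine shift $T_C(\vb{b}_C)$ is indeed the one consistent with $L_S\mapsto L_{CSC^\dagger}$ under Proposition~\ref{proposition:lagrangian_to_stabiliser}). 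Two small points of precision: first, Lemma~\ref{lem:pauli_commutation} only records \emph{when} Paulis commute; the fact you actually use, that the group commutator of $\pi(0,\vb v)$ and $\pi(0,\vb w)$ is the central element $\xi(\omega_n(\vb v,\vb w))$, follows from the multiplication rule in equation~\eqref{eq:symplectic_presentation} rather than from that lemma. Second, the controlled-$Z$ gate does not lift a symplectic transvection: its linear part differs from the identity by a rank-$2$ map, whereas a transvection is a rank-$1$ perturbation. This does not break the argument --- $F$, the quadratic phase gate and $CZ$ (or $CX$) together with the multiplicative gates still generate lifts of all of $\mathrm{Sp}(2n,\F_p)$ by the standard generation results for finite symplectic groups at odd $p$ --- but the claim as phrased should be corrected, and, as you note, the verification of the lifted generating set is the one genuinely computational step.
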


\subsection{Affine Lagrangian relations}
%The graph of an
%affine symplectomorphism \(\phi : (X,\omega_X) \to (Y,\omega_Y)\) is itself an affine Lagrangian
%subspace. More explicitly, t
Affine Lagrangian subspaces are \emph{the} fundamental notion in the symplectic representation:
\begin{enumerate}
\item
\begingroup
\relpenalty=10000   % forbid breaks at relation symbols (e.g. “=”, “<”)
\binoppenalty=10000 % forbid breaks at binary operators (e.g. “+”, “-”)
\mathsurround=0pt   % no extra glue around formulas
the \emph{graph} of an affine symplectomorphism \(\phi : (X,\omega_X) \to (Y,\omega_Y)\)  is an affine Lagrangian subspace
 \({\Gr(\phi) \coloneqq \{(\vb{x}, \phi(\vb{x})) \mid x \in X\subseteq X
\oplus Y\} \subseteq (X\oplus Y,-\omega_X \oplus \omega_Y)}\);
\endgroup
\item
the composition of affine symplectomorphisms, and thus Clifford operators (by lemma~\ref{lemma:clifford_symplectic}), is compatible with the relational composition of their graphs \(\Gr(\psi)\circ \Gr(\phi) = \Gr(\psi \circ \phi)\);
\item
the action of Clifford operators on stabiliser states is compatible with the relational composition of their corresponding affine Lagrangian subspaces: if \(\phi\) is the affine symplectomorphism corresponding to a Clifford \(C\), and \(L_S\) is the affine Lagrangian subspace corresponding to a stabiliser group \(S\), then \(\Gr(\phi) \circ L_S = \phi(L_S) = L_{CSC^\dagger}\).
\end{enumerate}

%Analogously, affine Lagrangian subspaces \(V\) of \((\Fpnn,\omega_n)\) can be regarded as affine Lagrangian relations \((\{0\},\omega_0) \to (\Fpnn,\omega_n)\) of the form \(\{(\vb{0},
%\vb{v}) \mid v \in V\}\). The relational composition \(V; \Gamma_S\) recovers the
%action of the Clifford unitary corresponding to \(S\) on the stabiliser state
%corresponding to \(V\).
%
In other words, following Weinstein \cite{Weinstein1982}, we adopt the motto:

\begin{center}
  \boxed{\emph{Everything is an affine Lagrangian relation!}}
\end{center}

This naturally leads to the following category which allows for these various notions to be seamlessly composed together:
\begin{definition}
  The \textbf{category, \(\ALR\), of affine Lagrangian relations} has:
  \begin{itemize}
    \item \textbf{objects}: symplectic vector spaces \((V,\omega_V)\);
    \item \textbf{morphisms \((V,\omega_V) \to (W,\omega_W)\)}: affine Lagrangian subspaces of \((V\oplus W, -\omega_V \oplus
    \omega_W)\);
    \item \textbf{identity and composition}: given by the diagonal relation and relational composition;
    \item \textbf{monoidal product}: given by the direct sum \((V \oplus W,
    \omega_V \oplus \omega_W)\);
    \item \textbf{monoidal unit}: given by the trivial symplectic vector space \(I\coloneqq (\F_p^{0},\omega_0)\);
    \item \textbf{dagger}: given by the relational converse;

    %\cole{dagger missing a conjugation. will come back and fix}
    %\robert{I've been over this a lot, and I convinced myself at some point that this was the correct dagger on ALR for QM. I'll check my notes and get back to you...}
    %\cole{maybe you are right, this continues to confuse me :p}
%    \item \textbf{duals}:
    \item \textbf{\dag-compact structure}: the dual is given by  \((V,\omega_V)^* \coloneqq (V, -\omega_V)\) and the cup is given by the diagonal relation
    \(\eta_{(V,\omega_V)} \coloneqq \{ (\bullet, (\mathbf{v},\mathbf{v}))\}\). 
    %\cole{This is a compact structure, not a dagger compact structure, because both sides need to be the same.  Will come back and fix}
    %\robert{Are you sure? Selinger defines dagger-compact categories with duals. Or am I misunderstanding what you're saying?}
    %\cole{yeah you are right, I am just having a brain fart}
  \end{itemize}
\end{definition}

\begin{theorem}[\cite{comfort_graphical_2021, neretin_lectures_2011}]
  \label{theorem:stab_acr}
  There is an essentially surjective and full \dag-compact-closed functor 
  \(\rel:\Stab_p \to \ALR\).
  Quotienting by invertible scalars yields an equivalence 
  \(\Proj(\Stab_p) \simeq \ALR\).
\end{theorem}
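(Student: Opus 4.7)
The plan is to construct $\rel$ by combining the Choi–Jamio\l{}kowski isomorphism available in any $\dag$-compact-closed category with Proposition~\ref{proposition:lagrangian_to_stabiliser}. On objects, I set $\rel(\mathcal{H}_p^{\otimes n}) \coloneqq (\F_p^{2n},\omega_n)$, using Lemma~\ref{lem:darboux} to extend to arbitrary objects of $\Stab_p$ up to an implicit choice of symplectic basis. On morphisms, given a stabiliser map $f : \mathcal{H}_p^{\otimes m} \to \mathcal{H}_p^{\otimes n}$, the folded state $\lfloor f \rfloor \coloneqq (f \otimes 1) \circ \eta_{\mathcal{H}_p^{\otimes m}} : I \to \mathcal{H}_p^{\otimes(m+n)}$ is (by definition of $\Stab_p$) a stabiliser state, so up to global phase it is determined by its stabiliser group $S_{\lfloor f \rfloor} \subseteq \Pauli[m+n]$; via $\pi^{-1}$ and Proposition~\ref{proposition:lagrangian_to_stabiliser}, this yields an affine Lagrangian subspace of $(\F_p^{2(m+n)},\omega_{m+n})$, which I interpret as a morphism $(\F_p^{2m},\omega_m) \to (\F_p^{2n},\omega_n)$ in $\ALR$ by using the dual $(V,\omega_V)^* = (V,-\omega_V)$ on the input side.

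Next I would verify that $\rel$ is a $\dag$-compact-closed functor. Functoriality on generators reduces to two observations: Clifford composition corresponds to composition of affine symplectomorphisms under Lemma~\ref{lemma:clifford_symplectic}, and composition of a Clifford with a stabiliser state acts as $CSC^\dagger$ on the stabiliser group, which by Proposition~\ref{proposition:lagrangian_to_stabiliser} translates into the image of the corresponding affine Lagrangian subspace under the symplectomorphism—that is, relational composition in $\ALR$. Preservation of the $\dag$ is immediate from the fact that taking the Hermitian adjoint of a stabiliser map corresponds to flipping the two tensor factors in the Choi state, which matches the relational converse. For compact structure, I would explicitly compute $\rel(\eta_{\mathcal{H}_p^{\otimes n}})$: the maximally entangled state (times $\sqrt{p}^{\,n}$) is stabilised by $\{X \otimes X^{-1}, Z \otimes Z\}^{\otimes n}$, and under $\pi^{-1}$ this gives the diagonal subspace $\{(\mathbf{v},\mathbf{v}) \mid \mathbf{v} \in \F_p^{2n}\}$, matching the cup of $\ALR$ precisely because the second factor carries $-\omega_n$.

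For essential surjectivity I simply invoke Lemma~\ref{lem:darboux}: every object of $\ALR$ is isomorphic to some $(\F_p^{2n},\omega_n) = \rel(\mathcal{H}_p^{\otimes n})$. For fullness, any affine Lagrangian subspace of $(\F_p^{2(m+n)}, -\omega_m \oplus \omega_n)$ corresponds via Proposition~\ref{proposition:lagrangian_to_stabiliser} to a stabiliser state on $m+n$ qupits, which unfolds through $\eta^\dag$ into a stabiliser map $\mathcal{H}_p^{\otimes m} \to \mathcal{H}_p^{\otimes n}$; by construction this lies in the image of $\rel$.

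The main obstacle is controlling the kernel of $\rel$ on morphisms, i.e.\ showing that $\rel(f) = \rel(g)$ if and only if $f$ and $g$ differ by an invertible scalar, so that $\rel$ descends to an equivalence $\Proj(\Stab_p) \xrightarrow{\sim} \ALR$. For the nonzero case this follows from the uniqueness up to global phase in Proposition~\ref{proposition:lagrangian_to_stabiliser}: if $\lfloor f \rfloor$ and $\lfloor g \rfloor$ share the same stabiliser group, then $\lfloor f \rfloor = \lambda \lfloor g \rfloor$ for some $\lambda \in \C^\times$, and by the faithfulness of Choi–Jamio\l{}kowski this forces $f = \lambda g$. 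The subtle points are (i) to check that any such $\lambda$ actually arises as a scalar already present in $\Stab_p$—this uses that $\Stab_p$ contains $\sqrt{p}$ together with the phases generated by Cliffords and stabiliser states, which together cover all scalars appearing as normalisation factors of stabiliser states; and (ii) to handle the degenerate case where $\lfloor f \rfloor = 0$, which corresponds under the convention in the definition of $\ALR$ to the empty affine Lagrangian relation, so zero maps in $\Stab_p$ collapse to a single morphism in $\ALR$ and must be identified in $\Proj(\Stab_p)$; this is consistent because in $\Stab_p$ any two zero morphisms of the same type are already equal.
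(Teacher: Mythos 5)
First, a point of reference: the paper does not prove Theorem~\ref{theorem:stab_acr} at all --- it is imported by citation from \cite{comfort_graphical_2021} and \cite[Chapter~9]{neretin_lectures_2011}. So there is no in-paper proof to compare against; what follows is an assessment of your argument on its own terms. Your overall architecture (Choi--Jamio\l{}kowski fold, Proposition~\ref{proposition:lagrangian_to_stabiliser} to pass to affine Lagrangian subspaces, Lemma~\ref{lem:darboux} for essential surjectivity, uniqueness-up-to-phase for faithfulness of the induced functor on \(\Proj(\Stab_p)\)) is exactly the standard strategy in the cited sources, and the treatment of the zero morphism and the empty Lagrangian is a genuinely good catch that is often elided.

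There is, however, a real gap at the core. Your definition of \(\rel(f)\) presupposes that the folded state \(\lfloor f\rfloor\) of an \emph{arbitrary} morphism of \(\Stab_p\) is either zero or proportional to a stabiliser state, and you assert this ``by definition of \(\Stab_p\)''. It is not by definition: \(\Stab_p\) is defined as the \dag-compact subcategory \emph{generated} by stabiliser states, Cliffords and the scalars \(\sqrt{p}^{\pm 1}\), so a general morphism is an arbitrary composite, and the claim that its Choi state remains stabiliser-or-zero is precisely the closure statement that the theorem encodes. Relatedly, you only verify functoriality for the two generator cases Clifford\(\circ\)Clifford and Clifford\(\circ\)state. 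The case that actually carries the weight --- contracting a wire between two general stabiliser maps, equivalently computing \((\bra{S}\otimes 1)\ket{S'}\) --- is where one must show (i) that the relational composite of two affine Lagrangian relations is again an affine Lagrangian relation (a nontrivial symplectic lemma, not a formal consequence of anything you cite), and (ii) that this relational composite matches the operator composite up to a nonzero scalar, with the empty intersection of affine constraints corresponding exactly to the vanishing of the operator composite. Without this, \(\rel\) is not known to be well defined as a functor, and both fullness and the kernel computation inherit the gap. Your flagged ``subtle point (i)'' about which \(\lambda\in\C^\times\) arise as invertible scalars of \(\Stab_p\) is also load-bearing for faithfulness of the induced equivalence and would need the nonzero scalars of \(\Stab_p\) to be shown to form a group; and the one-line dagger argument elides the complex conjugation relating \(\lfloor f^\dagger\rfloor\) to \(\lfloor f\rfloor\), which does act nontrivially on the symplectic data. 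None of these are wrong turns --- they are the places where the cited proofs actually do their work --- but as written the proposal assumes the hardest part.
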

% In other words, this means that we can reason about circuits built from Clifford operators,  \(\ket{0}\) and \(\bra{0}\) entirely in terms of their tableaux; where we can compose tableaux directly rather than having to take a round trip through Hilbert spaces.

%For convenience, denote \(\ALR(n,m) \coloneqq \ALR((\F_p^{2n},\omega_n),
%(\F_p^{2m},\omega_m))\),
%which via theorem~\ref{theorem:stab_acr} and
%equation~\eqref{eq:symplectic_presentation} are the concrete affine Lagrangian
%relations corresponding to stabiliser m%aps \(\Hp \to \Hp[m]\).
%In particular,
%\(0\) denotes the trivial symplectic vector space \(\{0\}\).
%
By lemma~\ref{lem:darboux}, the affine Lagrangian relations between standard symplectic vector spaces of the form \((\F_p^{2n}, \omega_n)\) represent the basic operations of the stabiliser
theory.  However, the full stabiliser quantum theory is \emph{much richer} than what we have
described. For example, \emph{stabiliser codes} are fundamental
in quantum error correction, but they are not described by pure states.
Similarly, the measurement and classical control of stabiliser circuits, which we have not yet discussed, are essential for error correction.
% How can these be described and manipulated in
%the symplectic picture, and is there a denotational semantics for programs
%manipulating stabiliser codes?

It is worth mentioning that qubit stabiliser maps are not projectively equivalent to \(\ALR[\F_2]\). However, we can recover a projective equivalence \cite[Definition~3.3]{comfortthesis} by 
\begin{itemize}
    \item  restricting to the CSS stabiliser maps generated by the swap gate, the controlled not gate, the not gate, \(\ket{0}\) and \(\bra{0}\);

    \item restricting to the affine Lagrangian subspaces whose linear part splits into the direct sum of a linear subspace and its orthogonal complement \(L\oplus L^\perp\).
\end{itemize}

% \paragraph*{The symplectic representation of qubit CSS maps}
% %\label{cor:css_biaffrel}

% Unlike for odd prime dimensions, qubit Pauli operators in general take the form \(\exp(i \pi a/2) Z^z X^x\) for \(a \in \Z/4\Z\) and \(x,z \in \Z/2\Z\).

% However, the correspondence of theorem~\ref{theorem:stab_acr} can be recovered for qubits when restricting to CSS stabiliser maps, on the one hand, and restricting the affine Lagrangian subspaces on the other hand.

% CSS stabiliser states have  stabiliser groups take the form
% \[
% \{\exp(i\pi a/2)\bigotimes_{k=1}^n X^{x_k} Z^{z_k} \mid a\in \Z/4\Z,\ \vb{x}\in C,\ \vb{z}\in C^{\perp}\},
% \]
% for some linear subspace \(C\subseteq \F_2^{n}\), with orthogonal complement \(C^\perp\subseteq \F_2^{n}\); and by restricting to CSS Cliffords \(U\)  such that for all \(\vb{x},\vb{z} \in \F_2^n\) there exists \(\vb{x'}, \vb{z'} \in \F_2^n\) and \(a,a' \in \Z/4\Z\) such that
% \[
%  U \big( \bigotimes_{k=1}^n X^{{x_k}}\big) U^\dagger = \exp(i\pi a /2) \big( \bigotimes_{k=1}^n X^{x_k'}\big)  
% \!\!\qand\!\!
% U \big( \bigotimes_{k=1}^n X^{{z_k}}\big)  U^\dagger \;=\; \exp(i\pi a'/2)\big( \bigotimes_{k=1}^n X^{z_k'}\big).
% \]
% In other words, these are the circuits generated by the controlled-not gate, the swap, the Pauli \(X\) and \(Z\) gates, \(\ket{0}\) and \(\bra{0}\). 
% On the relational side, we must restrict to affine Lagrangian subspaces \(L+\vb{a}\subseteq \F_2^{2n}\) whose linear component splits as \(L=C \oplus C^\perp\).
% \end{corollary}

%
%
\section{Stabiliser codes and mixed states}
\label{section:mixed}
The stabiliser theory presented in Section~\ref{section:stabiliser}
is, from the perspective of quantum computation, fundamentally
limited: it is efficiently simulatable on a classical probabilistic
computer~\cite{gottesman_heisenberg_1998, aaronson2004improved}. Nevertheless,
the algebraic structure of the theory---specifically, the concept of stabiliser
subgroups---provides the foundation for quantum error correction (QEC). QEC
leverages these elements to encode and manipulate information in a way that
supports universal quantum computation, while retaining structural features
which permit fault tolerance.

A \textbf{stabiliser code} is a \emph{non-maximal} stabiliser group,
i.e. an Abelian subgroup \(G\) of \(\Pauli\) such that \(\xi(x) \cdot
1_{\mathcal{H}_p^{\otimes n}} \in S\) if and only if \(x \equiv  0 \mod
p\). Whereas a stabiliser group \(S\) uniquely determines a pure stabiliser
state \(\ket{S}\) up to global phase---a one-dimensional subspace of the
Hilbert space \(\Hp\)---a stabiliser code determines a higher-dimensional
subspace, the \textbf{codespace}:
\begin{equation}
  \mathcal{H}_G \coloneqq \{\ket{\phi} \in \Hp \mid s \ket{\phi}
  = \ket{\phi} \text{ for all } s \in G\} \subseteq \Hp.
\end{equation}
Elements of the stabiliser group impose linear constraints on the
codespace. Relaxing the number of constraints therefore yields a larger
subspace, while still enforcing sufficient symmetry to make it possible to
detect errors.

Semantically, it is natural to view a stabiliser code not just as a
subspace but as the completely mixed state on that subspace. To see this, given a stabiliser code \(G\), first observe that the orthonormal projection from \(\Hp\) onto \(\HS[G]\) is given by:
\begin{equation}
  \Pi_{G}
  \coloneqq 
  \frac{1}{\left|G\right|}
  \sum_{P\in G}
  P:\mathcal{H}_p^{\otimes n} \to \mathcal{H}_p^{\otimes n}.
\end{equation}
In the degenerate case, given a stabiliser state \([\exp(2\pi i \theta) \ket{S}]_{\theta \in [0,1)}\) with stabiliser group \(S\), this yields \(1\)-dimensional projector corresponding to the pure state \(\Pi_S \coloneqq \dyad{S}{S}\).
More generally, given a stabiliser code \(G\), the  trace-normalised map \(\rho_G \coloneqq
\Pi_G / \Tr(\Pi_G)\) is the uniformly \emph{mixed state} with support in 
\(\HS\subseteq \mathcal{H}_p^{\otimes n}\).  This allows stabiliser codes to be treated within the
framework of mixed-state quantum mechanics. 
In the following two subsections, we introduce two semantics for mixed stabiliser quantum mechanics by:
\begin{enumerate}
    \item \emph{restricting} the mixed processes to those built from stabiliser maps and the trace;
    \item \emph{generalising} the symplectic representation to affine coisotropic relations.
\end{enumerate}
Because the first semantics is given by restriction, it is semantically harder to grapple with.
On the other hand, the second semantics is novel, and much more natural to reason about stabiliser codes.

\subsection{Completely-positive maps between matrix algebras}
Mixed quantum processes are the result of exposing pure quantum processes  to their environment. Because all quantum processes used by quantum computers are exposed to the environment of their observer, mixed quantum theory is of central importance in quantum computing.  In the Schr\"odinger picture of quantum mechanics, the evolution of mixed quantum processes is described in terms of the action of certain completely positive maps between matrix algebras, i.e. between algebras of the form \(\mathcal{B}(\mathcal{H}) \cong \mathcal{H} \otimes \mathcal{H}^*\).
Starting from a \dag-CCC \(\mathsf{C}\), Selinger's CPM construction \(\CPM(\mathsf{C})\) yields an \emph{abstract} category of mixed processes in \(\mathsf{C}\), such that \(\CPM(\FHilb)\) captures the completely positive maps between finite-dimensional Hilbert spaces \cite{selinger_dagger_2007}. \(\CPM(\FHilb)\)  plays a similar role in quantum theory to how the Kleisli
categories \(\mathsf{Set}_{\mathscr{P}}\) and \(\mathsf{Meas}_{\mathscr{G}}\), respectively over the
power-set and Giry monads, are categorical semantics for nondeterministic
and probabilistic computation\footnote{In fact, the CPM construction is a
kind of dagger \emph{arrow} or \emph{promonad} on \(\mathsf{C}\) which generalises
the notion of Kleisli category of a monad \cite{heunen_reversible_2018}.}:
\begin{definition}[\cite{selinger_dagger_2007}]
    Given a \dag-CCC \(\mathsf{C}\), the \dag-CCC 
    \(\CPM(\mathsf{C})\) has:
    \begin{itemize}[nosep]
        \item \textbf{objects:} same as \(\mathsf{C}\);
        \item \textbf{morphisms \([f,S]:X \to Y\):} are equivalence classes of pairs
        \((f,S)\), where \(S\) is an object of \(\mathsf{C}\) and \(f:X
        \otimes S \to Y\) in \(\mathsf{C}\), modulo the equivalence relation
       \begin{equation*}
  (f,S) \sim (g,T)
  \iff
  \begin{tikzpicture}
	\begin{pgfonlayer}{nodelayer}
		\node [style=none] (0) at (5.275, 0.75) {};
		\node [style=none] (1) at (4.275, 1.375) {};
		\node [style=none] (2) at (4.275, 0.125) {};
		\node [style=none] (3) at (3.275, 1.375) {};
		\node [style=none] (4) at (3.275, 0.125) {};
		\node [style=none] (5) at (1.5, 1) {};
		\node [style=none] (6) at (4.275, 0.75) {};
		\node [style=none] (7) at (3.275, 1) {};
		\node [style=none] (8) at (3.275, 0.5) {};
		\node [style=none] (9) at (5.275, -0.75) {};
		\node [style=none] (10) at (4.275, -0.125) {};
		\node [style=none] (11) at (4.275, -1.375) {};
		\node [style=none] (12) at (3.275, -0.125) {};
		\node [style=none] (13) at (3.275, -1.375) {};
		\node [style=none] (14) at (1.5, -0.5) {};
		\node [style=none] (15) at (4.275, -0.75) {};
		\node [style=none] (16) at (3.275, -0.5) {};
		\node [style=none] (17) at (3.275, -1) {};
		\node [style=none] (18) at (3.775, 0.75) {$f$};
		\node [style=none] (19) at (3.775, -0.75) {$\overline f$};
		\node [style=wirelable] (20) at (4.9, 1) {$Y$};
		\node [style=wirelable] (21) at (4.9, -1) {$Y^*$};
		\node [style=wirelable] (22) at (1.9, 1.25) {$X$};
		\node [style=wirelable] (23) at (2.5, 0.625) {$S$};
		\node [style=none] (24) at (4.275, 1.375) {};
		\node [style=none] (25) at (4.275, 0.125) {};
		\node [style=none] (26) at (3.275, 1.375) {};
		\node [style=none] (27) at (3.275, 0.125) {};
		\node [style=none] (28) at (4.275, -1.375) {};
		\node [style=none] (29) at (4.275, -0.125) {};
		\node [style=none] (30) at (3.275, -1.375) {};
		\node [style=none] (31) at (3.275, -0.125) {};
		\node [style=none] (32) at (5.275, 1.625) {};
		\node [style=none] (33) at (1.525, 1.625) {};
		\node [style=none] (34) at (1.525, -1.625) {};
		\node [style=none] (35) at (5.275, -1.625) {};
		\node [style=wirelable] (36) at (1.9, -0.75) {$X^*$};
		\node [style=wirelable] (37) at (2.5, -1.125) {$S^*$};
	\end{pgfonlayer}
	\begin{pgfonlayer}{edgelayer}
		\draw [style=background] (35.center)
			 to (32.center)
			 to (33.center)
			 to (34.center)
			 to cycle;
		\draw [style=thick, in=0, out=180] (7.center) to (5.center);
		\draw [style=thick] (6.center) to (0.center);
		\draw [style=thick] (16.center) to (14.center);
		\draw [style=thick] (15.center) to (9.center);
		\draw [style=thick, bend left=90, looseness=2.25] (17.center) to (8.center);
		\draw [style=white] (2.center)
			 to (1.center)
			 to (3.center)
			 to (4.center)
			 to cycle;
		\draw [style=white] (10.center)
			 to (11.center)
			 to (13.center)
			 to (12.center)
			 to cycle;
		\draw [style={thick_bbox}] (25.center)
			 to (24.center)
			 to (26.center)
			 to (27.center)
			 to cycle;
		\draw [style={thick_bbox}] (28.center)
			 to (30.center)
			 to (31.center)
			 to (29.center)
			 to cycle;
	\end{pgfonlayer}
\end{tikzpicture}
  =
  \begin{tikzpicture}
	\begin{pgfonlayer}{nodelayer}
		\node [style=none] (0) at (5.275, 0.75) {};
		\node [style=none] (1) at (4.275, 1.375) {};
		\node [style=none] (2) at (4.275, 0.125) {};
		\node [style=none] (3) at (3.275, 1.375) {};
		\node [style=none] (4) at (3.275, 0.125) {};
		\node [style=none] (5) at (1.5, 1) {};
		\node [style=none] (6) at (4.275, 0.75) {};
		\node [style=none] (7) at (3.275, 1) {};
		\node [style=none] (8) at (3.275, 0.5) {};
		\node [style=none] (9) at (5.275, -0.75) {};
		\node [style=none] (10) at (4.275, -0.125) {};
		\node [style=none] (11) at (4.275, -1.375) {};
		\node [style=none] (12) at (3.275, -0.125) {};
		\node [style=none] (13) at (3.275, -1.375) {};
		\node [style=none] (14) at (1.5, -0.5) {};
		\node [style=none] (15) at (4.275, -0.75) {};
		\node [style=none] (16) at (3.275, -0.5) {};
		\node [style=none] (17) at (3.275, -1) {};
		\node [style=none] (18) at (3.775, 0.75) {$g$};
		\node [style=none] (19) at (3.775, -0.75) {$\overline g$};
		\node [style=wirelable] (20) at (4.9, 1) {$Y$};
		\node [style=wirelable] (21) at (4.9, -1) {$Y^*$};
		\node [style=wirelable] (22) at (1.9, 1.25) {$X$};
		\node [style=wirelable] (23) at (2.5, 0.625) {$T$};
		\node [style=none] (24) at (4.275, 1.375) {};
		\node [style=none] (25) at (4.275, 0.125) {};
		\node [style=none] (26) at (3.275, 1.375) {};
		\node [style=none] (27) at (3.275, 0.125) {};
		\node [style=none] (28) at (4.275, -1.375) {};
		\node [style=none] (29) at (4.275, -0.125) {};
		\node [style=none] (30) at (3.275, -1.375) {};
		\node [style=none] (31) at (3.275, -0.125) {};
		\node [style=none] (32) at (5.275, 1.625) {};
		\node [style=none] (33) at (1.525, 1.625) {};
		\node [style=none] (34) at (1.525, -1.625) {};
		\node [style=none] (35) at (5.275, -1.625) {};
		\node [style=wirelable] (36) at (1.9, -0.75) {$X^*$};
		\node [style=wirelable] (37) at (2.5, -1.125) {$T^*$};
	\end{pgfonlayer}
	\begin{pgfonlayer}{edgelayer}
		\draw [style=background] (35.center)
			 to (32.center)
			 to (33.center)
			 to (34.center)
			 to cycle;
		\draw [style=thick, in=0, out=180] (7.center) to (5.center);
		\draw [style=thick] (6.center) to (0.center);
		\draw [style=thick] (16.center) to (14.center);
		\draw [style=thick] (15.center) to (9.center);
		\draw [style=thick, bend left=90, looseness=2.25] (17.center) to (8.center);
		\draw [style=white] (2.center)
			 to (1.center)
			 to (3.center)
			 to (4.center)
			 to cycle;
		\draw [style=white] (10.center)
			 to (11.center)
			 to (13.center)
			 to (12.center)
			 to cycle;
		\draw [style={thick_bbox}] (25.center)
			 to (24.center)
			 to (26.center)
			 to (27.center)
			 to cycle;
		\draw [style={thick_bbox}] (28.center)
			 to (30.center)
			 to (31.center)
			 to (29.center)
			 to cycle;
	\end{pgfonlayer}
\end{tikzpicture}
  \quad \text{where} \quad
  \begin{tikzpicture}
	\begin{pgfonlayer}{nodelayer}
		\node [style=none] (0) at (-1.25, 0.675) {};
		\node [style=none] (1) at (1.25, 0.675) {};
		\node [style=none] (2) at (1.25, -0.625) {};
		\node [style=none] (3) at (-1.25, -0.625) {};
		\node [style=none] (5) at (-1.25, 0) {};
		\node [style=none] (6) at (1.25, 0) {};
		\node [style=wirelable] (7) at (1, 0.25) {$Y^*$};
		\node [style=gbox] (8) at (0, 0) {$\overline{f}$};
		\node [style=wirelable] (9) at (-0.875, 0.25) {$X^*$};
	\end{pgfonlayer}
	\begin{pgfonlayer}{edgelayer}
		\draw [style=background] (3.center)
			 to (0.center)
			 to (1.center)
			 to (2.center)
			 to cycle;
		\draw [style=thick] (5.center) to (6.center);
	\end{pgfonlayer}
\end{tikzpicture}
  \coloneqq
  \begin{tikzpicture}
	\begin{pgfonlayer}{nodelayer}
		\node [style=none] (0) at (0.5, 1.05) {};
		\node [style=none] (1) at (4, 1.05) {};
		\node [style=none] (2) at (4, -1) {};
		\node [style=none] (3) at (0.5, -1) {};
		\node [style=none] (4) at (1.75, 0) {};
		\node [style=none] (5) at (2.75, 0) {};
		\node [style=wirelable] (6) at (3.75, 0.25) {$Y^*$};
		\node [style=gbox] (7) at (2.25, 0) {$f^\dagger$};
		\node [style=wirelable] (8) at (0.875, 0.25) {$X^*$};
		\node [style=none] (9) at (1, 0) {};
		\node [style=none] (10) at (3.5, 0) {};
		\node [style=none] (11) at (4, 0) {};
		\node [style=none] (12) at (0.5, 0) {};
		\node [style=none] (13) at (1.75, -0.75) {};
		\node [style=none] (14) at (2.75, -0.75) {};
		\node [style=none] (15) at (1.75, 0.75) {};
		\node [style=none] (16) at (2.75, 0.75) {};
	\end{pgfonlayer}
	\begin{pgfonlayer}{edgelayer}
		\draw [style=background] (3.center)
			 to (0.center)
			 to (1.center)
			 to (2.center)
			 to cycle;
		\draw [style=thick] (12.center)
			 to (9.center)
			 to [in=-180, out=0] (13.center)
			 to (14.center)
			 to [bend right=90, looseness=1.50] (5.center)
			 to (4.center)
			 to [bend left=90, looseness=1.50] (15.center)
			 to (16.center)
			 to [in=180, out=0] (10.center)
			 to (11.center);
	\end{pgfonlayer}
\end{tikzpicture}.
\end{equation*}

        \item \textbf{all other \dag-compact-closed structure:} inherited from \(\mathsf{C}\).
    \end{itemize}
\end{definition}

There is a canonical functor \(\iota:\mathsf{C}\to \CPM(\mathsf{C})\) which
sends ``pure'' morphisms \(f:X\to Y\) to their ``doubled'' counterpart \([f,
I]:X\to Y\).  This means that for each object \(X\) the CPM construction is
adding a new state \([1_X\otimes 1_I, I]:I\to X\) that connects both halves
of this doubling.
For the example of \(\mathsf{C}\coloneqq \FHilb\), this ``doubling''
is analogous to the usual map embedding pure states into mixed states:
\(\ket{\psi} \mapsto \dyad{\psi}\). The new morphism is interpreted as adding
the (unnormalised) maximally mixed state to pure-state quantum mechanics:

\begin{theorem}[{\cite[Ex. 4.21]{selinger_dagger_2007}}]
    \(\CPM(\FHilb)\) is equivalent to the CCC of \emph{completely-positive}
    (CP) maps between matrix algebras \(\mathcal{B}(\mathcal{H})\), for
    all finite-dimensional Hilbert spaces \(\mathcal{H}\) in \(\FHilb\).
    In particular,  \([ 1_\mathcal{H} /\dim\mathcal{H},\mathcal{H}]:
    \C \to \mathcal{H}\) represents the \textbf{maximally mixed state}
    \(\rho_{\mathsf{max}}\) on \(\mathcal{B}(\mathcal{H})\).
\end{theorem}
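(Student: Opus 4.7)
The plan is to construct an explicit equivalence of \dag-compact closed categories $F : \CPM(\FHilb) \to \FCP$ by sending each morphism $[f, S]$ to the completely positive map it represents via its Stinespring/Kraus dilation, and then to verify the maximally mixed state claim by direct computation. The main technical content will be the identification of the \(\CPM\) equivalence relation with the uniqueness-up-to-isometry of Stinespring dilations (equivalently, of Kraus decompositions), which is exactly the content of Kraus's theorem.

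First I would define $F$ on objects by $F(\mathcal{H}) := \mathcal{B}(\mathcal{H})$, identified with $\mathcal{H} \otimes \mathcal{H}^*$ via the Choi--Jamio\l kowski isomorphism. On morphisms $[f, S] : \mathcal{X} \to \mathcal{Y}$, I pick any orthonormal basis $\{\ket{i}\}$ of $S$, set $K_i := f \circ (1_{\mathcal{X}} \otimes \ket{i}) : \mathcal{X} \to \mathcal{Y}$, and define $F([f, S])(\rho) := \sum_i K_i \rho K_i^\dagger$. Equivalently, $F([f, S])$ can be described in basis-free terms as the map $\mathcal{X} \otimes \mathcal{X}^* \to \mathcal{Y} \otimes \mathcal{Y}^*$ obtained by doubling $f$ to $f \otimes \bar{f}$ and bending the $S \otimes S^*$ wires with a cup, exactly as depicted in the diagram defining the equivalence relation on \(\CPM\).

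The critical observation is that well-definedness and faithfulness of $F$ both follow from a single fact: the CPM equivalence relation $(f, S) \sim (g, T)$ is defined to hold precisely when the doubled-and-capped morphisms in $\FHilb$ agree, and this doubled-and-capped morphism is nothing other than the Choi matrix of the CP map $F([f,S])$. Since Choi--Jamio\l kowski is a bijection between linear maps $\mathcal{X} \otimes \mathcal{X}^* \to \mathcal{Y} \otimes \mathcal{Y}^*$ and linear maps $\mathcal{B}(\mathcal{X}) \to \mathcal{B}(\mathcal{Y})$, equivalent pairs induce equal CP maps and conversely. This is the main obstacle of the proof --- tracing the CPM equivalence through Choi--Jamio\l kowski and confirming it coincides with equality of CP maps --- but it is really just a bookkeeping exercise once the diagrammatic conventions are set up. Fullness follows from Kraus's theorem: any CP map $\phi : \mathcal{B}(\mathcal{X}) \to \mathcal{B}(\mathcal{Y})$ admits a decomposition $\phi(\rho) = \sum_i K_i \rho K_i^\dagger$, and setting $f := \sum_i K_i \otimes \bra{i} : \mathcal{X} \otimes S \to \mathcal{Y}$ with $S = \Span\{\ket{i}\}$ gives $F([f, S]) = \phi$. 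Essential surjectivity is immediate because every finite-dimensional matrix algebra is of the form $\mathcal{B}(\mathcal{H})$. Preservation of the tensor product, symmetry, dagger, and compact-closed structure follows by directly comparing the defining formulas of $\CPM$ with the corresponding constructions on CP maps (tensor of CP maps, Hilbert--Schmidt adjoint, and the fact that the maximally entangled state implements the Choi isomorphism).

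Finally, the maximally mixed state claim is a direct calculation. Unpacking the definition, the morphism $[1_{\mathcal{H}}/\dim\mathcal{H}, \mathcal{H}] : \C \to \mathcal{H}$ has $f = 1_{\mathcal{H}}/\dim\mathcal{H}$ viewed as a map $\C \otimes \mathcal{H} \to \mathcal{H}$, so its Kraus operators from $\C$ to $\mathcal{H}$ are the scaled basis vectors $K_i = \ket{i}/\dim\mathcal{H}$, and the resulting state in $\mathcal{B}(\mathcal{H})$ is $\sum_i K_i K_i^\dagger$, which is proportional to $1_\mathcal{H}/\dim\mathcal{H} = \rho_{\mathsf{max}}$. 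Matching the scalar against the paper's chosen normalisation closes the verification.
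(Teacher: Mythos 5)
The paper gives no proof of this statement; it is cited directly to Selinger's Example~4.21, and your argument is essentially the standard one found there: identify the CPM equivalence relation with equality of the doubled-and-capped morphisms, observe that under \(\mathcal{B}(\mathcal{H})\cong\mathcal{H}\otimes\mathcal{H}^*\) these are exactly the induced linear maps on matrix algebras, and use the Kraus/Stinespring decomposition for fullness. So your route is correct and matches the intended proof.

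One small point worth pinning down rather than deferring: with the representative \([\,1_{\mathcal{H}}/\dim\mathcal{H},\,\mathcal{H}\,]\) as written in the statement, the doubling squares the scalar, so your computation yields \(\sum_i K_iK_i^\dagger = 1_{\mathcal{H}}/(\dim\mathcal{H})^2\), which has trace \(1/\dim\mathcal{H}\) rather than \(1\). The trace-one maximally mixed state is obtained from \([\,1_{\mathcal{H}}/\sqrt{\dim\mathcal{H}},\,\mathcal{H}\,]\); as stated, the representative only gives \(\rho_{\mathsf{max}}\) up to a positive scalar. Your phrase ``matching the scalar against the paper's chosen normalisation'' is therefore doing real work: either the statement intends ``represents'' projectively, or the normalisation in the statement should carry a square root. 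Everything else in your plan (well-definedness and faithfulness from the bijectivity of the identification \(\mathcal{X}\otimes\mathcal{X}^*\cong\mathcal{B}(\mathcal{X})\), fullness from Kraus, essential surjectivity on objects, structure preservation by unfolding definitions) is sound.
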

Moreover, by restricting to the stabiliser maps, we have immediately that:
\begin{lemma}
  There is a faithful \dag-compact functor \(\CPM(\Stab_p) \rightarrowtail \CPM(\FHilb)\).
\end{lemma}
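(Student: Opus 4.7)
The plan is to show faithfulness by unpacking the CPM equivalence relation and exploiting the fact that $\iota:\Stab_p \hookrightarrow \FHilb$ is, by definition, a faithful \dag-compact closed functor: $\Stab_p$ is a \dag-compact-closed \emph{subcategory} of $\FHilb$, inheriting composition, monoidal product, dagger, and the cup/cap from $\FHilb$. So as a starting point, $\iota$ preserves all of the structure used in the definition of $\CPM$ strictly, and is injective on homsets.

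First, I would note that the CPM construction is functorial on the category of \dag-compact-closed categories and \dag-compact functors: given such a functor $F : \mathsf{C} \to \mathsf{D}$, there is an induced \dag-compact functor $\CPM(F) : \CPM(\mathsf{C}) \to \CPM(\mathsf{D})$ acting as $F$ on objects and sending an equivalence class $[f, S]$ to $[F(f), F(S)]$. Well-definedness on equivalence classes holds because the relation $(f,S) \sim (g,T)$ is formulated entirely in terms of composition, tensor, dagger, and the cup morphism $\eta_S$, all of which $F$ preserves. Applying this to $\iota$ yields the candidate functor $\CPM(\iota) : \CPM(\Stab_p) \to \CPM(\FHilb)$.

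Second, for faithfulness, suppose that $\CPM(\iota)[f,S] = \CPM(\iota)[g,T]$ in $\CPM(\FHilb)$, i.e.\ $[\iota(f), \iota(S)] = [\iota(g), \iota(T)]$. By definition of the CPM equivalence relation, this means that the two ``doubled'' morphisms depicted in the statement of the equivalence coincide as morphisms in $\FHilb$. But these doubled morphisms are precisely the images under $\iota$ of the corresponding doubled morphisms built from $(f,S)$ and $(g,T)$ inside $\Stab_p$, since $\iota$ preserves composition, dagger, tensor and cups. Faithfulness of $\iota$ then forces the two doubled morphisms to already agree in $\Stab_p$, so $(f, S) \sim (g, T)$ and hence $[f,S] = [g,T]$ in $\CPM(\Stab_p)$.

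I do not expect any serious obstacle; the only subtlety is that the CPM equivalence relation involves the compact-closed structure, which must be preserved strictly rather than merely up to coherent isomorphism. This is automatic here because $\Stab_p$ is defined as a subcategory of $\FHilb$ sharing the same cups, caps and dagger, so $\iota$ is a strict \dag-compact functor and the argument above is unobstructed.
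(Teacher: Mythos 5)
Your argument is correct and is precisely the elaboration of what the paper leaves implicit: the lemma is stated there with no proof beyond the remark that it follows ``immediately'' by restriction, and the functoriality-of-$\CPM$-plus-faithfulness-of-$\iota$ argument you give is the standard way to make that immediate claim precise. The one subtlety you flag (strict preservation of cups, caps and dagger by the subcategory inclusion, so that the doubled morphisms in $\FHilb$ are literally the images of the doubled morphisms formed in $\Stab_p$) is indeed the only point that needs checking, and it holds for the reason you state.
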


\begin{example}
    The projection \(\Pi_G :\Hp \to \Hp\) onto the codespace of a stabiliser code \(G\) is a state in \(\CPM(\Stab_p)\).  Moreover, Clifford operators \(C\)  in \( \Cliff \rightarrowtail \Stab_p\to \CPM(\Stab_p)\) act on these projectors by conjugation \(C^\dag \Pi_G C =  \Pi_{C^\dag G C}\) as expected.  In particular, given a stabiliser state \(\ket{G}\), we have \( C^\dag\dyad{G}{G}C = \dyad{C^\dag G C}{C^\dag G C}\).    
\end{example}

To restrict to \emph{physical} processes, we impose the additional normalisation constraint.
\begin{definition}
  Given any \(X \in \mathsf{C}\), denote the \textbf{abstract trace} in  by the morphism  \(\Tr_X\coloneqq [\epsilon_X \circ \swap_{X,X^*}, X^*]: X\to I\) in \(\CPM(\mathsf{C})\).
  A morphism \([f,S] : X \to Y\) in \(\CPM(\mathsf{C})\) is \textbf{causal} if and only if \(\Tr_Y[f,S] = \Tr_X\), diagrammatically:
  \begin{center}
    \begin{tikzpicture}
	\begin{pgfonlayer}{nodelayer}
		\node [style=none] (0) at (4.775, 0.75) {};
		\node [style=none] (1) at (4.275, 1.375) {};
		\node [style=none] (2) at (4.275, 0.125) {};
		\node [style=none] (3) at (3.275, 1.375) {};
		\node [style=none] (4) at (3.275, 0.125) {};
		\node [style=none] (5) at (1.5, 1) {};
		\node [style=none] (6) at (4.275, 0.75) {};
		\node [style=none] (7) at (3.275, 1) {};
		\node [style=none] (8) at (3.275, 0.5) {};
		\node [style=none] (9) at (4.775, -0.75) {};
		\node [style=none] (10) at (4.275, -0.125) {};
		\node [style=none] (11) at (4.275, -1.375) {};
		\node [style=none] (12) at (3.275, -0.125) {};
		\node [style=none] (13) at (3.275, -1.375) {};
		\node [style=none] (14) at (1.5, -0.5) {};
		\node [style=none] (15) at (4.275, -0.75) {};
		\node [style=none] (16) at (3.275, -0.5) {};
		\node [style=none] (17) at (3.275, -1) {};
		\node [style=none] (18) at (3.775, 0.75) {$g$};
		\node [style=none] (19) at (3.775, -0.75) {$\overline g$};
		\node [style=wirelable] (20) at (4.9, 1) {$Y$};
		\node [style=wirelable] (21) at (4.9, -1) {$Y^*$};
		\node [style=wirelable] (22) at (1.9, 1.25) {$X$};
		\node [style=wirelable] (23) at (2.5, 0.625) {$T$};
		\node [style=none] (24) at (4.275, 1.375) {};
		\node [style=none] (25) at (4.275, 0.125) {};
		\node [style=none] (26) at (3.275, 1.375) {};
		\node [style=none] (27) at (3.275, 0.125) {};
		\node [style=none] (28) at (4.275, -1.375) {};
		\node [style=none] (29) at (4.275, -0.125) {};
		\node [style=none] (30) at (3.275, -1.375) {};
		\node [style=none] (31) at (3.275, -0.125) {};
		\node [style=none] (32) at (8.025, 1.625) {};
		\node [style=none] (33) at (1.525, 1.625) {};
		\node [style=none] (34) at (1.525, -1.625) {};
		\node [style=none] (35) at (8.025, -1.625) {};
		\node [style=wirelable] (36) at (1.9, -0.75) {$X^*$};
		\node [style=wirelable] (37) at (2.5, -1.125) {$T^*$};
		\node [style=none] (38) at (6.775, -0.75) {};
		\node [style=none] (39) at (6.775, 0.75) {};
		\node [style=none] (72) at (12.875, 1.625) {};
		\node [style=none] (73) at (9.125, 1.625) {};
		\node [style=none] (74) at (9.125, -1.625) {};
		\node [style=none] (75) at (12.875, -1.625) {};
		\node [style=none] (80) at (8.575, 0) {\(=\)};
		\node [style=none] (81) at (9.625, 0.75) {};
		\node [style=none] (82) at (9.125, 0.75) {};
		\node [style=none] (83) at (9.625, -0.75) {};
		\node [style=none] (84) at (9.125, -0.75) {};
		\node [style=wirelable] (85) at (9.75, 1) {$Y$};
		\node [style=wirelable] (86) at (9.75, -1) {$Y^*$};
		\node [style=none] (87) at (11.625, -0.75) {};
		\node [style=none] (88) at (11.625, 0.75) {};
	\end{pgfonlayer}
	\begin{pgfonlayer}{edgelayer}
		\draw [style=background] (35.center)
			 to (32.center)
			 to (33.center)
			 to (34.center)
			 to cycle;
		\draw [style=thick, in=0, out=180] (7.center) to (5.center);
		\draw [style=thick] (15.center)
			 to (9.center)
			 to [in=180, out=0] (39.center)
			 to [bend left=90, looseness=1.50] (38.center)
			 to [in=360, out=180] (0.center)
			 to (6.center);
		\draw [style=thick] (16.center) to (14.center);
		\draw [style=thick, bend left=90, looseness=2.25] (17.center) to (8.center);
		\draw [style=white] (2.center)
			 to (1.center)
			 to (3.center)
			 to (4.center)
			 to cycle;
		\draw [style=white] (10.center)
			 to (11.center)
			 to (13.center)
			 to (12.center)
			 to cycle;
		\draw [style={thick_bbox}] (25.center)
			 to (24.center)
			 to (26.center)
			 to (27.center)
			 to cycle;
		\draw [style={thick_bbox}] (28.center)
			 to (30.center)
			 to (31.center)
			 to (29.center)
			 to cycle;
		\draw [style=background] (74.center)
			 to (75.center)
			 to (72.center)
			 to (73.center)
			 to cycle;
		\draw [style=background] (75.center) to (72.center);
		\draw [style=background] (72.center) to (73.center);
		\draw [style=background] (73.center) to (74.center);
		\draw [style=background] (74.center) to (75.center);
		\draw [style=thick] (84.center)
			 to (83.center)
			 to [in=180, out=0] (88.center)
			 to [bend left=90, looseness=1.50] (87.center)
			 to [in=360, out=180] (81.center)
			 to (82.center);
	\end{pgfonlayer}
\end{tikzpicture}
  \end{center}
  Denote the symmetric monoidal subcategory of causal morphisms  by \(\Caus(\CPM(\mathsf{C}))\).
\end{definition}
Concretely, the morphisms \(\Tr_{\mathcal H}\) in \(\CPM(\FHilb)\), are given by the linear-algebraic trace in \(\FHilb\).
In the language of operator algebras:
\begin{corollary}
  \(\Caus(\CPM(\FHilb))\) is equivalent to the SMC of completely-positive \textbf{trace-preserving} (CPTP) maps between matrix algebras \(\mathcal{B}(\mathcal{H})\) for all \(\mathcal{H} \in \FHilb\).
  In particular \(\Tr_{\mathcal{H}}:\mathcal{H}\to \C\) represents the matrix trace on \(\mathcal{B}(\mathcal{H})\).
\end{corollary}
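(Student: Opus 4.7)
The plan is to reduce this to the Selinger characterization of \(\CPM(\FHilb)\) recorded in the preceding theorem, and then identify the abstract trace \(\Tr_\mathcal{H}\) with the matrix trace so that causality translates into trace preservation.

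First, I would invoke the symmetric monoidal equivalence \(E : \CPM(\FHilb) \xrightarrow{\simeq} \mathsf{CP}\), where \(\mathsf{CP}\) denotes the SMC of CP maps between finite-dimensional matrix algebras \(\mathcal{B}(\mathcal{H})\). Under \(E\), an object \(\mathcal{H}\) is sent to \(\mathcal{B}(\mathcal{H}) \cong \mathcal{H}\otimes \mathcal{H}^*\), and a morphism \([f,S]:X\to Y\) is sent to the CP map obtained by doubling \(f\) against its conjugate \(\bar f\) and contracting the environment \(S\).

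Next, I would verify that \(E(\Tr_\mathcal{H})\) is the matrix trace functional \(\rho \mapsto \Tr(\rho)\). The key observation, already recorded in section~\ref{section:categories}, is that in \(\FHilb\) the morphism \(\epsilon_\mathcal{H} \circ \swap_{\mathcal{H},\mathcal{H}^*} : \mathcal{H}\otimes \mathcal{H}^* \to \C\) sends \(\ket{j}\otimes \bra{k} \mapsto \bra{k}\ket{j}\), i.e.\ it is the ``quantum trace'' which agrees with the matrix trace under the identification \(\mathcal{B}(\mathcal{H}) \cong \mathcal{H}\otimes \mathcal{H}^*\). Since \(\Tr_\mathcal{H} = [\epsilon_\mathcal{H} \circ \swap_{\mathcal{H},\mathcal{H}^*}, \mathcal{H}^*]\), its image under the doubling sees this pure-side trace on each leg while the ancilla \(\mathcal{H}^*\) is absorbed by the compact-closed structure of the CPM equivalence class. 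A short basis computation confirms that the resulting linear functional on \(\mathcal{B}(\mathcal{H})\) is again the matrix trace.

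With these ingredients in place, the causality equation \(\Tr_Y \circ [f,S] = \Tr_X\) becomes, by functoriality of \(E\), the identity \(\Tr \circ \Phi_{[f,S]} = \Tr\) on \(\mathcal{B}(\mathcal{H}_X)\), which is exactly the trace-preservation condition defining CPTP. Restricting \(E\) to causal morphisms therefore yields a symmetric monoidal equivalence between \(\Caus(\CPM(\FHilb))\) and the SMC of CPTP maps, since \(\Caus(-)\) is by definition a symmetric monoidal subcategory and \(E\) is symmetric monoidal. The main technical point is the identification of \(E(\Tr_\mathcal{H})\) with the matrix trace: this requires careful bookkeeping around Selinger's doubling convention, but once the cap-plus-swap computation from section~\ref{section:categories} is invoked on the pure side, the mixed-side statement follows formally.
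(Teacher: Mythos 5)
Your proposal is correct and follows exactly the route the paper intends: the paper states this as an unproved corollary of the preceding theorem (\(\CPM(\FHilb)\simeq\) CP maps between matrix algebras) together with the remark that the abstract trace \(\Tr_{\mathcal H}=[\epsilon_{\mathcal H}\circ\swap_{{\mathcal H},{\mathcal H}^*},{\mathcal H}^*]\) is concretely the linear-algebraic trace, and your Kraus-operator computation (\(K_j=\bra{j}\), so the doubled map is \(\rho\mapsto\sum_j\bra{j}\rho\ket{j}=\Tr(\rho)\)) is precisely the verification that is left implicit. Nothing is missing.
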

In other words, these are CP maps which preserve the trace norm, in analogy to how Markov processes preserve the \(L^1\) norm.
\begin{example}
The trace-normalisation \(\rho_G = \Pi_G/\Tr(\Pi_G)\) of the projector \(\Pi_G\) onto the code space of a stabiliser code \(G\) is completely positive and trace preserving; whereas without the normalisation factor, the projector \(\Pi_G\) is only trace-preserving when \(\Tr(\Pi_G)=1\), ie. when \(G\) is a stabiliser group.  
\end{example}

\subsection{Mixed stabiliser maps as affine coisotropic relations}
In this subsection we apply the CPM construction to the category of stabiliser maps and the category affine Lagrangian relations, giving two  compositional semantics for stabiliser codes: one in terms of mixed-state stabiliser quantum mechanics and the other in terms of symplectic linear algebra.  In particular, we show that applying this construction to the category affine Lagrangian relations produces the category of affine \emph{coisotropic} relations:
\begin{definition}
  The \dag-CCC, \(\ACR\), of \textbf{affine coisotropic relations}
  has the same structure as \(\ALR\), where now the morphisms \((V,\omega_V) \to
  (W,\omega_W)\) are affine coisotropic subspaces of \((V\oplus W,-\omega_V \oplus
  \omega_W)\).
\end{definition}

It is well-understood that “phaseless” stabiliser codes are in
bijection with isotropic subspaces~\cite{gross_finite_2005}, and hence
also with coisotropic subspaces via the symplectic complement. However,
we shall see that once the Pauli phases \(\chi(a)\) are reintroduced as in
equation~\eqref{eq:symplectic_presentation}, this second bijection breaks down:
in the following paragraphs we show that phased stabiliser codes correspond
exactly to affine coisotropic subspaces but \emph{not} to affine isotropic
subspaces. Thus, affine coisotropic relations are the correct algebraic setting
for the stabiliser theory with non-maximal stabiliser groups. To prove this
correspondence, we first must understand the basic structure of coisotropic
subspaces, starting with the symplectic analogue of the maximally mixed state:

\begin{example}
    The total subspace can be regarded as an affine coisotropic relation: 
    \begin{equation}\im_{(V,\omega_V)}\coloneqq \{(0,\vb{v}) \ | \ \forall \vb{v} \in V \}:(\F_p^0, \omega_0)\rightarrowtail (V,\omega_V)\end{equation}
\end{example}

We use the notation \(\im_{(V,\omega_V)}\) because postcomposition with an affine Lagrangian, or affine coisotropic relation \(R:(V,\omega_V)\to (W,\omega_W)\) is identified with the set-theoretic image:
\begin{equation}
     R \circ \im_{(V,\omega_V)} = \{(0, \vb{w}) \ | \ \exists \vb{v}: (\vb{v}, \vb{w}) \in R\} = \Zp^0 \oplus \im(R) \cong \im(R)
\end{equation}

Adding the image as a generator to \(\ALR\) yields \(\ACR\). The proof of this fact relies on the following result, which mirrors the construction of a Stinespring dilation of completely positive maps between finite-dimensional Hilbert spaces, but in the symplectic setting:
\begin{proposition}
\label{proposition:isotropic_kernel}
  Every non-empty affine coisotropic subspace of \((\F_p^{2n}, \omega_n)\) of dimension \(2n-k\) is the image
  of an affine Lagrangian isometry \((\F_p^{2(n-k)},\omega_{n-k}) \rightarrowtail (\F_p^{2n},\omega_n)\).
\end{proposition}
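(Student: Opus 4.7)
The approach is a symplectic analogue of Stinespring dilation: I would realise the given affine coisotropic subspace as the image of an affine Lagrangian relation built from a symplectic splitting. Let $C + \vb{a}$ denote the given affine coisotropic subspace with $\dim C = 2n-k$, and set $D \coloneqq C^{\omega_n}$, so that $D$ is isotropic of dimension $k$ and, by coisotropy of $C$, $D \subseteq C$.

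The main geometric step is to choose any vector-space complement $V$ to $D$ inside $C$, giving $C = V \oplus D$ with $\dim V = 2(n-k)$, and to verify that $\omega_n|_V$ is automatically non-degenerate. Indeed, if $v \in V$ is in the radical of $\omega_n|_V$, then for every $c = v' + d \in C$ one has $\omega_n(v,c) = \omega_n(v,v') + \omega_n(v,d) = 0$, where the second term vanishes because $d \in C^{\omega_n}$; hence $v \in C^{\omega_n} \cap V = D \cap V = 0$. Thus $(V, \omega_n|_V)$ is a symplectic subspace of dimension $2(n-k)$, and Lemma~\ref{lem:darboux} yields a symplectomorphism $\phi : (\F_p^{2(n-k)}, \omega_{n-k}) \xrightarrow{\cong} (V, \omega_n|_V)$.

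Second, I would define the candidate affine relation
\[
  f \coloneqq \{\, (\vb{x},\ \phi(\vb{x}) + \vb{d} + \vb{a}) \ \mid \ \vb{x} \in \F_p^{2(n-k)},\ \vb{d} \in D \,\} \ \subseteq\  \F_p^{2(n-k)} \oplus \F_p^{2n}.
\]
Its image under the second projection is $\phi(\F_p^{2(n-k)}) + D + \vb{a} = V + D + \vb{a} = C + \vb{a}$, as required. The linear part $L$ of $f$ has dimension $2(n-k) + k = 2n-k$, exactly half of the ambient dimension $4n-2k$, and isotropy of $L$ with respect to $-\omega_{n-k} \oplus \omega_n$ follows from the cancellation $-\omega_{n-k}(\vb{x}_1,\vb{x}_2) + \omega_n(\phi(\vb{x}_1),\phi(\vb{x}_2)) = 0$ (since $\phi$ is symplectic), together with $\omega_n(V,D) = 0$ (because $V \subseteq C$ and $D = C^{\omega_n}$) and $\omega_n(D,D) = 0$ (by isotropy of $D$). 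Hence $f$ is affine Lagrangian.

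Finally, the isometry condition $f^\dag \circ f = 1_{\F_p^{2(n-k)}}$ reduces to two checks: surjectivity of the projection $f \to \F_p^{2(n-k)}$ (immediate, take $\vb{d} = \vb{0}$), and triviality of $L \cap (\F_p^{2(n-k)} \oplus \vb{0})$, which holds because $(\vb{x},\vb{0}) \in L$ forces $\phi(\vb{x}) = -\vb{d} \in V \cap D = 0$ and hence $\vb{x} = \vb{0}$. The only nontrivial step is the non-degeneracy of $\omega_n|_V$ in a complement of $C^{\omega_n}$ inside $C$; once that structural observation is in place, the dimension arithmetic and the verifications of the Lagrangian and isometry conditions are forced by the shape of the construction.
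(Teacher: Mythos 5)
Your proof is correct, but it takes a noticeably more direct route than the paper's. The paper first reduces to the linear case, extends a basis of the radical $S^{\omega_n}$ to a symplectic basis of the whole ambient space (citing Rudolph), normal-forms via a global symplectomorphism $\phi$ of $(\F_p^{2n},\omega_n)$, builds the desired map as the dagger of a composite Lagrangian \emph{coisometry}, computes its kernel, and finally recovers $S$ as $\im(C^\dag)$ via the double symplectic complement $(S^{\omega})^{\omega}=S$. You instead work entirely inside the coisotropic subspace: you choose a complement $V$ of the radical $D=C^{\omega_n}$ within $C$, observe that $\omega_n|_V$ is non-degenerate (your radical computation is the key structural point, and it is right), apply Lemma~\ref{lem:darboux} only to the $2(n-k)$-dimensional subspace $V$, and then write the affine Lagrangian isometry explicitly as $\{(\vb{x},\,\phi(\vb{x})+\vb{d}+\vb{a})\}$, verifying the Lagrangian condition by a dimension count plus isotropy, and the isometry condition $f^\dag\circ f = 1$ by totality plus triviality of $L\cap(\F_p^{2(n-k)}\oplus\vb{0})$. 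Both arguments implement the same underlying geometry (linear symplectic reduction $C/C^{\omega_n}$), but yours buys an explicit, self-contained formula for the isometry and handles the affine shift $\vb{a}$ inline, at the cost of a choice of complement; the paper's buys a canonical normal form and reuses the coisometry/kernel machinery, deferring the affine case to a one-line remark. All of your verifications (the dimension arithmetic, the cancellation $-\omega_{n-k}(\vb{x}_1,\vb{x}_2)+\omega_n(\phi(\vb{x}_1),\phi(\vb{x}_2))=0$, the vanishing of the cross terms via $\omega_n(V,D)=0$ and $\omega_n(D,D)=0$, and the two isometry checks) are sound.
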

\begin{proof}
  We prove the claim for linear Lagrangian coisometries and coisotropic linear subspaces, after which the affine generalisation follows immediately.

  Take a linear coisotropic subspace \(S\subseteq (\F_p^{2n},\omega_n)\) of
  dimension \(2n-k\). Then its symplectic complement \(S^{\omega_n}\subseteq
  (\F_p^{2n},\omega_n)\) is an isotropic subspace of dimension
  \(k\). Consider a basis of \(S^{\omega_n}\) which by \cite[Exercise
  7.2.2-(a)]{rudolph_linear_2013} extends to a symplectic basis of
  \((\F_p^{2n}, \omega_n)\). By \cite[Corollary 7.1.5]{rudolph_linear_2013},
  this yields a symplectomorphism \(\phi : (\F_p^{2n}, \omega_n)
  \smash{\xrightarrow{\cong}} (\F_p^{2n}, \omega_n)\) such that \(\phi(S^\omega)
  = \{(\vb{x}, \vb{0}_{2n-k}) \mid \vb{x} \in \F_p^k\}\). Remark that \(D
  \coloneqq \{(\vb{x}, \vb{0}_k) \mid \vb{x} \in \F_p^k\}\subseteq(\F_p^{2k},
  \omega_k)\) is a Lagrangian subspace, inducing a Lagrangian coisometry \(E\coloneqq
  D \oplus \F_p^0 : (\F_p^{2k}, \omega_k)\twoheadrightarrow (\F_p^0,
  \omega_0)\). Therefore the following composite of Lagrangian coisometries
  is itself a Lagrangian coisometry:
  \begin{equation}
    C \coloneqq (E \oplus 1_{(\F_p^{2(n-k)}, \omega_{n-k})})\circ \Gr(\phi):(\F_p^{2n},\omega_n)\twoheadrightarrow (\F_p^{2(n-k)},\omega_{n-k}).
  \end{equation}
  Taking the kernel of this Lagrangian coisometry, we have:
  \begin{align}
  \F_p^0\oplus \ker{C} &= C^\dag(\{\vb{0}_{2(n-k)}\}) 
          = (\Gr(\phi)^\dag\circ (E^\dag \oplus 1_{(\F_p^{2(n-k)}, \omega_{n-k})}))(\{\vb{0}_{2(n-k)}\}) \\
          &= \Gr(\phi)^\dag\circ \{(\bullet, (\vb{d}, \vb{0}_{2(n-k)}))\ | \ \vb{d} \in D \}\\
          &= \Gr(\phi)^\dag\circ \{(\bullet, (\vb{x}, \vb{0}_{2n-k}))\ | \ \vb{x} \in \F_p^k \}\\
          &= \{(\bullet, \phi^{-1}(D\oplus \{ \vb{0}_{2n-k}\}))\}
          = \{(\bullet, \vb{s})  \ | \ \vb{s} \in S^{\omega_n}\} = \F_p^0 \oplus S^{\omega_n}.
  \end{align}
  Finally, by taking the double symplectic complement of \(\F_p^0 \oplus S\), we have the desired result
  \begin{align}
    \F_p^0 \oplus S &= ((\F_p^0 \oplus S)^{-\omega_0\oplus\omega_n})^{-\omega_0\oplus\omega_n} = (\F_p^0\oplus \ker{C})^{-\omega_0\oplus\omega_n}\\
    &= \F_p^0\oplus \im(C^\dag)=  C^\dag \circ \im_{(\F_p^{2(n-k)},\omega_{n-k})}. 
  \end{align}
\end{proof}

\begin{theorem}
  There is a \dag-compact isomorphism \(\CPM(\LR) \cong \CR\) sending:
  \begin{center}
        \([f,(S, \omega_S)]\quad \longmapsto\quad f\circ (1 \oplus \im_{(S,\omega_S)}) \)
  \end{center}
\end{theorem}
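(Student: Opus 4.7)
The plan is to show that $F:\CPM(\ALR) \to \ACR$ defined on objects by the identity and on morphisms by $[f,(S,\omega_S)] \mapsto f \circ (1 \oplus \im_{(S,\omega_S)})$ is an isomorphism of $\dagger$-compact-closed categories. The driving insight is that in $\ALR$ the dagger is relational converse and the canonical conjugate of an affine Lagrangian relation coincides with the relation itself viewed inside the conjugate symplectic space. Consequently the ``doubling'' at the core of the CPM construction degenerates for $\ALR$, and tracing out the ancilla reduces to set-theoretic projection, which is exactly composition with $\im_{(S,\omega_S)}$.

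First, I would verify well-definedness and functoriality of $F$. Unpacking the CPM equivalence $(f, S) \sim (g, T)$ using $(V,\omega_V)^* = (V, -\omega_V)$, the diagonal cup and converse dagger, the doubled-and-capped equation reduces to equality of the projections of $f$ and $g$ onto $V \oplus W$ after existentially quantifying their respective ancilla variables---precisely $F([f,S]) = F([g,T])$. Composition in $\CPM(\ALR)$ contracts two ancillas through a cap, and relational composition of images commutes with this operation. Preservation of $\otimes$, the $\dagger$ and the compact-closed structure is a direct verification, since both categories inherit their structure from $\Rel$ and $F$ differs from the identity only by forgetting the ancilla.

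For fullness of $F$ on hom-sets, take any affine coisotropic relation $C \subseteq (V \oplus W, -\omega_V \oplus \omega_W)$. By Proposition~\ref{proposition:isotropic_kernel} there exists an affine Lagrangian isometry $g : (S', \omega_{S'}) \rightarrowtail (V \oplus W, -\omega_V \oplus \omega_W)$ with $\im(g) = C$. The graph of $g$ rearranged as a subspace of $V \oplus (S',-\omega_{S'}) \oplus W$ is an affine Lagrangian relation $f : (V, \omega_V) \oplus (S', -\omega_{S'}) \to (W, \omega_W)$, and by construction $F([f, (S', -\omega_{S'})]) = C$.

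The main obstacle will be faithfulness: showing that if $F([f, S]) = F([g, T])$ then $(f, S) \sim (g, T)$ in the CPM sense. This is a symplectic analogue of the uniqueness part of Stinespring's dilation theorem. The strategy is to take the Lagrangian dilation constructed in the proof of Proposition~\ref{proposition:isotropic_kernel} as a canonical minimal form, and show that any two dilations of a fixed coisotropic $C$ are intertwined by an affine Lagrangian isomorphism between their ancillas---precisely the ambiguity absorbed by the cap in the CPM equivalence relation. Once such minimal dilations are classified, the inverse functor $\ACR \to \CPM(\ALR)$ is defined by sending $C$ to the equivalence class of any of its dilations, and $F$ is bijective on hom-sets.
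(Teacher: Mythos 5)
Your overall architecture matches the paper's: the functor is identity on objects, structure preservation is treated as routine, and \(\im\)-surjectivity comes from Proposition~\ref{proposition:isotropic_kernel}. The paper, however, exploits compact closure to reduce everything to \emph{states}, so it never needs your explicit wire-bending argument for fullness on general hom-sets (though that argument is fine, up to a sign: the rearranged ancilla carries \(\omega_{S'}\), not \(-\omega_{S'}\)). The real divergence is in the injectivity/faithfulness step. The paper's device is the diagonal observation that \(\vb{v}\in\im(L)\) iff \((\vb v,\vb v)\) lies in the doubled-and-capped representative, which shows the coisotropic image is recoverable from the CPM datum; you instead propose a Stinespring-style classification of all Lagrangian dilations of a fixed coisotropic up to ancilla intertwiners.

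That proposed route is where the gap sits. First, your well-definedness paragraph already asserts that the doubled-and-capped equation ``reduces to'' equality of ancilla-projected relations, which is precisely the faithfulness statement you later flag as the main obstacle --- so as written the argument is circular, and the reduction is false as an identification of relations: the doubled diagram \(\{(\vb v,\vb w)\mid\exists \vb s\colon (\vb s,\vb v)\in L,\ (\vb s,\vb w)\in L\}\) is a priori finer than \(\im(L)\). Second, the intertwiner classification as stated is too strong: two dilations of the same coisotropic need not have isomorphic ancillas (pad one with extra traced-out registers), so no affine Lagrangian \emph{isomorphism} between ancillas exists in general; you would need coisometries and a zigzag, which is exactly the delicate part of Stinespring uniqueness and is not carried out. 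The fix that closes the gap without any classification is a direct computation: for a nonempty affine Lagrangian \(L\subseteq S\oplus V\), the Lagrangian condition forces \(\{\vb u\mid(\vb 0,\vb u)\in \mathrm{lin}(L)\}\) to equal the symplectic complement of the linear part of \(\im(L)\), whence the doubled-and-capped relation equals \(\{(\vb v,\vb v+\vb u)\mid \vb v\in\im(L),\ \vb u\in(\mathrm{lin}\,\im(L))^{\omega}\}\) and is therefore a function of \(\im(L)\) alone. With that identity in hand, faithfulness is immediate and your inverse functor is well defined; without it, your plan remains a plausible but unexecuted sketch of the hardest step.
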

\begin{proof}
  This assignment is clearly functorial and identity-on-objects, and preserves the \dag-compact-closed structure. Moreover, since both \(\CPM(\LR)\) and \(\CR\) are compact-closed, it suffices to
  prove that the states in both categories are in canonical bijection. We already have surjectivity by  proposition~\ref{proposition:isotropic_kernel}, so that all we need to prove is injectivity.
  
  Given Lagrangian relations \(L : (S,\omega_S) \to (V,\omega_V)\) and \(M : (T,\omega_T) \to (V,\omega_V)\) such that
  \(\im(L) \neq \im(M)\), then
  \begin{equation}
    \vb{v} \in
    \im(L)
    \qq{if and only if}
    \begin{bmatrix} \vb{v} \\ \vb{v} \end{bmatrix} \in \interp{{\begin{tikzpicture}
	\begin{pgfonlayer}{nodelayer}
		\node [style=none] (0) at (1.25, 1.125) {};
		\node [style=none] (1) at (-1, 1.125) {};
		\node [style=none] (2) at (-1, -1.125) {};
		\node [style=none] (3) at (1.25, -1.125) {};
		\node [style={thick_box}] (4) at (0, 0.5) {$L$};
		\node [style={thick_box}] (5) at (0, -0.5) {$\overline{L}$};
		\node [style=none] (6) at (1.25, 0.5) {};
		\node [style=none] (7) at (1.25, -0.5) {};
		\node [style=wirelable] (8) at (0.825, 0.75) {$V$};
		\node [style=wirelable] (9) at (0.95, -0.25) {$V^*$};
		\node [style=none] (12) at (-0.25, -0.5) {};
		\node [style=none] (13) at (-0.25, 0.5) {};
	\end{pgfonlayer}
	\begin{pgfonlayer}{edgelayer}
		\draw [style=background] (2.center)
			 to (3.center)
			 to [in=270, out=90] (0.center)
			 to (1.center)
			 to cycle;
		\draw [style=thick] (4) to (6.center);
		\draw [style=thick] (5) to (7.center);
		\draw [style=thick] (5)
			 to (12.center)
			 to [bend right=270, looseness=1.50] (13.center)
			 to (4);
	\end{pgfonlayer}
\end{tikzpicture}}}
    = \left\{\begin{bmatrix} \vb{v} \\ \vb{w} \end{bmatrix} \;\middle|\; \exists \vb{s}: \begin{array}{l} (\vb{s},\vb{v}) \in L \\ (\vb{s},\vb{w}) \in \overline{L} \end{array} \right\}.
  \end{equation}
  However,  assumption there is some \(\vb{v} \in V\) such that \(\vb{v} \in
  \im(L)\) and \(\vb{v} \notin \im(M)\), so that
  \([L,(S,\omega_S)] \neq [M, (T, \omega_T)]\).
\end{proof}

\begin{corollary}
  There is a \dag-compact isomorphism \(\CPM(\ALR) \cong \ACR\) sending:
  \begin{center}
        \([f,(S, \omega_S)]\quad \longmapsto\quad f\circ (1 \oplus \im_{(S,\omega_S)}) \)
  \end{center}
\end{corollary}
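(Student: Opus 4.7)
The plan is to adapt the proof of the preceding theorem, upgrading the key surjectivity input from the linear to the affine setting. The assignment \([f,(S,\omega_S)] \mapsto f \circ (1 \oplus \im_{(S,\omega_S)})\) is identity-on-objects and manifestly functorial. Preservation of the \dag-compact closed structure should be immediate because \(\CPM(\ALR)\) and \(\ACR\) inherit this structure in the same way from \(\ALR\), with daggers, cups, caps and the monoidal product all defined on the same underlying subspaces.

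Since both categories are compact closed, I would reduce to proving a bijection on states \(I \to (V,\omega_V)\), i.e.\ on affine coisotropic subspaces of \((V,\omega_V)\). For surjectivity, Proposition~\ref{proposition:isotropic_kernel} is already stated in the affine setting: every non-empty affine coisotropic subspace of \((\F_p^{2n},\omega_n)\) is the image of some affine Lagrangian isometry. The empty affine coisotropic subspace is handled separately as the image of the empty affine Lagrangian state (which is vacuously coisotropic). Hence every affine coisotropic state lies in the image of the functor.

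For injectivity, I expect the diagrammatic argument of the preceding theorem to transport verbatim. Given affine Lagrangian relations \(L:(S,\omega_S)\to(V,\omega_V)\) and \(M:(T,\omega_T)\to(V,\omega_V)\) with \(\im(L)\neq\im(M)\), one chooses a witness \(\vb{v}\in\im(L)\setminus\im(M)\) (or vice versa). The pair \((\vb{v},\vb{v})\) should then lie in the doubled interpretation of \([L,(S,\omega_S)]\) but not in that of \([M,(T,\omega_T)]\), distinguishing the two CPM equivalence classes as morphisms into \((V\oplus V,-\omega_V\oplus\omega_V)\).

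The main point to verify carefully is that affine shifts are transmitted faithfully through the CPM doubling, since this is the only feature genuinely absent from the linear case. Concretely, the conjugate relation \(\overline{L}\) obtained by bending \(L\) with caps must carry an affine shift compatible with its linear component and the cup/cap evaluation, so that the doubled state's image honestly coincides with \(\{(\vb{v},\vb{v})\mid \vb{v}\in\im(L)\}\). This reduces to checking that the relational converse in \(\ALR\) interacts correctly with affine shifts, and that the cups \(\eta_{(V,\omega_V)}\) are themselves linear (no shift), so that the affine shift of \(L\cdot\overline{L}\) is determined entirely by that of \(L\); once this bookkeeping is done, the linear-case injectivity argument applies unchanged.
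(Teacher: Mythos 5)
Your argument is correct, but it takes a genuinely different route from the paper. The paper's own proof is a one-line reduction to the linear case: since every affine Lagrangian (resp.\ coisotropic) relation factors as a linear one composed with a translation, and translations are invertible, the isomorphism \(\CPM(\LR)\cong\CR\) established in the preceding theorem extends immediately to the affine setting. You instead re-run the entire argument of that theorem directly in the affine category: functoriality, reduction to states via compact closure, surjectivity from Proposition~\ref{proposition:isotropic_kernel} (which, as you note, is already stated affinely, with its proof deferring the affine generalisation in exactly the same spirit as the paper's corollary), and the diagrammatic injectivity argument with an explicit check that affine shifts survive the CPM doubling. Both routes work. The paper's reduction is more economical and isolates the single new ingredient (invertibility of translations); your version is more self-contained and has the virtue of making the affine-shift bookkeeping explicit --- which is precisely the point the one-line proof silently relies on --- at the cost of re-verifying steps that the reduction gets for free. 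Your handling of the empty coisotropic subspace as the image of the empty affine Lagrangian state is a detail the paper glosses over and is worth keeping.
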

\begin{proof}
  This follows immediately by composing with translations (which are invertible).
\end{proof}
  
This allows us to conclude that affine coisotropic relations are a projective representation of mixed stabiliser maps; i.e. the processes generated by stabiliser maps in addition to the maximally mixed state:
\begin{corollary}
  \label{corollary:cpm_acr}
  There is an essentially surjective, full, \dag-compact functor \(\rel:\CPM(\Stab_p)\twoheadrightarrow \ACR\).
  Quotienting by scalars yields an equivalence \( \Proj(\CPM(\Stab_p))\cong \ACR\).
\end{corollary}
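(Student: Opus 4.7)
The approach is to leverage the functoriality of the CPM construction together with the two main results already in hand: Theorem~\ref{theorem:stab_acr}, which gives the essentially surjective, full \dag-compact functor \(\rel:\Stab_p\to\ALR\) (with projective equivalence \(\Proj(\Stab_p)\simeq\ALR\)), and the \dag-compact isomorphism \(\CPM(\ALR)\cong\ACR\) established in the preceding corollary. The plan is to apply \(\CPM\) to \(\rel\) and then post-compose with this isomorphism.

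First I would invoke the well-known fact that \(\CPM\) is a 2-functor on \dag-compact closed categories, so applying it to the pure-stabiliser functor yields a \dag-compact functor \(\CPM(\rel):\CPM(\Stab_p)\to\CPM(\ALR)\). Composition with the isomorphism \(\CPM(\ALR)\cong\ACR\) gives the claimed functor to \(\ACR\). Essential surjectivity is immediate because the CPM construction is identity-on-objects, so \(\CPM(\rel)\) is essentially surjective on objects since \(\rel\) is. For fullness, consider a morphism \([g,S]:X\to Y\) in \(\CPM(\ALR)\); by essential surjectivity of \(\rel\) there exists \(S'\in\Stab_p\) with an isomorphism \(\alpha:\rel(S')\xrightarrow{\cong}S\) in \(\ALR\), and by fullness of \(\rel\) we can lift \(g\circ(1_X\otimes\alpha)\) to some morphism \(g':X'\otimes S'\to Y'\) in \(\Stab_p\), where \(X',Y'\) are chosen preimages of \(X,Y\). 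The class \([g',S']\) then maps to \([g,S]\), using the standard fact that \(S\)-rewriting in \(\CPM\) is insensitive to isomorphism of the ancilla.

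For the projective equivalence, note that the monoid of scalars of \(\ACR\) is trivial: endomorphisms of the unit are affine coisotropic subspaces of \((\F_p^0,\omega_0)\oplus(\F_p^0,\omega_0)=(\F_p^0,\omega_0)\), of which there is a unique one. Every invertible scalar \(\lambda\in\C^{\times}\) in \(\Stab_p\) gives a scalar \([\lambda\bar\lambda,I]\) in \(\CPM(\Stab_p)\), which is sent by our functor to the unique scalar of \(\ACR\) and is therefore identified with the identity in \(\Proj(\CPM(\Stab_p))\). Conversely, since every scalar in \(\CPM(\Stab_p)\) is of the form \([f,S]\) with \(f\) a pure scalar in \(\Stab_p\), and the kernel of \(\rel\) on pure scalars consists exactly of the invertible scalars (by Theorem~\ref{theorem:stab_acr}), we conclude that quotienting \(\CPM(\Stab_p)\) by its invertible scalars gives a faithful functor into \(\ACR\); combined with fullness and essential surjectivity this is an equivalence.

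The one place where care is needed, and which I expect to be the main obstacle, is ensuring that the composition of faithfulness on the projective quotient matches the CPM equivalence relation. Concretely, if \([f,S]\) and \([g,T]\) in \(\CPM(\Stab_p)\) become equal in \(\ACR\), one must show they differ by an invertible scalar. The cleanest route is to use the established projective equivalence \(\Proj(\Stab_p)\simeq\ALR\) to upgrade each side to an equality in \(\CPM(\ALR)\), then apply the isomorphism \(\CPM(\ALR)\cong\ACR\) together with the CPM identification of ancilla objects — any discrepancy must arise from a purely scalar ambiguity in the lift, which is absorbed in the projective quotient.
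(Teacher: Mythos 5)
Your proposal follows essentially the same route as the paper's (very terse) proof: apply \(\CPM\) functorially to \(\rel:\Stab_p\to\ALR\), identify \(\CPM(\ALR)\cong\ACR\) via the preceding theorem, and observe that for \(\Stab_p\) quotienting by invertible scalars commutes with the CPM construction; you simply spell out the essential-surjectivity, fullness and faithfulness checks that the paper leaves implicit. One small slip: the scalar monoid of \(\ACR\) is not trivial --- it has two elements, the zero subspace and the empty subspace of \((\F_p^0,\omega_0)\) (the latter being affine Lagrangian/coisotropic by convention, and the image of the zero scalar of \(\Stab_p\)) --- but this does not affect your argument, since the invertible scalars of \(\CPM(\Stab_p)\) are still all sent to the identity scalar.
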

\begin{proof}
  This follows immediately from the equivalence \(\ACR \cong \CPM(\ALR) \simeq \CPM(\Proj(\Stab_p))\) and observing that, in the case of \(\Stab_p\), we obtain the same category if we quotient by scalars before or after applying the CPM construction.
\end{proof}

To include mixed states and stabiliser codes in our semantics,
we update our motto:
\begin{center}
  \boxed{\emph{Everything is an affine \underline{coisotropic} relation!}}
\end{center}
\begin{example} \label{ex:channel_relations}
  The following quantum channels are represented by 
  \(\rel(\CPM(\Stab_p))\):
  \begin{itemize}[nosep]
    \item  \emph{maximally mixed state}:
      \(\rel\!\big(\rho_{\mathsf{max}}\coloneqq \tfrac{1}{p^{n}}\sum \lvert j\rangle\!\langle j\rvert\big)
      = \im_{(\Zp^{2n}, \omega_n)}\);
    \item  \emph{matrix trace}: 
      \(\rel\!\big(\Tr\coloneqq \sum \langle j \lvert (-) \rvert j \rangle\big)
      = \im_{(\Zp^{2n}, \omega_n)}^\dagger\);
    \item  \emph{completely depolarising channel}: 
      \(\rel\!\big(\tfrac{1}{p^{n}}\sum \lvert j\rangle\!\langle k\rvert\,(-)\,\lvert k\rangle\!\langle j\rvert\big)
      = \im_{(\Zp^{2n}, \omega_n)} \circ \im_{(\Zp^{2n}, \omega_n)}^\dagger\);
    \item \emph{\(Z\)-decoherence}: 
      \(\rel\!\big(\mathcal{E}_Z\coloneqq \sum \lvert j\rangle\!\langle j\rvert\,(-)\,\lvert j\rangle\!\langle j\rvert\big)
      = \Big\{\Big(\bigl[\begin{smallmatrix} x\\ z\end{smallmatrix}\bigr],\bigl[\begin{smallmatrix} x\\ z'\end{smallmatrix}\bigr]\Big) \  \Big| \ x,z,z' \in \mathbb{F}_p \Big\}
      \eqqcolon \mathscr{E}_Z\).
  \end{itemize}
\end{example}

Just as we restrict completely positive maps to be trace preserving, to capture the physical processes, in the symplectic picture, we restrict ourselves to the total relations:
\begin{definition}
  A relation \(R:X\to Y\) with converse \(R^\dagger\) is \textbf{total} when \(\im(R^\dagger) = X\).
  Given a category \(\mathsf{C}\) of relations, let \(\Total(\mathsf{C})\) denote the subcategory of total maps.
\end{definition}

The correspondence \(\CPM(\Stab_p)\rightarrowtail \ACR\) restricts to an equivalence, \emph{without quotienting by scalars} when  considering only the physical proceses:
\begin{proposition}
  \label{proposition:causal_total_cpm}
  The functor \(\operatorname{Rel}:\CPM(\Stab_p) \to \ACR\) restricts to a symmetric monoidal equivalence \(\Caus(\CPM(\Stab_p))\simeq \Total(\ACR)\) making the following diagram of symmetric monoidal categories commute:
\[\begin{tikzcd}[row sep=.25cm, cells={nodes={inner sep=1pt, outer sep=0pt}}]
	{\Total(\ACR)} && \ACR \\[-.1cm]
	{\Caus(\CPM(\Stab_p))} & {\CPM(\Stab_p)} & {\Proj(\CPM(\Stab_p))} \\
	{\Caus(\CPM(\FHilb))} & {\CPM(\FHilb)} & {\Proj(\CPM(\FHilb)} \\
    \begin{array}{c} \begin{matrix}\text{CPTP maps between}\\[-2pt]\text{matrix algebras}\end{matrix} \end{array} & \begin{array}{c} \begin{matrix}\text{CP maps between}\\[-2pt]\text{matrix algebras}\end{matrix} \end{array} & \begin{array}{c} \Proj\left(\begin{matrix}\text{CP maps between}\\[-2pt]\text{matrix algebras}\end{matrix}\right) \end{array}
	\arrow[tail, from=1-1, to=1-3]
	\arrow["\!\simeq"{marking, allow upside down}, draw=none, from=1-1, to=2-1]
	\arrow["\!\simeq"{marking, allow upside down}, draw=none, from=1-3, to=2-3]
	\arrow[tail, from=2-1, to=2-2]
	\arrow[tail, from=2-1, to=3-1]
	\arrow[two heads, from=2-2, to=2-3]
	\arrow[tail, from=2-3, to=3-3]
	\arrow[tail, from=3-1, to=3-2]
	\arrow[two heads, from=3-2, to=3-3]
    \arrow[tail, from=2-2, to=3-2]
    % added equivalences down from row 3 to row 4
    \arrow["\!\simeq"{marking, allow upside down}, draw=none, from=3-1, to=4-1]
    \arrow["\!\simeq"{marking, allow upside down}, draw=none, from=3-2, to=4-2]
    \arrow["\!\simeq"{marking, allow upside down}, draw=none, from=3-3, to=4-3]
    % added tail and two-heads on row 4
    \arrow[tail, from=4-1, to=4-2]
    \arrow[two heads, from=4-2, to=4-3]
\end{tikzcd}\]
\end{proposition}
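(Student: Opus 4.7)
The core task is to establish the symmetric monoidal equivalence $\Caus(\CPM(\Stab_p)) \simeq \Total(\ACR)$ in the upper-left of the diagram; the remainder then falls out from Corollary~\ref{corollary:cpm_acr} together with the standard identification of $\CPM(\FHilb)$ and $\Caus(\CPM(\FHilb))$ with CP and CPTP maps between matrix algebras respectively. I will build the equivalence directly from $\rel$ by showing that causality on the left is transported to totality on the right, and that scalar freedom collapses completely on these subcategories.

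The first step is to translate causality into totality along $\rel$. Under $\CPM(\ALR) \cong \ACR$, the abstract trace $\Tr_Y = [\epsilon_Y \circ \swap_{Y,Y^*}, Y^*]$ corresponds to the image relation dagger $\im_{(V_Y,\omega_Y)}^\dagger$ (as already noted in Example~\ref{ex:channel_relations}). For any affine coisotropic relation $R: (V_X,\omega_X)\to (V_Y,\omega_Y)$, the composite $\im_{(V_Y,\omega_Y)}^\dagger \circ R$ is the relation $\{(\vb{x},0): \exists \vb{y},\ (\vb{x},\vb{y}) \in R\}$, which equals $\im_{(V_X,\omega_X)}^\dagger$ exactly when $R$ is total. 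Hence the causality equation $\Tr_Y \circ [f,S] = \Tr_X$ transports under $\rel$ precisely to the totality of $\rel([f,S])$, so that $\rel$ restricts to a symmetric monoidal functor $\Caus(\CPM(\Stab_p)) \to \Total(\ACR)$.

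Next I would verify that this restriction is an equivalence. Essential surjectivity on objects is immediate from Corollary~\ref{corollary:cpm_acr} combined with Lemma~\ref{lem:darboux}. For fullness, given a total $R$, pick any representative $[f,S]$ with $\rel([f,S]) = R$; then $\rel(\Tr_Y \circ [f,S]) = \im_{(V_Y,\omega_Y)}^\dagger \circ R = \im_{(V_X,\omega_X)}^\dagger = \rel(\Tr_X)$, so the projective equivalence $\Proj(\CPM(\Stab_p)) \simeq \ACR$ supplies an invertible scalar $\lambda$ with $\Tr_Y \circ [f,S] = \lambda \cdot \Tr_X$. The rescaled morphism $\lambda^{-1} \cdot [f,S]$ is then causal, and has image $R$ under $\rel$ since the scalar monoid of $\ACR$ is trivial (the only affine coisotropic subspace of $(\F_p^0,\omega_0)$ is itself). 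For faithfulness, suppose $[f,S]$ and $[g,T]$ are causal with the same image under $\rel$; the projective equivalence gives an invertible $\mu$ with $[f,S] = \mu \cdot [g,T]$, and applying $\Tr_Y$ yields $\Tr_X = \mu \cdot \Tr_X$, forcing $\mu = 1$ since $\Tr_X$ is a nonzero morphism on which the scalar action is free.

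Finally, the commutativity of the displayed diagram is a matter of routine bookkeeping: the bottom two rows and the rightmost column are standard consequences of Selinger's CPM construction, the faithful embeddings $\CPM(\Stab_p) \hookrightarrow \CPM(\FHilb)$ and their causal restrictions are functorial, and the squares commute by the universal property of the projective quotient together with naturality of $\rel$. The most subtle ingredient is the scaling argument in the fullness step: one must confirm that the discrepancy $\lambda$ between $\Tr_Y \circ [f,S]$ and $\Tr_X$ is genuinely invertible in $\CPM(\Stab_p)$. This follows because the scalar monoid of $\CPM(\Stab_p)$ embeds into $\R_{\geq 0}$ (inherited from $\CPM(\FHilb)$), and since $\rel(\lambda)$ is the nontrivial identity scalar in $\ACR$, $\lambda$ is nonzero and hence invertible.
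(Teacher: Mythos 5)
Your proposal is correct and follows essentially the same route as the paper: essential surjectivity and fullness are pulled back from Corollary~\ref{corollary:cpm_acr}, and faithfulness is obtained by noting that two causal maps in the same projective class differ by an invertible scalar $\lambda$ which the trace-preservation equation forces to be $1$. You are somewhat more explicit than the paper in two places --- verifying that causality transports to totality via $\rel(\Tr_Y)=\im^\dagger$, and carrying out the rescaling $\lambda^{-1}\cdot[f,S]$ needed for fullness onto $\Total(\ACR)$ --- but these are elaborations of the same argument rather than a different one.
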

\begin{proof}
  By corollary~\ref{corollary:cpm_acr}, \(\operatorname{Rel}:\Caus(\CPM(\Stab_p)) \to \Total(\ACR)\) is an essentially surjective, full, monoidal functor, making the diagram commute. It remains to prove faithfulness.
  Take two maps \([f,S],[g,T]:X\to Y\) in \(\CPM(\Stab)\) such that \([g,T] = \lambda \cdot [f,S]\) some
  \(\lambda\neq 0\). Then \(\Tr_Y[g,T]= \lambda\Tr_Y[f,S] = \lambda \Tr_X\).  
  Therefore \([g,T]\) is causal iff \(\lambda = 1\) i.e. \([g,T] =
  [f,S] \), thus, each projective equivalence class of morphisms \(\Caus(\CPM(\Stab_p))\) contains at most one representative. Therefore, the equivalence \(\ACR \simeq
  \Proj(\CPM(\Stab_p))\) uniquely lifts  along \(\Proj\) on causal maps.
\end{proof}
Therefore, the trace preserving mixed stabiliser maps are in canonical bijection with the total affine coisotropic relations.
Indeed, transporting proposition~\ref{proposition:isotropic_kernel} along \( \Total(\ACR)\cong \Caus(\CPM(\Stab_p))\), we find that  every completely positive trace-preserving mixed stabiliser map admits a Stinespring dilation into a stabiliser isometry.
Note that all of the mixed stabiliser maps in example~\ref{ex:channel_relations} are trace-preserving, and their affine coisotropic counterparts are total.

Moreover, this representation restricts to the completely positive trace-preserving mixed qubit CSS maps, considering total affine coisotropic whose linear component splits into the direct sum of two linear subspaces.

\subsection{Stabiliser quantum error correction via affine coisotropic relations}

Stabiliser quantum mechanics is the cornerstone of finite-dimensional quantum error correction \cite{gottesman_stabilizer_1997, Ashikhmin2001}. In this subsection will give a dictionary for how the fundamental constructions in stabiliser QEC  can be represented in the symplectic language.  Then we give a compositional account for the coarse-graining of stabiliser quantum error correction codes.

\paragraph*{The symplectic formulation of stabiliser quantum error correction}

To understand the connection between the symplectic geometry and quantum
error correction, we revisit the notion of a stabiliser code. The stabiliser
code associated to a non-empty affine coisotropic subspace \(C=L+\vb{a} \subseteq
(\Zp^{2n},\omega_n)\), is the Abelian subgroup
\begin{equation}
  G_C\coloneqq \{\pi\bigl(\omega_n(\vb{a},\vb b),\vb{b}) \ \mid \ \vb{b} \in L^{\omega_n}\} \subseteq \Pauli,
\end{equation}
moreover the projector onto the codespace \(  \mathcal{H}_{G_C}\cong \mathcal{H}_p^{\otimes (\dim(L)-n)} \subseteq \mathcal{H}_p^{\otimes n}\) is given by:
\begin{equation}
  \Pi_{G_C}
  \coloneqq 
  \frac{1}{\left|L^{\omega_n}\right|}
  \sum_{\vb{b}\in L^{\omega_n}}
  \pi\bigl(\omega_n(\vb{a},\vb b),\vb{b} \bigr):\mathcal{H}_p^{\otimes n} \to \mathcal{H}_p^{\otimes n}.
\end{equation}

By convention the empty affine coisotropic subspace is identified with the  Abelian subgroup \(G_\emptyset \coloneqq \langle e^{i \frac{2\pi}{p}} 1_{\mathcal{H}_p^{\otimes n}}\rangle \subseteq \Pauli\), so that \(\Pi_{G_{\emptyset}} \coloneqq 0\). This is chosen so that for any (possibly empty) affine coisotropic subspace \(S\subseteq  (\Zp^{2n},\omega_n)\), we have that \(\rel(\Pi_{G_C}) = S\).

An \textbf{encoder} for \(G_S\) consists of a Stinespring dilation  \(E_{G_S}:\mathcal{H}_{p}^{\otimes(\dim(L)-n)}\cong \mathcal{H}_{G_S} \rightarrowtail \mathcal{H}_p^{\otimes n}\) of \(\rho_{G_S} \coloneq \Pi_{G_S}/\Tr(\Pi_{G_S})\), meaning that  \(E_{G_S} \circ \rho_{\mathsf{max}} = \rho_{G_S}\).
\(E_{G_S}\) encodes \(\dim(L)-n\) \textbf{logical qupits} into \(n\) \textbf{physical qupits}.
In the symplectic picture, we represent the encoder by the affine Lagrangian isometry \(\rel(E_{G_S}): (\Zp^{2(\dim(L)-n)}, \omega_{\dim(L)-n})\rightarrowtail (\Zp^n, \omega_n),\)
where: 
\begin{equation}
\im(\rel(E_{G_S})) =  \rel(E_{G_S} \circ \rho_{\mathsf{max}})= \rel(\rho_{G_S}) = \rel(\Pi_{G_S})= S.
\end{equation}

After encoding, \textbf{errors} may occur. Because the Pauli group \(\Pauli\) is a nice unitary error basis for \(\mathcal{H}_p^{\otimes n}\), we can interpret potential errors as linear combinations of Pauli operators \cite[\S~10.3.1.]{NC}.  In particular, this means that the error correction properties of the stabiliser code \(G_S\) can be captured either by reference to the isotropic subspace \(L^{\omega_n}\) or the projector \(\Pi_{G_S}\):
\begin{center}
\begin{tabular}{l|c|c}
\emph{An error \(\pi(a,\vb{e})\in \Pauli\) is} & \emph{Symplectic condition} & \emph{Projector condition} \\ \hline
\textbf{Trivial} 
& \(\vb{e}\in L^{\omega_n}\)
& \(\Pi_{G_S} \pi(0,\vb{e})\Pi_{G_S}=\Pi_{G_S}\) \\ \hdashline
\textbf{Detectable} 
& \(\vb{e}\notin L\)
& \(\Pi_{G_S} \pi(0,\vb{e}) \Pi_{G_S}=0\)\\  \hdashline
\textbf{Undetectable} 
& \(\vb{e}\in L\setminus L^{\omega_n}\)
& \(\Pi_{G_S} \pi(0,\vb{e}) \Pi_{G_S}\neq 0\),\\ 
\textbf{and nontrivial} && \(\Pi_{G_S} \pi(0,\vb{e})\Pi_{G_S}\neq \Pi_{G_S}\) \\
\end{tabular}
\end{center}
Errors that are undetectable and nontrivial are exactly the \emph{logical} errors that change the encoded data.
Moreover, a set \(\mathcal{E}\subseteq\F_p^{2n}\) of errors is
\textbf{correctable} if and only if
\begin{equation}
  \forall \vb{e}\neq\vb{f}\in\mathcal{E}: \vb{f}-\vb{e}\notin L \setminus L^\omega,
\end{equation}
which is to be contrasted with the more involved projector condition:
\begin{equation}
  \forall \vb{e}\neq\vb{f}\in\mathcal{E}: 
  \Pi_{G_S} \pi(\vb{0},\vb{f}-\vb{e}) \Pi_{G_S} = 0
  \qor 
  \Pi_{G_S} \pi(\vb{0},\vb{f}-\vb{e}) \Pi_{G_S} = \Pi_{G_S}.
\end{equation}
The \textbf{code distance} \(d(G_S) \in \N\) is the minimal number of tensor factors on which a nontrivial undetectable Pauli acts, and is considered a good measure of the quality of a code. This is most easily understood in the symplectic picture:
\begin{equation}
  d(S)
  \coloneqq 
  \min\left\{
  \ \big| \{ i\in\{0,\dots,n-1\} : (e_{x,i},e_{z,i})\neq(0,0) \} \big|
  \  : \
  \forall \vb{e}=(\vb{e}_x,\vb{e}_z)\in L\setminus L^{\omega_n}
  \ \right\}
\end{equation}
% Note that having a high code distance is a very important property of a stabiliser code. This is so important, that given  an affine coisotropic subspace \(S=L+\vb{a}\subseteq(\Zp^{2n}, \omega_n)\), with \(k\coloneqq \dim(L)-n\) and \(d \coloneqq d(G_S)\), then \(G_S\) is called a \(\interp{ n,k,d }_p\) stabiliser code.

\paragraph*{Coarse-graining via mixing}

The inclusion order on nonempty affine Lagrangian subspaces, and their corresponding projectors is trivial. However, for affine coisotropic subspaces, it tells us when the corresponding projectors are more or less pure than each other: because \(\rel(\Pi_{G_S}) = S\) there is an inclusion of affine coisotropic subspaces \( R \subseteq S\) if and only if there is an inclusion of the images of their corresponding projectors \(\im(\Pi_{G_R }) \subseteq \im(\Pi_{G_S }) \). That is to say we have such an inclusion if and only if \(\Pi_{G_R } \preceq \Pi_{G_S}\)  in the  L\"owner order \cite[Section 6]{Arias2022}.

This order allows us to quantify and coarse-grain the error correction properties of stabiliser codes.
Given affine coisotropic subspaces \(R,S \subseteq (\mathbb{F}_p^{2n},\omega_n)\), if
\(R \subseteq S\) then  the constraints imposed by the stabiliser code  \(G_R\) can be relaxed to the constraints imposed by the stabiliser code \(G_S\), by forgetting some of the stabilisers. In particular, it follows straightforwardly that:
\begin{itemize}
\item a trivial error for \(G_S\) is also trivial for \(G_R\);
\item a detectable error for \(G_S\) is also detectable for \(G_R\);
\item a set of correctable errors for \(G_S\) is also correctable for \(G_R\);
\item \( d(G_S) \leq d(G_R) \). 
\end{itemize}
Of course, this comes with a trade-off, as the dimension of the coisotropic \(R\) decreases, so does the dimension of the logical space \(R \setminus R^{\omega_n}\).

This order on affine coisotropic subspaces gives a partial order enrichment of
\(\ACR\) and \(\Total(\ACR)\), i.e. a \emph{compositional} account for this
quantitative property of mixed stabiliser quantum \emph{processes}, where
encoders, stabiliser codes and Pauli errors can be composed together. That
is to say, we can ask when processes are coarse-grainings of each other:
\begin{theorem}
With respect to the L\"owner order on the corresponding projectors,
\begin{itemize}
    \item \(\CPM(\Stab_p)\) is enriched in preordered sets;
    \item \(\Proj(\CPM(\Stab_p))\) is enriched in partially ordered sets;
    \item  \(\Caus(\CPM(\Stab_p))\) is enriched in partially ordered sets.
\end{itemize}
\end{theorem}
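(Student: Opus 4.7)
The plan is to transport the problem to the symplectic side using Corollary~\ref{corollary:cpm_acr} and Proposition~\ref{proposition:causal_total_cpm}, where the Löwner order on stabiliser code projectors has already been identified with inclusion of the corresponding affine coisotropic subspaces. More generally, for an arbitrary morphism $f : X \to Y$ in $\CPM(\Stab_p)$, the Choi--Jamio\l{}kowski correspondence provided by the \dag-compact structure sends $f$ to a state $I \to Y \otimes X^*$ whose image under $\rel$ is exactly the morphism-level affine coisotropic subspace of $(Y \oplus X^*, -\omega_Y \oplus \omega_{X^*})$. Hence the Löwner order on the ``corresponding projectors'' coincides with set-theoretic inclusion of morphism-level affine coisotropic subspaces in $\ACR(X,Y)$, and it suffices to prove that this inclusion order is monotone under composition and tensor in $\ACR$ and then pull the structure back.

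Monotonicity in $\ACR$ is routine. If $R \subseteq R' : (V,\omega_V) \to (W,\omega_W)$ and $S \subseteq S' : (W,\omega_W) \to (X,\omega_X)$, then any $(\vb{v},\vb{x}) \in S \circ R$ admits a witness $\vb{w} \in W$ with $(\vb{v},\vb{w}) \in R \subseteq R'$ and $(\vb{w},\vb{x}) \in S \subseteq S'$, so $(\vb{v},\vb{x}) \in S' \circ R'$; monotonicity of $\oplus$ is immediate from its pointwise definition. Inclusion is manifestly a partial order on the affine coisotropic subspaces of any $(V \oplus W, -\omega_V \oplus \omega_W)$, so $\ACR$ is enriched in posets, and the equivalence $\Proj(\CPM(\Stab_p)) \simeq \ACR$ transports this to yield the poset enrichment of $\Proj(\CPM(\Stab_p))$.

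For $\CPM(\Stab_p)$ itself, I would pull back the order along the quotient $\CPM(\Stab_p) \twoheadrightarrow \Proj(\CPM(\Stab_p))$ to get a preorder which is automatically monotone under composition and tensor; antisymmetry genuinely fails because distinct scalar multiples of a single CP map share one projective class and are therefore mutually comparable without being equal. For $\Caus(\CPM(\Stab_p))$, Proposition~\ref{proposition:causal_total_cpm} gives a faithful symmetric monoidal equivalence with $\Total(\ACR)$, so the partial order on $\ACR$ restricts along this equivalence to a partial order on causal morphisms; the scalar ambiguity is killed because at most one scalar representative of each projective class is trace-preserving, as observed in the proof of Proposition~\ref{proposition:causal_total_cpm}. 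The main obstacle I anticipate is a clean verification that the morphism-level inclusion order really coincides with the Löwner order on Choi projectors rather than being an artefact of our symplectic presentation; this reduces via the Choi--Jamio\l{}kowski correspondence to the state-level identification $R \subseteq S \iff \Pi_{G_R} \preceq \Pi_{G_S}$ recalled just before the theorem, so once that is made precise on morphisms the three enrichment statements follow uniformly.
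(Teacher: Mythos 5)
Your proposal is correct and follows essentially the same route as the paper's own proof: both reduce morphisms to states via the name/Choi--Jamio\l{}kowski correspondence, identify the L\"owner order on the corresponding projectors with inclusion of affine coisotropic subspaces, and observe that antisymmetry fails in \(\CPM(\Stab_p)\) only because of nonzero scalar multiples, while causality/totality pins down a unique scalar representative so that \(\Caus(\CPM(\Stab_p))\) and \(\Proj(\CPM(\Stab_p))\) get genuine partial orders. You are in fact slightly more careful than the paper in explicitly verifying that relational composition and \(\oplus\) are monotone for inclusion --- the step that actually makes this an enrichment --- which the paper leaves implicit.
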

\begin{proof}
 Since \(\ACR\) is compact closed, any morphism \(R:(V,\omega_V) \to (W, \omega_W)\) canonically induces a state \(\lfloor R \rfloor: I \to  (V\oplus W,\omega_V \oplus \omega_W) \), called its \emph{name}. For example in \(\CPM(\Stab_p)\) the name is proportional to the Choi matrix.
Therefore, there is an inclusion of affine coisotropic relations \( R \subseteq S\)
  if and only if there is an inclusion of the images of their corresponding projectors \(\im(\Pi_{G_{\lfloor R\rfloor} }) \subseteq \im(\Pi_{G_{\lfloor S\rfloor }}) \).

Just as dividing the projector onto the code space by the trace norm \(\Pi_{G_S}/ \Tr(\Pi_{G_S})\) is a mixed stabiliser state; dividing a  mixed stabiliser state by its \emph{operator norm} is a projector onto the code space of a stabiliser code.
Because the operator normalised Choi-matrices of morphisms between any two Hilbert spaces in \(\Caus(\CPM(\Stab_p))\) are distinct from each other, the equivalence \(\rel: \Caus(\CPM(\Stab_p))\simeq \Total(\ACR)\) induces a partial order on the trace preserving mixed stabiliser quantum processes.

On the other hand, when working in \(\CPM(\Stab_p)\), two proportional, yet distinct morphisms of the same type induce the same projector, so the order fails to be anti-symmetric.
\end{proof}
\section{Measurement and classical types}
\label{section:measurement}
In the previous section, we saw how the CPM construction provides an abstract setting for:
\begin{enumerate}
    \item general mixed state quantum mechanics when applied to \(\FHilb\);
    \item more specifically, stabiliser codes when applied to \(\ALR \simeq \Proj(\Stab_p)\).
\end{enumerate}
Stabiliser codes are used to detect and correct errors on noisy quantum channels: encoding quantum information redundantly as a mixed state. However, from, an operational point of view, to detect errors, one has to measure part of the code space; and to correct errors, one must apply operations to the code space conditional on the measurement outcomes. 

Given an indexed orthonormal basis \(B=\{\ket{\lambda_1}, \cdots,
\ket{\lambda_n}\}\) for a finite-dimensional Hilbert space \(\mathcal{H}\),
decoherence in this basis is represented by the completely positive trace-preserving map \({\mathcal E}_B:
\mathcal{B}(\mathcal{H})\to \mathcal{B}(\mathcal{H})\) which sends arbitrary
pure states to probabilistic mixtures of pure states in \(B\) according to
the \emph{Born rule}:
\begin{equation}\label{eq:collapse}
   {\mathcal E}_B(\dyad{\phi}{\phi}) \coloneqq
  \sum_{j=1}^n \dyad{\lambda_j}{\lambda_j}\dyad{\phi}{\phi} \dyad{\lambda_j}{\lambda_j}
  = \sum_{j=1}^n  \abs{\bra{\lambda_j}\ket{\phi}}^2\dyad{\lambda_j}{\lambda_j}.
\end{equation}
This operation already exhibits the essential structure of a measurement:
the diagonal states \(\dyad{\lambda_j}{\lambda_j}\) represent distinct
measurement outcomes, and probabilistic mixtures of them correspond to
 probability distributions over those outcomes. However, this
doesn't yet distinguish between classical and quantum systems. Not
all states in \(\mathcal{B}(\mathcal{H})\) are classical in this sense, nor will all quantum processes acting on \(\mathcal{B}(\mathcal{H})\)
preserve the classicality of states. Therefore, it is natural to ask how to enforce
this condition \emph{explicitly}, at the level of objects?

Observe that a quantum state \(\rho\) is a probabilistic mixture of pure states in the basis \(B\) if and
only if \(\mathcal{E}_B(\rho) = \rho\).  In other words, \(\mathcal{E}_B\) 
characterises the subalgebra of states in \(\mathcal{B}(\mathcal{H})\)
that are classical with respect to \(B\); i.e. the states that are diagonal in the basis \(B\). We can impose a classical
structure on a quantum system by restricting to such states, and similarly for general processes. 
In other words, \(\mathcal{E}_B\) captures an internal classical structure of a quantum object,
making it possible to construct a new category where these decoherence maps themselves become objects, and morphisms between them respect this classical structure. 

\subsection{Adding classical types by splitting dagger-idempotents}
The following structure will serve as our abstract notion of a quantum-classical system in a \dag-CCC:
\begin{definition}
    A \textbf{\dag-idempotent} in a \dag-SMC is a map \(f\) such that \(f^\dagger = f\) and \(f;f = f\).
\end{definition}
In the setting of finite-dimensional, mixed quantum theory:
\begin{example}[{\cite[Thm. 2.5]{heunen_completely_2014}, \cite[Prop. 3.5]{Coecke2014}}]
    The \dag-idempotents on \(\mathcal H\) in \(\CPM(\FHilb)\) are in bijection with \(C^*\)-subalgebras of the matrix algebra \(\mathcal{B}(\mathcal{H})\).
\end{example}
The identity on \(\mathcal{H}\) is a \dag-idempotent, corresponding to the matrix algebra \(\mathcal{B}(\mathcal{H})\). 
On the other hand, decoherence via a basis measurement \(\mathcal{E}_{B}\) is a \dag-idempotent corresponding to a commutative \(C^*\)-subalgebra of \(\mathcal{B}(\mathcal{H})\).  The following definition allows us to promote these \(C^*\)-algebras into objects in their own right:
\begin{definition}[{\cite[Def. 3.13]{selinger_idempotents_2008}}]
  \label{def:split}
  Given a \dag-CCC \(\mathsf{C}\), its \dag-idempotent completion \(\Split(\mathsf{C})\) is the \dag-CCC with:
  \begin{itemize}[nosep]
    \item \textbf{objects:}  pairs \((A,a)\) where \(A\) is an object of
    \(\mathsf{C}\) and \(a : A \to A\) is a \dag-idempotent;
    \item \textbf{morphisms:} \(f:(A,a) \to (B,b)\) are morphisms \(f :
    A \to B\) in \(\mathsf{C}\) such that \(a;f;b = f\);
    \item \textbf{identities:} \(1_{(A,a)}\coloneqq a \);
    \item \textbf{rest of \dag-compact structure}:
    given pointwise in \(\mathsf{C}\).
  \end{itemize}
\end{definition}

There is a canonical embedding \(\mathsf{C}\rightarrowtail \Split(\mathsf{C})\) sending objects \(A \mapsto (A, 1_A)\) and acting as the identity on morphisms.  
When applied to \(\CPM(\FHilb)\), the \dag-idempotent completion reproduces the standard setting for finite-dimensional quantum mechanics; augmenting the category of finite dimensional matrix algebras and completely positive maps so that the objects become arbitrary finite-dimensional \(C^*\)-algebras:

\begin{theorem}[{\cite[Thm. 2.5]{heunen_completely_2014}, \cite[Prop. 3.5]{Coecke2014}}]
    \(\Split(\CPM(\FHilb))\) is equivalent to the CCC of completely-positive maps between finite-dimensional \(C^*\)-algebras.
\end{theorem}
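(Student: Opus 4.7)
The plan is to construct an equivalence of categories $F : \Split(\CPM(\FHilb)) \to \FCP$ that witnesses the asserted correspondence. Using the stated bijection between $\dag$-idempotents $a$ on $\mathcal{H}$ in $\CPM(\FHilb)$ and $C^*$-subalgebras $\mathcal{A}_a \subseteq \mathcal{B}(\mathcal{H})$, define $F$ on objects by $F(\mathcal{H},a) \coloneqq \mathcal{A}_a$. By Definition~\ref{def:split}, a morphism $f : (\mathcal{H},a) \to (\mathcal{K},b)$ in $\Split(\CPM(\FHilb))$ is a CP map $f : \mathcal{B}(\mathcal{H}) \to \mathcal{B}(\mathcal{K})$ satisfying $b \circ f \circ a = f$; this condition forces $f$ to factor through a CP map $\mathcal{A}_a \to \mathcal{A}_b$, which we take as $F(f)$. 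Functoriality is immediate, since the identity $1_{(\mathcal{H},a)} = a$ acts as the identity on $\mathcal{A}_a$, and composition is preserved using the $\dag$-idempotency of $b$.

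For essential surjectivity, I would appeal to the Artin--Wedderburn structure theorem: every finite-dimensional $C^*$-algebra $\mathcal{A}$ decomposes as a direct sum $\bigoplus_i \mathcal{B}(\mathcal{H}_i)$ of matrix algebras, which embeds block-diagonally as a unital $C^*$-subalgebra of $\mathcal{B}(\bigoplus_i \mathcal{H}_i)$. By the object-level bijection, this subalgebra arises from some $\dag$-idempotent $a$, so $F(\bigoplus_i \mathcal{H}_i,\, a) \cong \mathcal{A}$. For fullness, given any CP map $\phi : \mathcal{A}_a \to \mathcal{A}_b$, the composite $\iota_b \circ \phi \circ E_a : \mathcal{B}(\mathcal{H}) \to \mathcal{B}(\mathcal{K})$, where $E_a$ is the CP retraction onto $\mathcal{A}_a$ encoded by the idempotent $a$ itself and $\iota_b$ is the inclusion, is a CP map satisfying the splitting condition $b \circ (\iota_b \circ \phi \circ E_a) \circ a = \iota_b \circ \phi \circ E_a$, and its restriction recovers $\phi$. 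Faithfulness is essentially immediate, since any morphism $f$ already equals $b \circ f \circ a$ and is therefore completely determined by its restriction to $\mathcal{A}_a$.

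The main obstacle I expect is verifying that the $\dag$-compact-closed structure on $\Split(\CPM(\FHilb))$ --- the tensor product, the inherited dagger, and the compact duals --- transports to the standard structure on $\FCP$. Concretely, the tensor product of $\dag$-idempotents $a \otimes a'$ on $\mathcal{H} \otimes \mathcal{H}'$ must correspond, under the object-level bijection, to the $C^*$-algebra $\mathcal{A}_a \otimes \mathcal{A}_{a'}$ sitting inside $\mathcal{B}(\mathcal{H} \otimes \mathcal{H}') \cong \mathcal{B}(\mathcal{H}) \otimes \mathcal{B}(\mathcal{H}')$; and the compact dual of $(\mathcal{H}, a)$ must correspond to the opposite algebra $\mathcal{A}_a^{\op}$ up to canonical isomorphism. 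Once these structural coherences are in place, the equivalence follows by assembling the object-level bijection with the morphism-level fullness and faithfulness established above.
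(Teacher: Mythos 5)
This statement is imported from the literature: the paper gives no proof of its own, only the citations to \cite{heunen_completely_2014} and \cite{Coecke2014}, so there is nothing in the paper to compare your argument against line by line. Your sketch is, in essence, the standard argument from those references, and its overall architecture is sound: objects via the \(\dag\)-idempotent/\(C^*\)-subalgebra bijection, morphisms via restriction and corestriction along the splitting condition \(a;f;b=f\), essential surjectivity via Artin--Wedderburn and the block-diagonal conditional expectation, and fullness via \(\iota_b\circ\phi\circ E_a\). One caution: the step you treat as "immediate" -- that the restriction of \(f\) to \(\mathcal{A}_a\) is completely positive \emph{as a map of \(C^*\)-algebras} -- is where the real content of the cited results sits. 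The image of a \(\dag\)-idempotent CP map is in general a \(C^*\)-algebra only under the Choi--Effros product \(x\cdot y \coloneqq a(xy)\), not necessarily under the multiplication inherited from \(\mathcal{B}(\mathcal{H})\), and one must check that the matrix order (hence the meaning of "completely positive") on \(\mathcal{A}_a\) is the restricted one; this follows from the normal form of \(\dag\)-idempotents in the finite-dimensional case, but it is precisely the substance of \cite[Thm.~2.5]{heunen_completely_2014}, so your proof is not self-contained there -- it leans on the cited object-level result for more than just a bijection of objects. Your flagged obstacle about transporting the monoidal and dual structure (that \(\mathcal{A}_{a\otimes a'}\cong\mathcal{A}_a\otimes\mathcal{A}_{a'}\) and that the dual of \((\mathcal{H},a)\) corresponds to the opposite/conjugate algebra) is correctly identified and is routine once the normal form is in hand. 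With those caveats made explicit, the proposal is a faithful reconstruction of the intended proof.
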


The objects of the form \((\mathcal{H},1_{\mathcal{H}})\) represent the matrix algebras \(\mathcal{B}(\mathcal{H})\), interpreted as the purely quantum systems in \(\CPM(\FHilb)\).
On the other hand, the new objects added by \dag-idempotent completion correspond to other \(C^*\)-algebras. 
For the extremal example of a quantum measurement induced by an orthonormal basis \(B\), the object \((\mathcal{H},\mathcal{E}_B)\) is interpreted as the state space of a classical system measured according to the basis \(B\). The canonical map \(\mathcal{E}_B:(\mathcal{H},1_{\mathcal{H}}) \twoheadrightarrow(\mathcal{H},\mathcal{E}_B)\) is interpreted as the \textbf{measurement} induced by \(B\); whereas the map \(\mathcal{E}_B:(\mathcal{H},\mathcal{E}_B)\rightarrowtail(\mathcal{H},1_{\mathcal{H}}) \) is interpreted as the \textbf{classically-conditioned state preparation} in the basis \(B\).
Measurement followed by state preparation yields the quantum system projected onto the measurement basis; whereas, state preparation followed by measurement yields the identity on the classical system:
\begin{center}\def\arraystretch{.8}
\begin{tabular}{rlcrl}
\begin{tabular}{c} \em Measuring\\ \em then\\ \em preparing:\end{tabular}&
\(\begin{tikzcd}[row sep=.23cm]
	{(\mathcal{H}, 1_{\mathcal{H}})} \\
	{(\mathcal{H}, \mathcal{E}_B)} & {(\mathcal{H}, 1_{\mathcal{H}})}
	\arrow["{\mathcal{E}_B\ }"', tail, from=1-1, to=2-1]
	\arrow["{\mathcal{E}_B}", from=1-1, to=2-2]
	\arrow["{\mathcal{E}_B}"', two heads, from=2-1, to=2-2]
\end{tikzcd}\)& &
\begin{tabular}{c} \em Preparing\\ \em then\\ \em measuring:\end{tabular}&
\(\begin{tikzcd}[row sep=.23cm]
	{(\mathcal{H}, \mathcal{E}_B)} & {(\mathcal{H}, 1_{\mathcal{H}})} \\
	& {(\mathcal{H}, \mathcal{E}_B)}
	\arrow["{\mathcal{E}_B}", two heads, from=1-1, to=1-2]
	\arrow["{\mathcal{E}_B}"', equal, from=1-1, to=2-2]
	\arrow["{\ \mathcal{E}_B}", tail, from=1-2, to=2-2]
\end{tikzcd}\)
\end{tabular}
\end{center}
This makes explicit the fact that \(\mathcal{E}_B\) acts as the decoherence map on the quantum object \((\mathcal{H},1_\mathcal{H})\) but as the identity on the classical object \((\mathcal{H},\mathcal{E}_B)\).

Arbitrary completely-positive maps between \(C^*\)-algebras cannot be physically implemented.  Just as in the previous section, we must impose an additional normalisation constraint:
\vspace{-5mm}
\begin{definition}
    A morphism \([f,S] : (X,x) \to (Y,y)\) in \(\Split(\CPM(\mathsf{C}))\) is \textbf{causal} if and only if \(\Tr_Y[f,S] = \Tr_X :(X,x) \to (I, 1_I) \).
     Denote  the symmetric monoidal subcategory of causal morphisms in \(\Split(\CPM(\mathsf{C}))\) by \(\Caus(\Split(\CPM(\mathsf{C})))\).
\end{definition}

In the setting of finite-dimensional quantum theory; this reproduces the usual operator algebraic setting for finite-dimensional quantum mechanics:
\begin{corollary}
\(\Caus(\Split(\CPM(\mathsf{\FHilb})))\) is equivalent to the symmetric monoidal category of completely-positive trace-preserving (CPTP) maps between finite-dimensional \(C^*\)-algebras.
\end{corollary}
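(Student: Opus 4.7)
The plan is to bootstrap from the immediately preceding theorem, which establishes an equivalence between $\Split(\CPM(\FHilb))$ and the $\dagger$-CCC of completely-positive maps between finite-dimensional $C^*$-algebras. Since causality on one side and trace-preservation on the other are both defined by the same shape of equation --- an equaliser condition against a chosen trace morphism --- the argument reduces to checking that this equivalence identifies the abstract trace $\Tr_{(H,e)}$ with the canonical trace on the corresponding $C^*$-algebra, and that the restriction remains symmetric monoidal.

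Under the Heunen--Kissinger--Selinger equivalence, a $\dagger$-idempotent $e : H \to H$ in $\CPM(\FHilb)$ admits a splitting $e = \iota \circ \pi$ in which $\iota : A_e \hookrightarrow B(H)$ is a unital inclusion of a finite-dimensional $C^*$-subalgebra and $\pi : B(H) \twoheadrightarrow A_e$ is its $\dagger$-adjoint --- the canonical conditional expectation with respect to the trace inner product. The abstract trace $\Tr_{(H,e)} : (H,e) \to (I, 1_I)$ is represented by the CP map $\Tr_H : B(H) \to \C$, which is well-typed as a morphism out of $(H,e)$ precisely because $\Tr_H \circ e = \Tr_H$ --- a direct consequence of $\iota$ being unital, so that its $\dagger$-adjoint $\pi$ is trace-preserving. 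Transported along the equivalence, $\Tr_{(H,e)}$ factors as $\Tr_H \circ \iota$, which is the canonical trace on $A_e$ inherited by restriction from the matrix trace on $B(H)$.

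Consequently, for a morphism $f : (H,e) \to (K,e')$ corresponding to a CP map $\phi : A_e \to A_{e'}$, the causality equation $\Tr_{(K,e')} \circ f = \Tr_{(H,e)}$ translates verbatim into the trace-preservation condition $\Tr_{A_{e'}} \circ \phi = \Tr_{A_e}$, so causal morphisms correspond bijectively to CPTP maps. Symmetric monoidality of the restricted equivalence is automatic, since causal morphisms and CPTP maps are each closed under the ambient tensor product by direct inspection, and the coherence data is inherited from the unrestricted equivalence. The main technical step to carry out carefully will be the identification of $\Tr_H \circ \iota$ with the canonical trace on $A_e$ independently of the embedding; this follows from the Wedderburn decomposition $A_e \cong \bigoplus_i M_{n_i}(\C)$, since the matrix trace of $B(H)$ restricts canonically to each simple summand up to a positive multiplicity that is absorbed into the HKS correspondence.
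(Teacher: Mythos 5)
Your overall strategy is the one the paper intends (the corollary is stated there without proof): transport the causality equation along the equivalence \(\Split(\CPM(\FHilb))\simeq\){}CP maps between finite-dimensional \(C^*\)-algebras, identify the abstract trace on \((\mathcal H,e)\) with the trace on the corresponding subalgebra, and conclude that causal equals trace-preserving; the closure of both classes under \(\otimes\) is likewise the right observation.

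There is, however, one step that fails as stated: the claim that the splitting \(e=\iota\circ\pi\) always has \(\iota\) a \emph{unital} inclusion, whence \(\Tr_{\mathcal H}\circ e=\Tr_{\mathcal H}\). The \(\dag\)-idempotents classified by the cited results include compressions onto corners, e.g.\ \(x\mapsto\langle 0|x|0\rangle\,\dyad{0}{0}\) on \(\mathcal{B}(\C^2)\), whose image subalgebra \(\C\,\dyad{0}{0}\) does not contain \(1_{\mathcal{B}(\C^2)}\); for such \(e\) one has \(\Tr_{\mathcal H}\circ e\neq\Tr_{\mathcal H}\), so \(\Tr_{\mathcal H}\) is not even a well-typed morphism \((\mathcal H,e)\to(I,1_I)\) and the abstract trace must be read as \(\Tr_{\mathcal H}\circ e\). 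More generally, the functional induced on \(A_e\cong\bigoplus_i M_{n_i}(\C)\) is \(\sum_i m_i\Tr_{n_i}\) with multiplicities \(m_i\) depending on the embedding; this is not the canonical trace unless all \(m_i=1\), and it is not silently ``absorbed into the correspondence,'' since a \(*\)-isomorphism of the underlying algebras need not intertwine differently weighted traces. The standard repair is to verify the statement on the skeleton of objects \((\mathcal H,e)\) where \(e\) is the trace-preserving decoherence onto a unitally, multiplicity-one embedded subalgebra (every finite-dimensional \(C^*\)-algebra arises this way), and then to check that an arbitrary \((\mathcal H,e)\) is \emph{causally} isomorphic to its skeletal representative via a suitably rescaled isomorphism (e.g.\ \(\lambda\mapsto\tfrac{\lambda}{m}\,p\) for a rank-\(m\) corner); your argument needs this last step, or an explicit restriction to that skeleton, to go through.
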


In other words, the morphisms in \(\Caus(\Split(\CPM(\FHilb)))\) correspond to finite\hyp{}dimensional \textbf{quantum channels}, and the states correspond to \textbf{density matrices}.
Importantly, the state preparation and measurement maps are quantum channels.
For notational convenience, denote the category of quantum channels by
\(\QChan \coloneqq \Caus(\Split(\CPM(\FHilb))).\)

\subsection{The stabiliser theory with affine classical operations}

In the previous subsection, we reviewed how the symmetric monoidal category \(\QChan\) is equivalent to the standard setting for finite-dimensional quantum mechanics with measurement and classical control. In this subsection, by applying the same constructions to the symplectic representation of mixed stabiliser maps; we show that the canonical setting for stabiliser quantum mechanics with Pauli measurement and Pauli state preparation admits a concise, entirely relational description. To this end, we recall the following definition which will serve as the setting for our classical systems:
\begin{definition}
The \dag-compact-closed category, \(\AR\), of \textbf{affine relations} has finite-dimensional \(\Zp\)-vector spaces as objects and affine subspaces as morphisms.  Composition is given by relational composition, whilst the identity and compact-closed structure are given by the diagonal relations.  The dagger is given by the relational converse.
\end{definition}
\begin{lemma}
    There is a faithful \dag-compact-closed functor \(Q:\ALR\rightarrowtail \AR\) which forgets symplectic structure.
\end{lemma}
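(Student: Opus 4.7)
The plan is to define $Q$ as the obvious ``forgetful'' assignment and observe that every piece of structure on both sides is given by the same underlying set-theoretic operations, so functoriality, monoidality, daggerness and compact closure come essentially for free; the only real content is checking that the assignment is well-defined (an affine Lagrangian subspace is, in particular, an affine subspace) and faithful.

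Concretely, on objects let $Q(V,\omega_V) \coloneqq V$, forgetting the symplectic form. On morphisms, an affine Lagrangian subspace $L\subseteq (V\oplus W,-\omega_V\oplus \omega_W)$ is in particular an affine subspace of the underlying $\F_p$-vector space $V\oplus W$, so I set $Q(L)\coloneqq L$. Well-definedness on morphisms is therefore immediate. To see that $Q$ is a functor, note that identities in $\ALR$ are the diagonal relations on $V$, which are exactly the identities in $\AR$ on $Q(V,\omega_V)=V$; and composition in both categories is relational composition of underlying subsets, so $Q(M\circ L) = Q(M)\circ Q(L)$ without any computation.

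Next I would check the symmetric monoidal and $\dag$-compact-closed structure. The monoidal product in $\ALR$ is the direct sum $(V\oplus W, \omega_V\oplus\omega_W)$ of symplectic spaces, whose underlying vector space is the direct sum $V\oplus W$, i.e.\ the monoidal product in $\AR$; and the monoidal unit $I = (\F_p^0,\omega_0)$ is sent to $\F_p^0$. The symmetry relations are the same swap subsets in both categories. The dagger in both categories is the relational converse, so $Q(L^\dag)=Q(L)^\dag$. For compact closure, the dual $(V,\omega_V)^* = (V,-\omega_V)$ in $\ALR$ has underlying space $V = Q(V,\omega_V)$, which is exactly how $\AR$ is self-dual on objects; and the cups and caps in both categories are the diagonal subsets $\{(\bullet,(\mathbf{v},\mathbf{v}))\}$ and its converse, so they are preserved on the nose. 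Thus $Q$ is a strict $\dag$-compact-closed functor.

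Finally, faithfulness is essentially tautological: if $L,M:(V,\omega_V)\to(W,\omega_W)$ are affine Lagrangian relations with $Q(L)=Q(M)$, then $L$ and $M$ are equal as subsets of $V\oplus W$, hence equal as morphisms of $\ALR$. The only mild subtlety I anticipate is taking care that distinct symplectic objects can share the same underlying vector space (e.g.\ $(V,\omega_V)$ and $(V,-\omega_V)$), so $Q$ need not be injective on objects; this is harmless because faithfulness is a hom-wise condition, and on each fixed hom-set $\ALR((V,\omega_V),(W,\omega_W))\hookrightarrow \AR(V,W)$ is literally a subset inclusion. No step here looks like a genuine obstacle; if anything warrants care, it is just the bookkeeping that the affine Lagrangian condition only restricts \emph{which} affine subspaces appear, not the ambient set, so the forgetful assignment really does land in $\AR$ and preserves all structure strictly.
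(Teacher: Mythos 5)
Your proof is correct and is exactly the routine verification the paper leaves implicit (the lemma is stated without proof there): since every piece of structure in both $\ALR$ and $\AR$ is defined by the same underlying set-theoretic data, the forgetful assignment is a strict $\dag$-compact-closed functor, and faithfulness is the hom-wise subset inclusion you describe. Your remark that $Q$ is not injective on objects but that this does not affect faithfulness is a sensible precaution and does not change the argument.
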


Instead of directly forming \(\Split(\ACR)\) to add classical state spaces to \(\ACR\), we can add additional affine relations to the image of \(Q:\ACR\rightarrowtail \AR\) which \dag-split \dag-idempotents:
\begin{proposition}
    The \dag-idempotents in \(\ACR\) \dag-split through the forgetful functor \(\ACR\rightarrowtail \AR\).
    In particular, Pauli-\(Z\) measurement splits through the relations:
     \[\mu_Z\!\coloneqq\!\left\{ \left( \bigl[\begin{smallmatrix} x\\z \end{smallmatrix}\bigr], x\right) \!\in\! \Zp^2\oplus\Zp\right\}\!:\! Q(\Zp^2, \omega_2)\!\twoheadrightarrow\!\Zp,\ \ \text{and}\ \ \mu_Z^\dag \!\coloneqq\!\left\{\left( x,\bigl[\begin{smallmatrix} x\\z \end{smallmatrix}\bigr]\right)\!\in\! \Zp\oplus\Zp^2\right\}\!:\! \Zp\!\rightarrowtail\!Q(\Zp^2, \omega_2).\]
\end{proposition}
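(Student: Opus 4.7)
The plan is to show that every $\dag$-idempotent in $\mathsf{ACR}$ is, when read set-theoretically, a partial equivalence relation carrying compatible affine structure, and to split such PERs through their quotient set.

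Let $e : (V,\omega_V) \to (V,\omega_V)$ be a $\dag$-idempotent in $\mathsf{ACR}$, so $e \subseteq V \oplus V$ is an affine coisotropic subspace with $e^\dag = e$ and $e \circ e = e$. Interpreting $e$ as a subset of $V \times V$, $\dag$-invariance is symmetry, while idempotency combined with symmetry forces transitivity together with reflexivity on the (set-theoretic) domain of $e$: if $(v,w) \in e$, then $(w,v) \in e$, hence $(v,v) \in e \circ e = e$. Define
\[ D \coloneqq \{v \in V : (v,v) \in e\}, \qquad K \coloneqq \{v \in V : (0,v) \in e\}. \]
Because $e$ is affine and $0 \in D$ in the nontrivial case (the empty $e$ splits trivially), $D$ is a linear subspace of $V$ and so is $K$; transitivity of $e$ restricted to $D \times D$ then identifies $e|_{D \times D}$ with congruence modulo $K$. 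Fixing a linear isomorphism $D/K \cong \F_p^k$, the quotient relation
\[ m \coloneqq \{(v, v + K) : v \in D\} \colon Q(V,\omega_V) \twoheadrightarrow \F_p^k \]
is an affine subspace of $V \oplus \F_p^k$, hence a morphism of $\mathsf{AR}$.

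The remaining checks are routine. Composing, $m \circ m^\dag$ sends $v+K$ to all cosets $w+K$ with $w \equiv v \pmod K$, i.e. to $v+K$ itself, so $m \circ m^\dag = 1_{\F_p^k}$; and $m^\dag \circ m = \{(v,w) \in D \times D : w - v \in K\} = e$. This exhibits the claimed $\dag$-splitting. For the specific case, I would instantiate this construction at $e = \mathscr{E}_Z$ from Example~\ref{ex:channel_relations}, where $D = \F_p^2$ and $K = \{[0,z] : z \in \F_p\}$, so that $D/K \cong \F_p$ via $[x,z] \mapsto x$; under this identification $m$ becomes exactly $\mu_Z$, and direct computation gives $\mu_Z \circ \mu_Z^\dag = 1_{\F_p}$ and $\mu_Z^\dag \circ \mu_Z = \mathscr{E}_Z$.

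The main obstacle will be confirming that $D$ and $K$ are genuinely linear and that the quotient map $m$ is an affine relation, rather than merely set-theoretic; this amounts to using that $e$ is an affine subspace of $V \oplus V$ and therefore closed under $\F_p$-affine combinations, which transfers to $D$, $K$ and their quotient. No symplectic information about $K$ survives into $\F_p^k$, which is precisely the reason the splitting is only available after forgetting symplectic structure via $Q$.
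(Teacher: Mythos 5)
Your route is genuinely different from the paper's. The paper's proof is a normal-form argument: it asserts that every \dag-idempotent in \(\ACR\) is symplectomorphic to \(\mathscr{E}_Z^{\oplus n}\oplus 1_{(\Zp^{2m},\omega_m)}\) and then splits that normal form through \(\mu_Z\) and \(\mu_Z^\dag\). You instead split an arbitrary \dag-idempotent intrinsically, by observing that it is a partial equivalence relation compatible with the affine structure and passing to the quotient. This buys you something real: you never need the classification (which the paper states without proof), and your verification of the splitting identities and of the special case \(\mu_Z^\dag\circ\mu_Z=\mathscr{E}_Z\), \(\mu_Z\circ\mu_Z^\dag=1_{\Zp}\) is correct.

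There is, however, one concrete error: the claim that \(0\in D\) whenever \(e\) is non-empty, and hence that \(D\) and \(K\) are \emph{linear}. A \dag-idempotent in \(\ACR\) need not contain \((0,0)\): for any non-empty affine Lagrangian subspace \(A=c+W\subseteq(V,\omega_V)\) with \(c\notin W\), the relation \(e=A\times A=(c,c)+(W\oplus W)\) is affine Lagrangian (hence coisotropic), symmetric and idempotent, yet its set-theoretic domain is \(A\), which does not contain \(0\); with your definitions \(K=\{v:(0,v)\in e\}\) is then empty and the construction collapses. (These idempotents correspond to subalgebras of the form \(\C\,\dyad{\psi}{\psi}\subseteq\mathcal{B}(\mathcal{H}_p)\), so they cannot be excluded — and note that they also appear to fall outside the normal form asserted in the paper's own proof.) The repair is routine: \(D=\{v:(v,v)\in e\}\) is a non-empty \emph{affine} subspace, each equivalence class is a coset \(v+K\) of one fixed linear subspace \(K\) of the linear part of \(D\) (all fibres of an affine subspace over its projection share the same linear part), and one splits through an affine identification \(D/K\cong\F_p^k\); the quotient map is still an affine relation, so the splitting still lands in \(\AR\). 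With that correction your argument is complete.
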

\begin{proof}
    Consider the relation in \(\ACR\) corresponding to the \(Z\)-decohering channel:
    \[\mathscr{E}_Z \coloneqq\rel(\mathcal{E}_Z)= \left\{\left( \bigl[\begin{smallmatrix} x\\z \end{smallmatrix}\bigr], \bigl[\begin{smallmatrix} x\\z' \end{smallmatrix}\bigr]\right)\in\Zp^2\oplus\Zp^2 \right\}:(\Zp^2, \omega_2) \to (\Zp^2, \omega_2)\]
    Any \dag-idempotent in \(\ACR\) is symplectomorphic to \(\mathscr{E}_Z^{\oplus n} \oplus 1_{(\Zp^{2m},\omega_m)}\) for some \(n,m \in \N\). Moreover, \(Q(\mathscr{E}_Z)= \mu_Z^\dag\circ \mu_Z\) splits as \(\mu_Z\circ \mu_Z^\dag = 1_{\Zp}\).
\end{proof}
In other words whereas the state space of a qupit is represented by the symplectic vector space \((\F_p^2,\omega_1)\), we only need the 1-dimensional vector space \(\F_p\) to represent a classical state space.
The process of \dag-splitting  \dag-idempotents through \(Q:\ACR\rightarrowtail \AR\) adds a \emph{quantum modality} \(Q\) to \(\AR\); imposing compatibility with the symplectic structure:
\begin{definition}
    Let \(\ARQ\) denote the \dag-CCC with:
    \begin{itemize}[nosep]
        \item \textbf{Objects:} generated by finite direct sums of:
        \begin{itemize}
            \item finite dimensional symplectic vector spaces \(Q(V, \omega_V) \in Q(\ACR)\); 
            \item  finite dimensional vector spaces \(W \in \AR\);
        \end{itemize}
        \item \textbf{Morphisms:} generated by \(Q(\ACR)\) in addition to the morphisms under the direct sum and composition of:

        \begin{itemize}
            \item the coisometry  \(\mu_Z: Q(\Zp^2, \omega_2) \twoheadrightarrow \Zp \), corresponding to measurement in the \(Z\)-basis;
            \item   the isometry \(\mu_Z^\dag: \Zp\rightarrowtail Q(\Zp^2, \omega_2) \), corresponding to state preparation in the \(Z\)-basis.
        \end{itemize}
    \end{itemize}
\end{definition}
That is to say that if we have an object \(Q(\F_p^{2n}, \omega_n)\), we know it represents an n-qupit quantum system; whereas an object \(\F_p^n\) represents an n-pit classical system.
By restricting to either class of objects, this justifies our interpretation of \(Q(\F_p^{2n}, \omega_n)\) as the quantum system and \(\F_p^n\) as a classical system:
\begin{lemma}
    \(\ACR\) and \(\AR\) are full \dag-compact closed subcategories of \(\ARQ\).
\end{lemma}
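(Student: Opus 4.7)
The plan is to identify $\ARQ$ with an essentially surjective image of the $\dag$-idempotent completion $\Split(\ACR)$, and then invoke the fact---immediate from Definition~\ref{def:split}---that $\ACR \rightarrowtail \Split(\ACR)$ is fully faithful. By the previous proposition, every $\dag$-idempotent in $\ACR$ is symplectically equivalent to $\mathscr{E}_Z^{\oplus n} \oplus 1_{(\Zp^{2m}, \omega_m)}$; moreover, the identities $\mu_Z \circ \mu_Z^\dag = 1_\Zp$ and $\mu_Z^\dag \circ \mu_Z = Q(\mathscr{E}_Z)$ verified in the preceding proof ensure that each such idempotent (after applying $Q$) splits inside $\ARQ$, with splitting object a classical object $\Zp^n$ possibly direct-summed with quantum identities.

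By the universal property of the $\dag$-idempotent completion, this assignment yields a $\dag$-compact-closed functor $\Phi\colon \Split(\ACR) \to \ARQ$ extending the inclusion $\ACR \rightarrowtail \ARQ$. It is essentially surjective by inspection of the generating objects of $\ARQ$: quantum objects $Q(V, \omega_V)$ come from $(V, 1_V)$, and classical objects $\Zp^n$ from the splitting just described, and $\Phi$ preserves direct sums. To show $\Phi$ is fully faithful, I would use a rewriting argument. Every morphism of $\ARQ$ is a composite---assembled via tensor product, composition and compact-closed structure---of generators drawn from $Q(\ACR) \cup \{\mu_Z, \mu_Z^\dag\}$; iteratively applying the two equations $\mu_Z \mu_Z^\dag = 1_\Zp$ and $\mu_Z^\dag \mu_Z = Q(\mathscr{E}_Z)$ brings every such composite into a normal form visibly in the image of $\Phi$, and the represented morphism in $\Split(\ACR)$ is uniquely determined since splittings of $\dag$-idempotents are unique up to unique isomorphism.

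Restricting $\Phi$ to the full subcategory spanned by quantum objects $Q(V,\omega_V)$ yields the desired full embedding $\ACR \rightarrowtail \ARQ$. For $\AR$: morphisms $\Zp^n \to \Zp^m$ in $\ARQ$ correspond under $\Phi$ to affine coisotropic relations $R\colon (\Zp^{2n}, \omega_n) \to (\Zp^{2m}, \omega_m)$ satisfying $\mathscr{E}_Z^{\oplus m} \circ R = R = R \circ \mathscr{E}_Z^{\oplus n}$; the assignment $R \mapsto \mu_Z^{\oplus m} \circ R \circ (\mu_Z^{\oplus n})^\dag$ is a bijection between such $R$ and arbitrary affine relations $\Zp^n \to \Zp^m$, which agrees with the generating embedding $\AR \rightarrowtail \ARQ$.

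The main obstacle is the rewriting argument underpinning fullness. Compact-closed diagrams built freely over $Q(\ACR) \cup \{\mu_Z, \mu_Z^\dag\}$ can route classical subwires around cups and caps of $\AR$ (exploiting $\Zp^* = \Zp$), and one has to verify combinatorially that every classical subwire is ultimately capped in pairs by $\mu_Z$/$\mu_Z^\dag$ so that the two rewrite equations suffice to reduce the whole diagram to a morphism living entirely in $Q(\ACR)$ (for morphisms between quantum objects) or in $\AR$ (for morphisms between classical objects).
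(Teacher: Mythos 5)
The paper states this lemma without proof, so there is no official argument to compare against; judged on its own terms, your proposal assembles the right ingredients but leaves the decisive step open. You route everything through the equivalence $\Split(\ACR)\simeq\ARQ$ (which is the paper's \emph{next} theorem), and you correctly observe that essential surjectivity and faithfulness of the comparison functor $\Phi$ follow from $\mu_Z\circ\mu_Z^\dag=1_{\Zp}$, $\mu_Z^\dag\circ\mu_Z=Q(\mathscr{E}_Z)$, faithfulness of $Q$, and the purification of an affine relation $S:\Zp^n\to\Zp^m$ to the coisotropic relation $(\mu_Z^\dag)^{\oplus m}\circ S\circ\mu_Z^{\oplus n}$ (that purification is indeed coisotropic, and your bijection on classical hom-sets is correct). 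But fullness of $\Phi$ on quantum objects is literally the same assertion as fullness of $\ACR\rightarrowtail\ARQ$ --- namely that composing through classical wires never produces a non-coisotropic relation between quantum objects --- so the detour through $\Split(\ACR)$ buys nothing for the hard part, and you explicitly defer that assertion to an unexecuted ``rewriting argument'' whose combinatorics (classical subwires routed around cups and caps) you yourself flag as the main obstacle. As written, the proof is a sketch with its crux missing.

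The gap can be closed without any diagrammatic normal form, because all morphisms of $\ARQ$ already live in $\AR$ and compose relationally. For each object $X$ of $\ARQ$, let $D(X)$ be the symplectic object obtained by replacing each classical summand $\Zp$ by $(\Zp^2,\omega_1)$, and let $m_X: Q(D(X))\to X$ be the split epimorphism built from identities on quantum summands and $\mu_Z$ on classical ones, so that $m_X\circ m_X^\dag=1_X$ while $m_X^\dag\circ m_X$ is the idempotent acting as $Q(\mathscr{E}_Z)$ on classical summands and as the identity on quantum ones. The assignment $f\mapsto m_Y^\dag\circ f\circ m_X$ preserves composition (the $m_Y\circ m_Y^\dag=1_Y$ cancels in the middle) and direct sums, and sends every generator of $\ARQ$ into $Q(\ACR)$: $Q(C)\mapsto Q(C)$ and $\mu_Z,\mu_Z^\dag\mapsto Q(\mathscr{E}_Z)$. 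Since $\ACR$ is closed under composition and $\oplus$, it therefore sends \emph{every} morphism of $\ARQ$ to (the image under $Q$ of) a coisotropic relation. For $X,Y$ quantum it is the identity, which gives fullness of $\ACR\rightarrowtail\ARQ$; for $X,Y$ classical it exhibits exactly the bijection with affine relations from your final paragraph, which together with the purification above gives fullness of $\AR\rightarrowtail\ARQ$. I recommend replacing the rewriting plan with this conjugation argument.
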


Moreover, because \(Q:\ACR\rightarrowtail \AR\) is faithful, adding these two relations produces the same thing as directly adding the classical objects through the process given in  definition~\ref{def:split}:
\begin{theorem}
    \label{theorem:splitting_modality}
    There is a \dag-compact closed equivalence \(\Split(\ACR) \simeq \ARQ\).
\end{theorem}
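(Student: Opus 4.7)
The plan is to construct an explicit $\dag$-compact-closed functor $F : \ARQ \to \Split(\ACR)$ and show it is essentially surjective and fully faithful. On objects, set $F(Q(V, \omega_V)) \coloneqq ((V, \omega_V), 1_{(V, \omega_V)})$ and $F(\Zp) \coloneqq ((\Zp^2, \omega_1), \mathscr{E}_Z)$, extended to finite direct sums using the direct sum in $\Split(\ACR)$. On morphisms, $F$ acts as the canonical embedding on the full subcategory $\ACR \subset \ARQ$, and sends the additional generators $\mu_Z$ and $\mu_Z^\dag$ to the two canonical maps witnessing the $\dag$-splitting of $\mathscr{E}_Z$ in $\Split(\ACR)$, namely $\mathscr{E}_Z : ((\Zp^2, \omega_1), 1_{(\Zp^2, \omega_1)}) \to ((\Zp^2, \omega_1), \mathscr{E}_Z)$ and its converse. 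The two defining relations $\mu_Z \circ \mu_Z^\dag = 1_\Zp$ and $\mu_Z^\dag \circ \mu_Z = \mathscr{E}_Z$ hold in $\Split(\ACR)$ by construction together with the preceding proposition, so $F$ is well-defined and $\dag$-compact-closed by inheritance from $\ACR$.

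For essential surjectivity, I use the classification established in the preceding proposition: every $\dag$-idempotent $e$ on a symplectic vector space $(V, \omega_V)$ in $\ACR$ is symplectomorphic to $\mathscr{E}_Z^{\oplus n} \oplus 1_{(\Zp^{2m}, \omega_m)}$ for some $n, m \in \N$. Consequently, every object $((V, \omega_V), e)$ of $\Split(\ACR)$ is isomorphic to $F(\Zp^n \oplus Q(\Zp^{2m}, \omega_m))$ via an isomorphism assembled from the witnessing symplectomorphism, copies of $\mu_Z$, and identities. Faithfulness is straightforward: $\ARQ$ contains $\ACR$ as a full subcategory by the preceding lemma, its morphism data factors through $\AR$ via the faithful functor $Q$, and $F$ is plainly injective on the additional generators $\mu_Z, \mu_Z^\dag$.

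The main obstacle is fullness. Given any morphism $g : (V, v) \to (W, w)$ in $\Split(\ACR)$ (so an affine coisotropic relation satisfying $w \circ g \circ v = g$), I would reduce via essential surjectivity to the case where $v, w$ are standard-form $\dag$-idempotents $\mathscr{E}_Z^{\oplus n_i} \oplus 1_{(\Zp^{2m_i}, \omega_{m_i})}$, and then express $g$ as $\mu_w \circ r \circ \mu_v^\dag$ for some affine coisotropic relation $r$, where $\mu_v, \mu_w$ are direct sums of $\mu_Z$s and identities obtained by assembling the splitting data. This composite is a legal morphism of $\ARQ$ that maps under $F$ back to $g$, using the splitting identities $\mu_Z^\dag \mu_Z = \mathscr{E}_Z$ and $\mu_Z \mu_Z^\dag = 1$. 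The care needed is to verify that this reconstruction only uses the two defining $\dag$-splitting relations, so that no further quotient is imposed on $\ARQ$ relative to $\Split(\ACR)$, and that the choice of $r$ is consistent with all the $\dag$-compact structure inherited from $\ACR$.
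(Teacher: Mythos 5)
The paper states this theorem without an explicit proof, so there is no line-by-line comparison to make; your strategy assembles exactly the ingredients the paper lines up beforehand (the classification of \dag-idempotents in \(\ACR\) as \(\mathscr{E}_Z^{\oplus n}\oplus 1\), the splitting of \(\mathscr{E}_Z\) through \(\mu_Z\), and the faithfulness of \(Q\)), and your essential-surjectivity argument via the standard form of \dag-idempotents is fine.

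There is, however, a genuine gap in how you set up the functor \(F:\ARQ\to\Split(\ACR)\), and it propagates into your faithfulness claim. \(\ARQ\) is not presented by generators and relations: it is a concrete subcategory of \(\AR\), whose morphisms are whichever affine relations arise as composites and direct sums of the generators. To define a functor out of it by prescribing images of generators, you must show that any two composites denoting the \emph{same affine relation} receive the same image in \(\Split(\ACR)\); checking that the two splitting identities \(\mu_Z\circ\mu_Z^\dag=1_{\Zp}\) and \(\mu_Z^\dag\circ\mu_Z=Q(\mathscr{E}_Z)\) hold in the target is necessary but not sufficient, since a priori \(\ARQ\) could satisfy further coincidences not generated by them. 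The missing lemma --- which is really the content of the theorem --- is a normal form: every morphism of \(\ARQ\) between split objects equals \(\mu_w\circ Q(r)\circ\mu_v^{\dag}\) for a \emph{unique} \(r\) in \(\ACR\) with \(w\circ r\circ v=r\). You gesture at this in the fullness paragraph, but it is needed \emph{before} \(F\) is even well defined, and your faithfulness argument (``factors through \(\AR\) via the faithful \(Q\)'') silently relies on it as well. The clean repair is to reverse the direction: define \(G:\Split(\ACR)\to\AR\) concretely by \(G(f)\coloneqq\mu_w\circ Q(f)\circ\mu_v^{\dag}\) using the chosen splittings. Functoriality follows from \(\mu_v\circ\mu_v^{\dag}=1\), faithfulness from the fact that pre- and post-composing with \(\mu_v^{\dag}\) and \(\mu_w\) recovers \(Q(w\circ f\circ v)=Q(f)\), and the image of \(G\) is exactly \(\ARQ\) because it contains all the generators and is closed under composition and direct sums. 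No presentation of \(\ARQ\) is ever needed, and your essential-surjectivity and fullness computations then go through unchanged.
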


In other words, this category is obtained by glueing together the quantum \dag-CCC \(\ALR\) with the classical \dag-CCC \(\AR\) along the measurement map \(\mu_Z\) and the state preparation map \(\mu_Z^\dag\).
Finally, now that we can capture quantum-classical interfaces using the symplectic representation, our moto becomes:
\begin{center}
  \boxed{\emph{Everything is an affine relation, with quantum data captured by a \underline{symplectic modality}!}}
\end{center}
...which, admittedly, is not \emph{quite} as catchy as the previous motos.\\

However, just as for \(\Split(\CPM(\FHilb))\); the category \(\Proj(\Split(\CPM(\Stab_p))\simeq \ARQ\) has morphisms which do not correspond to operations which can be physically implemented.  We restrict ourselves to the completely-positive trace-preserving maps:

\begin{proposition}
  \label{proposition:causal_total_split}
  The induced functor \(\rel:\Split(\CPM(\Stab_p)) \to \ARQ\) restricts to a symmetric monoidal equivalence \(\Caus(\Split(\CPM(\Stab_p)))\simeq \Total(\ARQ)\) making the following diagram of symmetric monoidal categories commute:
\[\begin{tikzcd}[row sep=.25cm, cells={nodes={inner sep=1pt, outer sep=0pt}}]
	{\Total(\ARQ)} && {\ARQ} \\[-.1cm]
	{\Caus(\Split(\CPM(\Stab_p)))} & {\Split(\CPM(\Stab_p))} & {\Proj(\Split(\CPM(\Stab_p)))} \\
    {\Caus(\Split(\CPM(\FHilb)))} & {\Split(\CPM(\FHilb))} & {\Proj(\Split(\CPM(\FHilb)))} \\
	\begin{array}{c} \begin{matrix}\text{CPTP maps between}\\[-2pt]\text{f.d.\ \(C^*\)-algebras}\end{matrix} \end{array} & \begin{array}{c} \begin{matrix}\text{CP maps between}\\[-2pt]\text{f.d.\ \(C^*\)-algebras}\end{matrix} \end{array} & \begin{array}{c} \Proj\left(\begin{matrix}\text{CP maps between}\\[-2pt]\text{f.d.\ \(C^*\)-algebras}\end{matrix}\right) \end{array}
	\arrow[tail, from=1-1, to=1-3]
	\arrow["\!\simeq"{marking, allow upside down}, draw=none, from=1-1, to=2-1]
	\arrow["\!\simeq"{marking, allow upside down}, draw=none, from=1-3, to=2-3]
	\arrow[tail, from=2-1, to=2-2]
	\arrow[tail, from=2-1, to=3-1]
	\arrow[two heads, from=2-2, to=2-3]
	\arrow[tail, from=2-2, to=3-2]
	\arrow[tail, from=2-3, to=3-3]
	\arrow[tail, from=3-1, to=3-2]
	\arrow[two heads, from=3-2, to=3-3]
    \arrow["\!\simeq"{marking, allow upside down}, draw=none, from=3-1, to=4-1]
    \arrow["\!\simeq"{marking, allow upside down}, draw=none, from=3-2, to=4-2]
    \arrow["\!\simeq"{marking, allow upside down}, draw=none, from=3-3, to=4-3]
    \arrow[tail, from=4-1, to=4-2]
    \arrow[two heads, from=4-2, to=4-3]
\end{tikzcd}\]

\end{proposition}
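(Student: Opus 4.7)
The plan is to lift Proposition~\ref{proposition:causal_total_cpm} one construction deeper, using Theorem~\ref{theorem:splitting_modality} to replace $\Split(\ACR)$ by $\ARQ$. The first step is to observe that the $\dag$-idempotent completion is (2-)functorial on $\dag$-compact functors, so the essentially surjective and full functor $\rel : \CPM(\Stab_p) \twoheadrightarrow \ACR$ from Corollary~\ref{corollary:cpm_acr} induces a $\dag$-compact functor $\Split(\rel): \Split(\CPM(\Stab_p)) \to \Split(\ACR) \simeq \ARQ$. Post-composing with the equivalence of Theorem~\ref{theorem:splitting_modality} produces the desired functor $\rel : \Split(\CPM(\Stab_p)) \to \ARQ$, and the lower faces of the diagram commute by the corresponding naturality for the plain CPM construction already established in Proposition~\ref{proposition:causal_total_cpm}.

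Next, I would check that $\Split(\rel)$ is essentially surjective and full. An object of $\Split(\ACR)$ is a pair $(A,a)$ with $a$ a $\dag$-idempotent in $\ACR$. By fullness and essential surjectivity of $\rel$, any such $a$ lifts to some morphism $\tilde a$ in $\CPM(\Stab_p)$ with $\rel(\tilde a) = a$ and satisfying $\tilde a \circ \tilde a = \lambda \tilde a$, $\tilde a^\dag = \mu \tilde a$ for invertible scalars $\lambda, \mu$ (coming from the kernel of $\Proj$). Rescaling $\tilde a$ by a suitable invertible scalar turns these equations into honest identities; this uses that the invertible scalars in $\CPM(\Stab_p)$ form a multiplicative group in which square roots exist where needed, which is the case for stabiliser scalars. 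An analogous rescaling argument makes $\Split(\rel)$ full on morphisms. This yields the equivalence $\Proj(\Split(\CPM(\Stab_p))) \simeq \Split(\ACR) \simeq \ARQ$.

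To restrict the equivalence to causal and total subcategories, I would reuse verbatim the faithfulness argument of Proposition~\ref{proposition:causal_total_cpm}: suppose $[f,S]$ and $[g,T]$ in $\Split(\CPM(\Stab_p))$ are both causal and satisfy $[g,T] = \lambda\cdot[f,S]$ for some invertible scalar $\lambda$. Applying $\Tr_Y$ to both sides gives $\Tr_X = \Tr_Y[g,T] = \lambda\,\Tr_Y[f,S] = \lambda\,\Tr_X$, forcing $\lambda = 1$. Hence at most one representative of each projective equivalence class is causal, so the composite $\rel$ restricts to a faithful, full, essentially surjective (by the total-map version of the lifting argument from Proposition~\ref{proposition:causal_total_cpm}, now applied to $\dag$-idempotents $a$ satisfying additionally totality) symmetric monoidal functor $\Caus(\Split(\CPM(\Stab_p))) \to \Total(\ARQ)$, which is therefore an equivalence.

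The main obstacle I expect is step two: verifying that $\Split$ genuinely lifts the essential surjectivity and fullness of $\rel$ from Corollary~\ref{corollary:cpm_acr} rather than merely the equivalence after projectivisation. The rescaling trick needed to upgrade ``$f \circ f = \lambda f$'' to an honest idempotent must be compatible with the dagger condition $f^\dag = \mu f$ simultaneously; one has to argue that for stabiliser scalars, $\lambda$ and $\mu$ are constrained enough (by self-adjointness of the idempotent and the algebra of invertible stabiliser phases) that a single rescaling factor works. Once this is secured, the rest of the proof is a diagram chase identical in structure to Proposition~\ref{proposition:causal_total_cpm}.
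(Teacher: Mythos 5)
Your proposal is correct and follows essentially the same route as the paper, whose entire proof of this proposition is ``the same argument as Proposition~\ref{proposition:causal_total_cpm}'' lifted through the \dag-idempotent completion and the equivalence \(\Split(\ACR)\simeq\ARQ\) of Theorem~\ref{theorem:splitting_modality}. The one obstacle you flag --- lifting \dag-idempotents of \(\ACR\) to honest \dag-idempotents of \(\CPM(\Stab_p)\) rather than idempotents-up-to-scalar --- is already discharged by the paper's classification of \dag-idempotents in \(\ACR\) as symplectomorphic to \(\mathscr{E}_Z^{\oplus n}\oplus 1\), which lift concretely to the genuine decoherence channels; and in any case your rescaling works, since the invertible scalars of \(\CPM(\Stab_p)\) are positive reals, which forces \(\mu=1\) and lets a single factor \(\lambda^{-1}\) fix both equations.
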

\begin{proof}
     This follows from essentially the same argument as for proposition~\ref{proposition:causal_total_cpm}.
\end{proof}

Note that classically-controlled Pauli operators can be represented in \(\Total(\ARQ)\) because they can be constructed with Clifford operators as well as Pauli state preparation and measurements.  For notational convenience, from here onwards, denote the subcategory of \(\QChan\) whose morphisms are qupit \textbf{stabiliser quantum channels} by 
\[\StabChan_p \coloneqq \Caus(\Split(\CPM(\Stab_p))).\]

Because all of the morphisms in \(\Total(\ARQ)\) are affine subspaces of vector spaces over \(\Zp\), this means that using the representation \(\Total(\ARQ)\simeq \StabChan_p\), the  exact equality of stabiliser quantum channels is computable in  deterministic polynomial time. This is a deterministic analogue of the celebrated, probabilistic Gottesman-Knill theorem \cite{aaronson2004improved}.

Again, note that everything carries through for qubits when restricting to the setting of CSS maps, except that the relations for the Pauli-\(X\) measurement and state preparation must also be included, because the Hadamard gate is not a CSS Clifford.
\subsection{Arbitrary classical operations}
\label{measurement:nonlinear_control}

By relaxing the requirement that the relations between classical objects are affine relations we obtain the following category:
\begin{definition}
    Let \(\NLQ\) denote the \dag-compact closed category given by the objects and morphisms of \(\ARQ\) in addition to the \emph{non-affine} relation between classical types:
    \[ \left\{ \left( \bigl[\begin{smallmatrix} a \\ b \end{smallmatrix}\bigr], a\cdot b \right) \in \Zp^2 \oplus \Zp\right\}:\Zp^2 \to \Zp.\]
\end{definition}
Indeed, since all functions \(\F_p^n \to \F_p^m\) can be written as polynomials, it follows that:
\begin{lemma}
The morphisms from \(\Zp^n\) to \(\Zp^m\) in \(\NLQ\) are precisely set-relations from the set \(\Zp^n\) to the set \(\Zp^m\) , ie. subsets of \(\Zp^n \oplus \Zp^m\).
\end{lemma}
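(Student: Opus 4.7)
The forward inclusion is immediate: every generator of $\NLQ$ between classical types---the affine relations of $\AR \subseteq \ARQ$ and the newly-added multiplication---is itself a subset of the appropriate direct sum of copies of $\Zp$, and both relational composition and direct sum preserve the property of being a set-theoretic relation; hence any morphism $\Zp^n \to \Zp^m$ in $\NLQ$ is some subset of $\Zp^n \oplus \Zp^m$.

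For the converse, the plan is to exhibit an arbitrary relation $R \subseteq \Zp^n \oplus \Zp^m$ as the zero set of a suitable polynomial, imposed via an affine equality constraint. First, I would show that every polynomial function $f : \Zp^{k} \to \Zp^{\ell}$ is realisable in $\NLQ$: the constructors---addition, scalar multiplication, constants, copying (the diagonal affine relation), and projections---all lie in $\AR \subseteq \NLQ$, while multiplication is the new generator, so a straightforward structural induction on polynomial expressions produces every polynomial function. Second, by Lagrange interpolation every function $\F_p^k \to \F_p$ is a polynomial; in particular, for any $R \subseteq \Zp^n \oplus \Zp^m$ there is a polynomial $p_R : \Zp^{n+m} \to \Zp$ whose zero set is exactly $R$ (for instance, take $p_R \coloneqq 1 - \mathbf{1}_R$ where $\mathbf{1}_R$ is the polynomial indicator of $R$, which exists for the same Lagrange-interpolation reason).

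Finally, I would use the compact-closed structure inherited by $\NLQ$ from $\AR$ to assemble $R$: composing $p_R$ with the affine costate $\{(0,\bullet)\} : \Zp \to I$ yields a costate $\Zp^{n+m} \to I$ whose underlying subset is precisely $\{\vb{v} : p_R(\vb{v}) = 0\} \times \{\bullet\} = R$, and bending a wire via the self-dual compact structure on $\Zp^{n+m} \cong \Zp^n \oplus \Zp^m$ turns this costate into a morphism $\Zp^n \to \Zp^m$ with the same underlying relation $R$. The only step requiring input beyond syntactic manipulation is the Lagrange-interpolation observation that every subset of $\Zp^k$ arises as the zero set of a polynomial; the rest is standard bookkeeping in a compact-closed category of relations, so this is where I would focus to make the construction airtight.
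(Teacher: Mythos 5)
Your proposal is correct and follows the same route the paper gestures at: the paper offers only the one-line remark that every function \(\F_p^n \to \F_p^m\) is a polynomial, and your argument (realise all polynomial functions from the affine generators plus multiplication, cut out an arbitrary subset as the zero set of a Lagrange-interpolated indicator, then bend wires with the compact-closed structure) is the natural completion of that remark. The only cosmetic point is that the forward inclusion should be phrased in terms of \emph{all} generators of \(\NLQ\) being set-relations (a classical-to-classical morphism may factor through quantum objects), but this changes nothing since relational composition still yields a subset of \(\Zp^n \oplus \Zp^m\).
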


Therefore, the total relations between classical objects are precisely the total relations between the underlying sets.  These are exactly the classical corrections which can be performed nondeterministically.
However, by relaxing the affine constraints between classical objects, the morphisms between quantum objects in \(\NLQ\) also fail in general to be affine coisotropic subspaces, or even just affine subspaces at all.  In particular, this means that the symplectic representation of stabiliser circuits fails:

\begin{proposition}
  There is no functor \(\Total(\NLQ)\to \StabChan_p\) making the following diagram commute:
  \begin{center}
  \(
    \begin{tikzcd}[column sep=32pt, row sep=10pt, cells={nodes={inner sep=1.5pt}}]
      \Total(\ARQ) \arrow[tail]{r} \arrow[dr, "\simeq"'] &
      \Total(\NLQ)
        \arrow[d, densely dotted,
               start anchor=south, end anchor=north,
               shorten <=0pt, shorten >=0pt,
               "\nexists"{yshift=2pt}] \\
      &
      \StabChan_p
    \end{tikzcd}
  \)
  \end{center}
\end{proposition}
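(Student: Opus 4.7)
The plan is to assume such a functor $F$ exists and derive a contradiction by tracking what $F$ must do to the non-affine multiplication relation
\[
m \coloneqq \left\{\left(\bigl[\begin{smallmatrix} a\\b\end{smallmatrix}\bigr], ab\right) \;\middle|\; a,b\in\F_p\right\} : \F_p^2 \to \F_p
\]
in $\Total(\NLQ)$. The key observation is that $m$ is forced to factor through the classical part of $\StabChan_p$ in a very specific way, and that this is incompatible with the fact that only affine classical processing survives under the symplectic semantics.

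First, for each $(a,b)\in\F_p^2$ consider the ``point'' morphism $\lceil(a,b)\rceil \coloneqq \{(\bullet,(a,b))\} : I \to \F_p^2$. This is a $0$-dimensional affine subspace of $I \oplus \F_p^2$ and is total, hence a morphism in $\Total(\ARQ)$. Commutativity of the diagram then forces $F(\lceil(a,b)\rceil)$ to equal the image of $\lceil(a,b)\rceil$ under the equivalence $\Total(\ARQ) \simeq \StabChan_p$ of Proposition~\ref{proposition:causal_total_split}, namely the Dirac state $\delta_{(a,b)}$ on the classical algebra $F(\F_p^2) \cong \mathbb{C}^{p^2}$. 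Likewise $F(\lceil ab\rceil) = \delta_{ab}$.

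Since in $\Total(\NLQ)$ one has $m \circ \lceil(a,b)\rceil = \lceil ab\rceil$, functoriality of $F$ yields $F(m) \circ \delta_{(a,b)} = \delta_{ab}$ for every $(a,b)$. By linear extension, $F(m)$ is therefore uniquely determined: it must be the deterministic classical channel implementing the set-theoretic function $(a,b) \mapsto ab$. Next, I would invoke the lemma that $\AR$ is a full $\dag$-compact-closed subcategory of $\ARQ$, so that morphisms $\F_p^n \to \F_p^m$ in $\Total(\ARQ)$ are precisely total affine relations. Composing with the equivalence of Proposition~\ref{proposition:causal_total_split} yields a bijection between classical-to-classical stabiliser channels and total affine relations, under which a deterministic relation corresponds exactly to the graph of an affine function. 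Since $(a,b) \mapsto ab$ is not an affine function of $(a,b)$ for any prime $p$, no such morphism exists in $\StabChan_p$, contradicting the existence of $F(m)$.

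The main obstacle—really the whole content of the argument—is the identification of deterministic classical-to-classical stabiliser channels with graphs of affine functions; this is immediate from the equivalence together with the fullness of $\AR \hookrightarrow \ARQ$, but it is exactly the categorical repackaging of the Gottesman--Knill constraint that only affine classical control is compatible with the symplectic representation. Everything else is bookkeeping: once $F$ is pinned down on points, functoriality propagates the obstruction to $m$.
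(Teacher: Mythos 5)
Your argument is correct, and it is in fact more than the paper provides: the paper states this proposition without a formal proof, offering only the informal remark that non-affine classical control produces non-uniform mixtures of basis states, so that \(\Total(\NLQ)\) ``cannot tell us the probability of measurement outcomes.'' Formalised, the paper's route would push the uniform classical state on \(\F_p^2\) through the deterministic multiplication channel and observe that the output distribution of \(ab\) is non-uniform (the outcome \(0\) occurs with probability \((2p-1)/p^2\)), hence is not a uniform mixture over an affine subspace and so not a stabiliser state. You instead avoid any probability computation: commutativity pins \(F\) on the Dirac points \(\lceil(a,b)\rceil\), functoriality applied to \(m\circ\lceil(a,b)\rceil=\lceil ab\rceil\) forces \(F(m)\) to correspond, under the equivalence and the fullness of \(\AR\rightarrowtail\ARQ\), to a total affine relation whose fibre over each \((a,b)\) is exactly \(\{ab\}\) --- i.e.\ the graph of multiplication --- and that graph is not an affine subspace of \(\F_p^2\oplus\F_p\) (e.g.\ the affine combination \((1,1,1)-(1,0,0)-(0,1,0)+2\,(0,0,0)=(0,0,1)\) escapes it). Both arguments ultimately rest on the same classification of classical-to-classical stabiliser channels as total affine relations; yours localises the obstruction entirely in the relational semantics and is the cleaner one to formalise, while the paper's version explains \emph{physically} what is lost (the Born-rule weights). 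The one step worth making explicit in a written-up version is that a single-valued total affine relation is the graph of an affine function, which you correctly invoke but should justify in one line.
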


This is because the non-affine classical control of stabiliser codes can produce mixed states which are no longer proportional to uniform mixtures of pure states;   non-affine corrections between basis elements can create mixtures of pure states with non-uniform weights. In other words,  \(\Total(\NLQ)\) can not tell us the probability of measurement outcomes.
\section{Case study: a small imperative language for stabiliser QEC}
\label{section:language}
In this section, we introduce a minimal imperative language \(\StabLang\) (\emph{Stabiliser Programming Language}) for stabiliser quantum channels. In other words, this is a language for quantum error
correction, including measurements and classical control. This language
is strongly inspired by the language \(\QPL\) \cite{selinger_towards_2004}; where now our quantum operations are 
restricted to stabiliser operations and our  classical operations are restricted to total  \emph{affine} relations.  We have implemented SPL and its denotational semantics in Python \cite{SPL}.

We give \(\StabLang\) small‐step operational semantics on pairs \([C|\rho]\) of terms
acting on density operators as CPTP maps (similar to that of Ying \cite[Section
3.2]{mingsheng_foundations_2016}), and a fully abstract denotational semantics in the SMC
\(\Total(\AR^Q)\). This case study serves
as a proof-of-concept to demonstrate that our symplectic semantics can be
used as the foundation of a quantum compilation stack whose target code is
fault-tolerant by construction and with a denotational semantics amenable to
formal verification. The purpose of \(\StabLang\) is to show that \(\Total(\AR^Q)\) serves as a denotational semantics for stabiliser quantum programs, and is to be contrasted with more powerful, and computationally expressive languages such as Quipper \cite{quipper} and Proto-Quipper \cite{protoquipper} which are not specifically tailored to the stabiliser fragment.
%\allowdisplaybreaks

\subsection{Syntax}

\begin{figure}
  \begin{mathpar}
    \inferrule {\Gamma \vdash c \update \Delta \quad \Delta \vdash d \update \Sigma}
    {\Gamma \vdash c\;\fby \; d \update \Sigma}
    \qquad
    \inferrule {\quad} {\Gamma \vdash \pit \; \reg{x} \update \reg{x} : \pittype, \Gamma} 
    \qquad
    \inferrule {\quad} {\Gamma \vdash \qpit \; \reg{x} \update \reg{x} : \qpittype, \Gamma} 
  \end{mathpar}
  \vspace{-6mm}
  \begin{mathpar}
    \inferrule {\quad} {\vbreg{x} : \pittype^n, \Gamma \vdash \vbreg{y}\!=\!A\affmul \vbreg{x} \update \vbreg{x} : \pittype^n, \vbreg{y} : \pittype^m, \Gamma}
    \qquad
    %\inferrule {\quad} {x : \qpittype, \Gamma \vdash y = \measure \; x \update y : \pittype, \Gamma}   
    \inferrule {\quad} {\reg{x} : \qpittype, \Gamma \vdash \measure \; \reg{x} \update \reg{x} : \pittype, \Gamma}   
  \end{mathpar}
  \vspace{-5mm}
  \begin{mathpar}
    \inferrule {\quad} {\Gamma \vdash \noop \update \Gamma} 
    \qquad
    \inferrule {\quad} {\reg{x}: \pittype, \Gamma \vdash \discard \; \reg{x}\update \Gamma}
    \qquad
    \inferrule {\quad} {\vbreg{x} : \qpittype^n, \Gamma \vdash \vbreg{x} \muleqq U  \update \vbreg{x}: \qpittype^n, \Gamma}
  \end{mathpar}
  \vspace{-5mm}
  \begin{mathpar}
    \inferrule {\quad}
    {\reg{x} : \pittype, \reg{y} : \qpittype, \Gamma \vdash \control_P \; \reg{x} \; \reg{y} \update \reg{x} : \pittype, \reg{y} : \qpittype, \Gamma}
  \end{mathpar}
  \caption{\textsf{\textbf{Formation rules for \(\StabLang\)}}. New variables are always assumed to be fresh.\\
  \(n\in \N^{>0}\), \(\tau\in \Ty\), and  \(\vbreg{x}:\tau^n\) is shorthand for \(\{\reg{x1}:\tau,\cdots, \reg{xn}:\tau\}\) such that  \(\vbreg{x} = (\reg{x1},\cdots, \reg{xn}) \in \Reg^n\).}
  \label{language:fig:formation_rules}
\end{figure}

\(\StabLang\) has classical and quantum types \( \Ty  \;\Coloneqq\; \pittype
\;\mid\; \qpittype \). The terms are generated from the following grammar
with respect to some fixed, linearly ordered set \(\Reg\) indexing registers:
  \[
  \begin{aligned}
  c,d \;\Coloneqq\; 
     c\fby d
    \mid \pit \; \reg{x}
    \mid \vbreg{y}\!=\!A\affmul \vbreg{x}
    \mid \discard \;\reg{x}
    \mid\qpit \; \reg{x}
    \mid \vbreg{x}  \muleqq U
    \mid \measure \;\reg{x}
    \mid \control_P \;\reg{x}\;\reg{y}\;
    \mid \noop.
  \end{aligned}
  \]
for all \(n,m\in\N\), \(U \in \Cliff, P \in \Pauli\), \(\Zp\)-affine
transformations \(A:\Zp^n \to \Zp^m\),  and \(\reg{x},\reg{y}\in \Reg\),
\(\vbreg{x}\in \Reg^n\), \(\vbreg{y} \in \Reg^m\).
We interpret each term as a quantum operation:
\begin{itemize}
    \item \(c\fby d\) represents the sequential composition of subterms;

    \item \(\pit \; \reg{x}\) represents the initialisation of \(\reg{x}\) as the
pit (\(p\)-ary digit) \(0\);

    \item \(\vbreg{y}\!=\!A\affmul \vbreg{x}\) applies the
affine transformation \(A\) to \(\vbreg{x}\) and stores the result on
\(\vbreg{y}\);

    \item \(\discard \;\reg{x}\) takes the trace of \(\reg{x}\);

    \item \(\qpit \; \reg{x}\) represents initialisation of \(\reg{x}\) as the qupit
\(\ket{0}\);

    \item \(\vbreg{x}  \muleqq U\) applies the Clifford operator \(U\)
on \(\vbreg{x}\);

    \item \(\measure \;\reg{x}\) represents the Pauli-\(Z\)
measurement on \(\reg{x}\);

    \item \(\control_P\;\reg{x}\;\reg{y}\;\) applies the
Pauli \(P\) operator on \(\reg{y}\), classically controlled by \(\reg{x}\) in the \(Z\)-basis;

    \item \(\noop\) represents the identity operation.
\end{itemize}

% Terms of the language are generated by the following grammar:
% \[
%   \begin{aligned}
%     \Cmd \ni c,d \;
%     \Coloneqq\; c;d
%     &\mid \pit\; a \; x
%     \mid \qpit \; a\; x 
%     \mid w = x+y
%     \mid \discard \;x\\
%     &\mid x \muleqq U
%     \mid x \muleqq a
%     \mid y = \measure \;x
%     \mid \control_P\;x\;y\;
%     \mid \noop .
%   \end{aligned}
% \]
% Where \(x,y,w \in \N\) represent registers of qupits or pits, \(a\in\Zp\), \(P\in \mathcal{P}_p\) and \(U\) is a single qupit Clifford operator.
\(\StabLang\) is equipped with an \emph{environment-transforming} type system, which enforces linear usage of quantum data.
Typed environments are partial functions \(\Gamma:\Reg\to \Ty\) which bind registers to be either qupits or pits, and which we will represent by \(\{\reg{x} : \tau,\; \reg{y} : \sigma,\; \reg{z} :\mu,\; \cdots\}\) for \(\reg{x},\reg{y},\reg{z}\cdots \in \Reg\) and \(\tau,\sigma,\mu,\cdots \in \Ty\). We impose that the domain \(\dom(\Gamma)\) of \(\Gamma\), i.e. the set of bound registers \(\{\reg{x},\reg{y},\reg{z},\cdots \}\), is \emph{finite}, and that every register has at most one type. Judgments are triples \(\Gamma \vdash t \update \Delta\) consisting of a term \(t\) and typed environments \(\Gamma,\Delta\). A judgment \(\Gamma \vdash t \update \Delta\) is \textbf{well-formed} if it is derivable from the formation rules given in figure~\ref{language:fig:formation_rules}.

\begin{figure}
    \centering
    \[
        \begin{aligned}
            &\{ \reg{in} : \qpittype \} \vdash\ && \\
            &\qpit\ \reg{x}\ \fby\ \qpit\ \reg{out} \ \fby
            &&\texttt{\% initialize registers} \\
            & \reg{x}\ \muleqq\ F\ \fby\ (\reg{x},\reg{out})\ \muleqq\ C^X \ \fby 
            &&\texttt{\% prepare Bell pair} \\
            &  (\reg{in},\reg{x})\ \muleqq\ C^X\ \fby\ \reg{in}\ \muleqq\ F \ \fby\ \measure\ \reg{in}\ \fby\ \measure\ \reg{x}  \ \fby
            &&\texttt{\% Bell measurement} \\
            &  \control_{Z}\ \reg{in}\ \reg{out}\ \fby\ \control_{X}\ \reg{x}\ \reg{out}  \ \fby
            &&\texttt{\% phase correction} \\
            & \discard\ \reg{in}\ \fby\ \discard\ \reg{x} 
            &&\texttt{\% discard ancillae}\\
            & \!\!\update \{\reg{out} : \qpittype \} 
        \end{aligned}
    \]
    \caption{\textsf{\textbf{Qupit teleportation in \(\StabLang\)}}. 
    Where \(F\) is the Fourier transform; \(C^X\) is the (coherently) controlled \(X\) gate; and \(Z\) and \(X\) are the Pauli \(Z\) and \(X\) gates. Input is given on register  \(\reg{in} : \qpittype\) and output is returned on register  \(\reg{out} : \qpittype\).}
    \label{fig:stablang-teleport}
\end{figure}

\subsection{Operational semantics}
In this subsection we define a structured operational semantics for \(\StabLang\), which is strongly
inspired by Ying's operational semantics for quantum programs \cite[Section 3.2]{mingsheng_foundations_2016}.
% Our semantics will act on configurations \((T, \rho)\)
% consisting of term with typed enviroments \(T= \Gamm \emptyset \vdash t\update \Delta\) derived by the formation rules and density operator \(\rho\) acting on the finite-dimensional \(C^*\)-algebra constructed from the typed context \(\Delta\).

% Denote the space of density operators a \(C^*\)-algebra \(X\) by  \(\mathcal{D}(X) \coloneqq \Total(\Split(\CPM(\FHilb)))(I,X)\). To simplify analysis, the
% state of a probabilistic pit is encoded as a diagonal density operator
% on \(\mathcal{H}_p\). The space of states \(\mathcal{D}(\Gamma)\) of
% a context \(\Gamma\) is therefore the subspace of \(\bigotimes_{x \in
% \dom(\Gamma)} \mathcal{D}(\mathcal{H}_p)\) such that if \(\Gamma(x) = \pittype\),
% \(\tr_{\dom(\Gamma) \setminus \{x\}}(\rho)\) is diagonal in the canonical basis. %\cole{what is the trace applied to a set of registers?}

We interpret typed environments as objects in the embedding \(\StabChan_p\rightarrowtail \QChan\):
\begin{definition}
    % Given a typed environment \(\Gamma\), let \(\langle \Gamma \rangle\) be the dependent tensor product:
    % % \[
    % % \langle \Gamma\rangle \coloneqq \bigotimes_{x \in \dom(\Gamma)} \begin{cases}
    % %     (\mathcal{H}_p, 1_{\mathcal{H}_p}) & \text{if } \Gamma(x) = \qpittype\\
    % %     (\mathcal{H}_p, \mathcal{E}_X) & \text{if } \Gamma(x) = \pittype
    % % \end{cases},
    % % \]
    % \[
    %   \langle \Gamma\rangle \coloneqq \mathop{\raisebox{.15cm}{{\(\displaystyle\bigotimes_{\reg{x} \in \dom(\Gamma)}\)}}}
    %   \onecases {(\mathcal{H}_p, 1_{\mathcal{H}_p})}{\Gamma(\reg{x}) = \qpittype}
    %             {(\mathcal{H}_p, \mathcal{E}_X)}{\Gamma(\reg{x}) = \pittype}
    % \]
     On types, let
    \(\langle \pittype \rangle \coloneqq (\mathcal{H}_p,\mathcal{E}_{Z}) \) and \(\langle \qpittype \rangle \coloneqq (\mathcal{H}_p, 1_{\mathcal{H}_p})\). Define the denotation of a typed environment to be the dependent tensor product \(\langle \Gamma\rangle \coloneqq \bigotimes_{\reg{x} \in \dom(\Gamma)}
    \langle\Gamma(\reg{x})\rangle\). 
    Moreover, let \(\mathcal{D}(\Gamma) \coloneqq \QChan(\C, \langle\Gamma\rangle )\) denote the set of density operators on \(\langle \Gamma \rangle\).
\end{definition}

To give our operational semantics, we establish notation to represent stabiliser quantum channels acting on subspaces of a larger ambient space.
Take typed environments \(\Gamma, \Delta\) and ordered subsets (ie. lists) \(\vbreg{x} \subseteq \dom(\Gamma)\) and \(\vbreg{y} \subseteq \dom(\Delta)\), where moreover, \(\dom(\Gamma) \setminus \vbreg{x} = \dom(\Delta) \setminus \vbreg{y}\).
Given a stabiliser quantum channel \(\mathcal{C}:\langle\Gamma|_{\vbreg{x}}\rangle\to \langle\Delta|_{\vbreg{y}}\rangle\) let \(\mathcal{C}_{\vbreg{x}, \vbreg{y}}:\langle \Gamma\rangle \to \langle \Gamma' \rangle\) be the stabiliser quantum channel acting as \(\mathcal{C}\) on the subspace \(\langle\Delta\rangle \subseteq \langle\Gamma\rangle\) and trivially on its orthogonal complement \(\langle\Gamma \setminus \Delta\rangle\subseteq \langle \Gamma\rangle\).
% Then, define
% quantum channels
% \begin{equation}
%   \mathcal{M}^A[\dyad{x}] = \dyad{Ax},
%   \quad
%   \mathcal{C}^U[\rho] = U \rho U^\dagger,
%   \qand
%   \mathcal{E}[\rho] = \sum_{k \in \F_p} \dyad{k} \rho \dyad{k},
% \end{equation}
% where \(A:\Zp^m\to \Zp^n\) is an \(\Zp\)-affine transformation, \(U \in \Cliff, P \in \Pauli\).
\begin{definition}
   A \textbf{configuration} is a pair  consisting of a well-formed judgement \(\Gamma \vdash t \update \Delta\) and a density operator \(\rho \in \mathcal{D}(\Gamma)\), denoted \(\Conf{\Gamma \vdash t \update \Delta}{\rho\in\mathcal{D}(\Gamma)}\), or \(\Conf{t}{\rho}\) for short.
    
    The \textbf{small-step operational semantics} of \(\StabLang\) is given by the reduction rules \(\,\rightsquigarrow\) in figure~\ref{fig:smallstep}. The operational semantics \(\,\rightsquigarrow^*\) is given by the transitive closure of \(\,\rightsquigarrow\).

\end{definition}

\begin{figure}[t]
  \begin{align*}
  \aden{\noop}{\Gamma}{\Gamma}
    &\coloneqq 1_{\langle \Gamma \rangle} : \langle \Gamma\rangle \to \langle \Gamma\rangle
  \\
  \aden{\pit\ \reg{x}}{\Gamma}{\reg{x}:\pittype,\Gamma}
    &\coloneqq \iota(\ket{0})_{\varnothing,\reg{x}}
    : \langle \Gamma\rangle \to \langle \reg{x}:\pittype,\Gamma\rangle
  \\
  \aden{\qpit\ \reg{x}}{\Gamma}{\reg{x}:\qpittype,\Gamma}
    &\coloneqq \iota(\ket{0})_{\varnothing,\reg{x}}
    : \langle \Gamma\rangle \to \langle \reg{x}:\qpittype,\Gamma\rangle
  \\
  \aden{\measure\ \reg{x}}{\reg{x}:\qpittype,\Gamma}{\reg{x}:\pittype,\Gamma}
    &\coloneqq (\mathcal{E}_Z)_{\reg{x},\reg{x}}
    : \langle \reg{x}:\qpittype,\Gamma\rangle \to \langle \reg{x}:\pittype,\Gamma\rangle
  \\
  \aden{\discard\ \reg{x}}{\reg{x}:\pittype,\Gamma}{\Gamma}
    &\coloneqq (\Tr_{\mathcal{H}_p})_{\reg{x},\varnothing}
    : \langle \reg{x}:\pittype,\Gamma\rangle \to \langle \Gamma\rangle
  \\
  \aden{\vbreg{y}\!=\!M\affmul\vbreg{x}}{\vbreg{x}:\pittype^{m},\Gamma}{\vbreg{x}:\pittype^{m},\vbreg{y}:\pittype^{n},\Gamma}
    &\coloneqq \mathcal{A}^{M}_{\vbreg{x},\vbreg{y}}
    : \langle \vbreg{x}:\pittype^{m},\Gamma\rangle \to \langle \vbreg{x}:\pittype^{m},\vbreg{y}:\pittype^{n},\Gamma\rangle
  \\
  \aden{\vbreg{x}\muleqq U}{\vbreg{x}:\qpittype^{n},\Gamma}{\vbreg{x}:\qpittype^{n},\Gamma}
    &\coloneqq \iota(U)_{\vbreg{x},\vbreg{x}}
    : \langle \vbreg{x}:\qpittype^{n},\Gamma\rangle \to \langle \vbreg{x}:\qpittype^{n},\Gamma\rangle
  \\
  \aden{\control_{P}\ \reg{x}\ \reg{y}}{\reg{x}:\pittype,\reg{y}:\qpittype,\Gamma}{\reg{x}:\pittype,\reg{y}:\qpittype,\Gamma}
    &\coloneqq C_P{}_{(\reg{x},\reg{y}),(\reg{x},\reg{y})}
    : \langle \reg{x}:\pittype,\reg{y}:\qpittype,\Gamma\rangle \to \langle \reg{x}:\pittype,\reg{y}:\qpittype,\Gamma\rangle
  \end{align*}
  \caption{\textsf{\textbf{Denotations of the atomic terms of \(\StabLang\) in \(\QChan\).}}
  Where 
  \(\iota:\Stab_p\to \QChan\) embeds pure stabiliser maps into stabiliser channels.
  \(\mathcal{E}_Z:(\mathcal{H}_p,1_{\mathcal{H}_p})\to(\mathcal{H}_p,\mathcal{E}_Z)\) is Pauli-\(Z\) measurement.
  \(\Tr_{\mathcal{H}_p}:(\mathcal{H}_p,\mathcal{E}_Z)\to I\) is trace.
  \(\mathcal{A}^{M}=\sum_{\vb{x}\in\Zp^{m}}\dyad{M(\vb{x})}{\vb{x}}:(\mathcal{H}_p,\mathcal{E}_Z)^{\otimes m}\to(\mathcal{H}_p,\mathcal{E}_Z)^{\otimes n}\) is the classical channel for an \(\F_p\)-affine map \(M:\F_p^{m}\to \F_p^{n}\).
  \(C_P:(\mathcal{H}_p,\mathcal{E}_Z)\otimes(\mathcal{H}_p,1_{\mathcal{H}_p})\to(\mathcal{H}_p,\mathcal{E}_Z)\otimes(\mathcal{H}_p,1_{\mathcal{H}_p})\) is classically controlled Pauli for \(P\in\mathcal{P}_p\).
  }
  \label{fig:aden-generators}
\end{figure}

\begin{figure}[t]
\begin{mathpar}
% Head
\inferrule{
  \Gamma \vdash a \update \Delta \\
  \Conf{\,\Gamma \vdash c \update \Delta\,}{\rho \in \mathcal{D}(\Gamma)}
   \rightsquigarrow
   \Conf{\,\Theta \vdash c' \update \Delta\,}{\rho' \in \mathcal{D}(\Theta)}
}{
  \Conf{\,\Gamma \vdash a \fby c \update \Sigma\,}{\rho \in \mathcal{D}(\Gamma)}
  \rightsquigarrow
  \Conf{\,\Delta \vdash c \update \Sigma\,}{\ \aden{a}{\Gamma}{\Delta} \circ \rho \ \in \mathcal{D}(\Delta)}
}

% Unit
\inferrule{
  \Gamma \vdash a \update \Delta
}{
  \Conf{\,\Gamma \vdash a \update \Delta\,}{\rho \in \mathcal{D}(\Gamma)}
  \rightsquigarrow
  \Conf{\,\Delta \vdash \noop \update \Delta\,}{\ \aden{a}{\Gamma}{\Delta} \circ \rho \ \in \mathcal{D}(\Delta)}
}

\end{mathpar}
\caption{
Small-step operational semantics of \(\StabLang\), where \(a\) is an atomic term and \(c,c'\) are well-formed terms.
The interpretation of atomic terms \(a\) in \(\QChan\), denoted \(\aden{a}{\Gamma}{\Delta}:\langle \Gamma\rangle\to\langle \Delta\rangle\) is defined in figure~\ref{fig:aden-generators}.
}
\label{fig:smallstep}
\end{figure}

\begin{definition}
  Well-formed \(\StabLang\) terms \(\Sigma \vdash c \update \Delta,\Sigma
  \vdash d \update \Delta\) are \textbf{observationally equivalent} if for
  any well-formed terms \(\Gamma \vdash a \update \Sigma\) and \(\Delta \vdash
  b \update \Theta\), and density operator \(\rho \in \mathcal{D}(\Gamma)\),
   \(\Conf{a \ \fby \ c \ \fby \ b}{\rho} \rightsquigarrow^* \Conf{\noop}{\rho'}\) implies
  that \(\Conf{a \ \fby \ d \ \fby \ b}{\rho} \rightsquigarrow^* \Conf{\noop}{\rho'}\), and vice-versa.
\end{definition}

\begin{theorem}
    The operational semantics \(\,\rightsquigarrow^*\) for \(\StabLang\)
    is sound, complete, and universal for \(\StabChan_p\).
\end{theorem}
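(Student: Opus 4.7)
The plan is to first extend the atomic denotations of figure~\ref{fig:aden-generators} to an interpretation $\aden{c}{\Gamma}{\Delta} \in \QChan(\langle\Gamma\rangle, \langle\Delta\rangle)$ for every well-formed term by setting $\aden{c \fby d}{\Gamma}{\Sigma} \coloneqq \aden{d}{\Delta}{\Sigma} \circ \aden{c}{\Gamma}{\Delta}$. This is well-defined because the formation rules of figure~\ref{language:fig:formation_rules} determine the intermediate environment $\Delta$ uniquely (each rule prescribes the input and output environments in terms of the term). The theorem then splits into three claims about this denotation: \emph{soundness} ($\,\rightsquigarrow^*$ preserves denotational action), \emph{completeness} (every program terminates and reaches the denoted state), and \emph{universality} (the denotation surjects onto $\StabChan_p$ between the relevant objects).

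For soundness, I would prove by case analysis on the two rules of figure~\ref{fig:smallstep} that each single step $\Conf{\Gamma \vdash c \update \Sigma}{\rho} \,\rightsquigarrow\, \Conf{\Delta \vdash c' \update \Sigma}{\rho'}$ satisfies $\aden{c'}{\Delta}{\Sigma} \circ \rho' = \aden{c}{\Gamma}{\Sigma} \circ \rho$, which reduces immediately to associativity of composition in $\QChan$ together with the extension rule for $\fby$. A straightforward induction on reduction length then shows that $\Conf{c}{\rho} \,\rightsquigarrow^*\, \Conf{\noop}{\rho'}$ implies $\rho' = \aden{c}{\Gamma}{\Delta} \circ \rho$. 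Completeness will follow from progress and strong normalisation: the first reduction rule strictly decreases the number of $\fby$-nodes whenever the head is compound, and the second applies whenever the head is atomic, so any well-formed term reduces to $\noop$ in a bounded number of steps and terminates with the intended density matrix.

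The main effort is universality: every stabiliser channel $\mathcal{C} \in \StabChan_p(\langle\Gamma\rangle, \langle\Delta\rangle)$ must be realised as $\aden{c}{\Gamma}{\Delta}$ for some well-formed $c$. Transporting along the equivalence $\StabChan_p \simeq \Total(\ARQ)$ of proposition~\ref{proposition:causal_total_split}, it suffices to exhibit, for each generator of $\Total(\ARQ)$ from theorem~\ref{theorem:splitting_modality}, a matching $\StabLang$-term: Clifford unitaries as $\vbreg{x} \muleqq U$; the measurement $\mu_Z$ as $\measure\ \reg{x}$; the preparation $\mu_Z^\dag$ as the composite $\qpit\ \reg{z}\ \fby\ \control_{X}\ \reg{x}\ \reg{z}\ \fby\ \discard\ \reg{x}$; and total affine classical relations built from $\pit$, $\discard$, and $\vbreg{y} = A \affmul \vbreg{x}$. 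The principal obstacle lies in this last point: a general total affine relation $R \subseteq \F_p^m \oplus \F_p^n$ need not be the graph of a single affine map, so one must first produce a normal-form decomposition (a classical analogue of the Stinespring dilation of proposition~\ref{proposition:isotropic_kernel}) expressing $R$ as the image of an affine surjection from an auxiliary space, and then realise this syntactically by initialising fresh pit registers via $\pit$, running one affine update, and discarding the auxiliary registers. Once these primitives are in place, closure of the image of $\aden{-}{-}{-}$ under sequential composition (immediate from its definition on $\fby$) and under disjoint addition of ancillary registers (using freshness in the formation rules) propagates generation to all of $\Total(\ARQ)$, completing the argument.
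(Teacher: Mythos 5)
Your proposal is far more detailed than the paper's own proof, which is essentially a two-line argument: the transition system is deterministic and terminating, so each well-formed term determines a unique stabiliser channel \(\Conf{t}{-}\), and observational equivalence then coincides with equality of these channels. Your soundness and completeness arguments are a correct elaboration of what that terse proof leaves implicit. The genuine problem is in your universality argument, specifically in the classical fragment.

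Any composite of \(\pit\) (the point distribution at \(0\)), \(\vbreg{y}=A\affmul\vbreg{x}\) (a deterministic affine update) and \(\discard\) (marginalisation) is a \emph{deterministic} classical channel, so your recipe realises only graphs of affine functions, never a properly relational total affine relation. For instance the total relation \(\{\bullet\}\times\F_p\), i.e.\ the uniform distribution on one pit (the classical analogue of \(\im_{\F_p}\)), is a morphism of \(\Total(\ARQ)\) and hence a stabiliser channel, but no composite of deterministic maps produces it. Your ``classical Stinespring'' decomposition \(R(\vb{u})=\{f(\vb{u},\vb{w})\mid \vb{w}\in\F_p^{k}\}\) is the right idea, but the auxiliary registers must carry the \emph{uniform} distribution rather than the point mass at \(0\); in \(\StabLang\) this must be synthesised from quantum primitives, e.g.\ \(\qpit\ \reg{w}\ \fby\ \reg{w}\muleqq F\ \fby\ \measure\ \reg{w}\). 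With that ingredient the universality argument goes through; without it, it fails on the example above. (A smaller point: your termination argument implicitly assumes terms are right-associated under \(\fby\), since the step rule only fires on \(a\fby c\) with \(a\) atomic; this is harmless but should be said.)
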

\begin{proof}
  Because the operational semantics are given by a deterministic
  transition system, given any configuration \(\Conf{t}{\rho}\), there is a
  unique \(\rho'\) such that \(\Conf{c}{\rho} \rightsquigarrow^* \Conf{\noop}{
  \rho'}\). Therefore, any well-formed term \(t\) yields a unique stabiliser
  quantum channel \(\Conf{t}{-}\). The observational equivalence of well-formed judgements \(c\) and \(d\)
  under \(\,\rightsquigarrow^*\) therefore amounts to equality as stabiliser
  quantum channels, \(\Conf{c}{-} = \Conf{d}{-}\), and thus, equality as quantum channels.
\end{proof}

\subsection{Denotational semantics}
We give \(\StabLang\) a denotational semantics in \(\Total(\ARQ)\). On types, let
\(\interp{\pittype} \coloneqq \F_p\) and \(\interp{\qpittype} \coloneqq
Q(\F_p^2, \omega_2)\). Define the denotation of a typed environment to be the dependent direct sum \(\interp{\Gamma} \coloneqq \bigoplus_{\reg{x} \in \dom(\Gamma)}
\interp{\Gamma(\reg{x})}\). 

The denotation of well-formed judgments \(\Gamma
\vdash t \update \Delta\) is given by the maps \(\ARQ(\interp{\Gamma},
\interp{\Delta})\) defined inductively from the denotation of generating terms.
As before, we need to establish notation to represent affine relations acting on a subset of the registers of the context.
Take ordered subsets \(\vbreg{x} \subseteq \dom(\Gamma)\) and \(\vbreg{y}\subseteq \dom(\Delta)\), where moreover, \(\dom(\Gamma) \setminus \vbreg{x} = \dom(\Delta) \setminus \vbreg{y}\). 
Given a relation \(S:\interp{\Gamma|_{\vbreg{x}}}\to \llbracket\Delta|_{\vbreg{y}}\rrbracket\) let \(S_{\vbreg{x}, \vbreg{y}}:\interp{\Gamma} \to \interp{ \Gamma' }\) denote the relation acting as \(S\) on the subspace \(\interp{\Gamma|_{\vbreg{x}}} \subseteq \interp{\Gamma}\) and trivially everywhere else \(\interp{\Gamma|_{\dom(\Gamma) \setminus \vbreg{x}}}\subseteq \interp{\Gamma}\).
%Now, let \(S \subseteq \interp{\Gamma|_{\vbreg{x}}} \oplus \interp{\Delta|_{\vbreg{x}}}\) be a subset, and define \(S_{\vbreg{x}, \vbreg{y}} \coloneqq S \oplus\bigoplus_{\reg z \in \dom{\Gamma} \setminus \vbreg{x}} 1_{\reg z}\) as a subset of \(\interp{\Gamma} \oplus \interp{\Delta}\). 
The denotation of terms, defined inductively in figure~\ref{fig:denotation}.
% \begin{figure}[h]
% \centering
% \[
% \begin{gathered}
%   \interp{c\fby d} \coloneqq \interp{c} ; \interp{d},
%   \qquad
%   \interp{\noop} \coloneqq 1_{\interp{\Gamma}},
%   \qquad
%   \interp{\pit \; \reg{x}} \coloneqq (\{(\bullet, 0)\in \Zp^0 \oplus \Zp\})_{\varnothing,\reg{x}},
%   \\
%   \interp{\vbreg{y}\!=\!A\affmul \vbreg{x}} \coloneqq (\Gr(A))_{\vbreg{x},\vbreg{y}},
%   \!\quad\!
%   \interp{\vbreg{x} \muleqq U} \coloneqq (\rel(U))_{\vbreg{x},\vbreg{x}},
%   \!\quad\!
%   \interp{\discard\;\reg{y}} \coloneqq \{(x,\bullet) \in \F_p \oplus \F_p^0\}_{\reg{y}, \varnothing},
%   \\
%   \interp{\measure \;\reg{x}} \coloneqq (\Gr(\pi_1))_{\reg{x},\reg{x}}
%   \qquad
%   \interp{\qpit \; \reg{x}} \coloneqq (\mu_X^\dag \circ \{(\bullet, 0) \in \Zp^0 \oplus \F_p\})_{\varnothing,\reg{x}}
%   \\
%   \interp{\control_P\;\reg{y}\;\reg{w}} \coloneqq
%   \left\{
%   \left(\left(c, \begin{bmatrix} x \\ z \end{bmatrix}\right),
%   \left(c, \begin{bmatrix} x+c\cdot x' \\ z+c\cdot z' \end{bmatrix}\right)\right)
%   \ \middle|\ c\in \F_p,\ \begin{bmatrix} x\\z\end{bmatrix} \in Q(\F_p,\omega_1)
%   \right\}_{(\reg{y},\reg{w}),(\reg{y},\reg{w})}.
% \end{gathered}
% \]
% \caption{Denotation of terms, where \(\Gr\) is the graph, \(\pi_k\) is the direct-sum projection onto the \(k\)-th component, and \(P=\pi(a,x',z')\in \mathcal{P}_p\). Typed contexts are omitted; see Figure~\ref{language:fig:formation_rules}.}
% \label{fig:denotation}
% \end{figure}
\begin{figure}[t]
\centering
\setlength{\tabcolsep}{0pt}% tighten padding around entries
\begin{tabular}{@{}l@{\;}l@{}} % remove outer padding, small gap between columns
\textbf{Term} & \textbf{Denotation} \\ \hline

\(\interp{\noop}\) &
\(\coloneqq 1_{\interp{\Gamma}}\) \\

\(\interp{c \fby d}\) &
\(\coloneqq \interp{c}\,;\,\interp{d}\) \\

\(\interp{\discard\;\reg{u}}\) &
\(\coloneqq \Bigl\{ (u,\bullet)\in \F_p \oplus \{\bullet\} \Bigr\}_{\reg{u},\varnothing}\) \\

\(\interp{\pit\;\reg{u}}\) &
\(\coloneqq \Bigl\{ (\bullet,0)\in \{\bullet\}\oplus \F_p \Bigr\}_{\varnothing,\reg{u}}\) \\

\(\interp{\vbreg{v}=A\affmul\vbreg{u}}\) &
\(\coloneqq \Bigl\{ (\vb u,\vb v)\in \F_p^{m}\oplus \F_p^{n} \Bigm| \vb v=A(\vb u) \Bigr\}_{\vbreg{u},\vbreg{v}}\) \\

\(\interp{\qpit\;\reg{u}}\) &
\(\coloneqq \Bigl\{ \Bigl(\bullet,\bigl[\begin{smallmatrix}0\\ u_z\end{smallmatrix}\bigr]\Bigr)\in \{\bullet\}\oplus Q(\F_p^{2},\omega_{1}) \Bigr\}_{\varnothing,\reg{u}}\) \\

\(\interp{\measure\;\reg{u}}\) &
\(\coloneqq \Bigl\{ \Bigl(\bigl[\begin{smallmatrix}u_x\\ u_z\end{smallmatrix}\bigr],u_x\Bigr)\in Q(\F_p^{2},\omega_{1})\oplus \F_p \Bigr\}_{\reg{u},\reg{u}}\) \\

\(\interp{\reg{u}\muleqq X}\) &
\(\coloneqq \Bigl\{
\Bigl(
  \bigl[\begin{smallmatrix}u_x\\u_z\end{smallmatrix}\bigr],
  \bigl[\begin{smallmatrix}u_x \plus 1\\u_z\end{smallmatrix}\bigr]
\Bigr)
\Bigm|
\bigl[\begin{smallmatrix}u_x\\ u_z \end{smallmatrix}\bigr]\in Q(\F_p^2,\omega_1)
\Bigr\}_{\reg{u},\reg{u}}\) \\

\(\interp{\reg{u}\muleqq F}\) &
\(\coloneqq \Bigl\{
\Bigl(
  \bigl[\begin{smallmatrix}u_x\\u_z\end{smallmatrix}\bigr],
  \bigl[\begin{smallmatrix} u_z\\ \minu u_x\end{smallmatrix}\bigr]
\Bigr)
\Bigm|
\bigl[\begin{smallmatrix}u_x\\ u_z \end{smallmatrix}\bigr]\in Q(\F_p^2,\omega_1)
\Bigr\}_{\reg{u},\reg{u}}\) \\

\(\interp{\reg{u}\muleqq P}\) &
\(\coloneqq \Bigl\{
\Bigl(
  \bigl[\begin{smallmatrix}u_x\\u_z\end{smallmatrix}\bigr],
  \bigl[\begin{smallmatrix}u_x\\u_z-u_x\end{smallmatrix}\bigr]
\Bigr)
\Bigm| \bigl[\begin{smallmatrix}u_x\\ u_z \end{smallmatrix}\bigr]\in Q(\F_p^2,\omega_1)
\Bigr\}_{\reg{u},\reg{u}}\) \\

\(\interp{(\reg{u}, \reg{v})\muleqq \swap}\) &
\(\coloneqq \Bigl\{
\bigl(
 (\bigl[\begin{smallmatrix}u_x\\u_z\end{smallmatrix}\bigr],
  \bigl[\begin{smallmatrix}v_x\\v_z\end{smallmatrix}\bigr]),
 (\bigl[\begin{smallmatrix}v_x\\v_z\end{smallmatrix}\bigr],
  \bigl[\begin{smallmatrix}u_x\\u_z\end{smallmatrix}\bigr])
\bigr)
\Bigm|
\bigl[\begin{smallmatrix}u_x\\ u_z \end{smallmatrix}\bigr],
\bigl[\begin{smallmatrix}v_x\\ v_z \end{smallmatrix}\bigr]\in Q(\F_p^2,\omega_1)
\Bigr\}_{(\reg{u},\reg{v}),(\reg{u},\reg{v})}\) \\

\(\interp{(\reg{u}, \reg{v})\muleqq  C^X}\) &
\(\coloneqq \Bigl\{
\bigl(
 (\bigl[\begin{smallmatrix}u_x\\u_z\end{smallmatrix}\bigr],
  \bigl[\begin{smallmatrix}v_x\\v_z\end{smallmatrix}\bigr]),
 (\bigl[\begin{smallmatrix}u_x\\u_z -v_z\end{smallmatrix}\bigr],
  \bigl[\begin{smallmatrix}u_x\plus v_x\\v_z\end{smallmatrix}\bigr])
\bigr)
\Bigm|
\bigl[\begin{smallmatrix}u_x\\ u_z \end{smallmatrix}\bigr],
\bigl[\begin{smallmatrix}v_x\\ v_z \end{smallmatrix}\bigr]\in Q(\F_p^2,\omega_1)
\Bigr\}_{(\reg{u},\reg{v}),(\reg{u},\reg{v})}\) \\

\(\interp{\control_{U}\;\reg{u}\;\reg{v}}\) &
\(\coloneqq \Bigl\{ \Bigl((u,\bigl[\begin{smallmatrix}v_x\\ v_z\end{smallmatrix}\bigr]),\,(u,\bigl[\begin{smallmatrix}v_x\plus c\cdot p_x\\ v_z\plus c\cdot p_z\end{smallmatrix}\bigr])\Bigr)
\Bigm| u\in \F_p,\bigl[\begin{smallmatrix}v_x\\v_z\end{smallmatrix}\bigr]\in Q(\Zp^2,\omega_1)\Bigr\}_{(\reg{u},\reg{v}),(\reg{u},\reg{v})}\)

\end{tabular}

\caption{\textsf{\textbf{Denotation of \(\StabLang\) terms into \(\Total(\AR^Q)\)}}, where typed environments omitted for brevity. 
\(U=\pi(p_a,p_x,p_z)\in\mathcal{P}_p\) is a Pauli operator,
\(A:\Zp^{m}\to\Zp^n\) is an \(\Zp\)-affine transformation. For the sake of exposition we give the explicit interpretation for a generating set of the Clifford group: where \(X\) is the Pauli \(X\) gate,
\(C^X\) is the (quantum) controlled-\(X\) gate, \(F\) is the Fourier transform,
\(P\) is the quadratic phase gate.
}
\label{fig:denotation}
\end{figure}

\begin{theorem}[Full abstraction]
  \label{theorem:full_abstraction}
  Well-formed judgements \(c\) and \(d\) are observationally equivalent if and only
  if \(\interp{c} = \interp{d}\).
\end{theorem}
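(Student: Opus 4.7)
The plan is to reduce full abstraction to the soundness/completeness/universality result proven immediately above, via the symmetric monoidal equivalence $\rel:\StabChan_p \simeq \Total(\ARQ)$ from Proposition~\ref{proposition:causal_total_split}. The preceding theorem already established that $c$ and $d$ are observationally equivalent precisely when the induced stabiliser quantum channels agree, $\Conf{c}{-} = \Conf{d}{-}$ in $\StabChan_p$. Because the equivalence $\rel$ is in particular fully faithful, this equality of quantum channels coincides with equality of their images $\rel\!\circ\Conf{c}{-} = \rel\!\circ\Conf{d}{-}$ in $\Total(\ARQ)$. The whole theorem therefore reduces to the single coherence identity
\[
  \interp{c} \;=\; \rel \circ \Conf{c}{-}
\]
for every well-formed judgement $\Gamma \vdash c \update \Delta$.

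I would prove this identity by structural induction on the derivation of $\Gamma \vdash c \update \Delta$. The only non-atomic rule is sequential composition: the small-step semantics interprets $a \fby d$ by precomposing with $\aden{a}{\Gamma}{\Delta}$, while $\interp{-}$ interprets it as $\interp{a}\,;\,\interp{d}$, so functoriality of $\rel$ discharges the inductive step immediately, along with the identity case $\noop$. It then suffices to check, for each generating atomic term $a$, the equality $\rel(\aden{a}{\Gamma}{\Delta}) = \interp{a}$ by matching Figure~\ref{fig:aden-generators} against Figure~\ref{fig:denotation}. The verifications are routine applications of results already in the paper: stabiliser state preparations correspond to the specified affine Lagrangian subspaces by Proposition~\ref{proposition:lagrangian_to_stabiliser}; Clifford unitaries correspond to the graphs of their affine symplectomorphisms by Proposition~\ref{lemma:clifford_symplectic}; the decoherence $\mathcal{E}_Z$ and trace $\Tr_{\mathcal{H}_p}$ factor through the splitting $\mu_Z, \mu_Z^\dag$ from Theorem~\ref{theorem:splitting_modality}; classical affine channels $\mathcal{A}^M$ map to their graphs; and classically-controlled Paulis are obtained compositionally from these.

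The main obstacle is bookkeeping on registers: the subscripts $(\vbreg{x},\vbreg{y})$ indicate that a generator acts on a designated substring of the context, with identities on the rest. To match Figure~\ref{fig:aden-generators} with Figure~\ref{fig:denotation} one must invoke compatibility of $\rel$ with the monoidal product (direct sum versus tensor product), with identity wires, and with the symmetries that permute registers --- in effect, the translation between the context-indexed $(\vbreg{x},\vbreg{y})$-notation on the quantum side and on the relational side must be shown to commute with $\rel$. Once this is in place, full abstraction follows by chaining the equivalences: $\interp{c} = \interp{d}$ iff $\rel \circ \Conf{c}{-} = \rel \circ \Conf{d}{-}$ iff $\Conf{c}{-} = \Conf{d}{-}$ iff $c$ and $d$ are observationally equivalent.
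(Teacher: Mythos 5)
Your proposal is correct and follows essentially the same route as the paper: both reduce observational equivalence to equality of the induced stabiliser channels via the preceding soundness/completeness theorem, establish the coherence identity \(\interp{c} = \rel(\Conf{c}{-})\) (which the paper dispatches as ``by construction'' and you expand into the structural induction over generators), and conclude by the faithfulness of the equivalence in Proposition~\ref{proposition:causal_total_split}. Your additional attention to the register-bookkeeping and the generator-by-generator verification is a legitimate elaboration of what the paper leaves implicit, not a different argument.
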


\begin{proof}
  Since all generating judgements are stabiliser, it follows from a
  straightforward induction that \(\Conf{c}{-}\) lies in the embedding of
  \(\StabChan_p\) into the category of CPTP maps between
  finite-dimensional \(C^*\)-algebras. By construction \(\interp{c}
  = \rel(\Conf{c}{-})\), therefore the claim follows from
  proposition~\ref{proposition:causal_total_split}.
\end{proof}

%\begin{remark}
  It is conceptually straightforward, albeit notationally cumbersome,
  to move to a sampling operational semantics by decomposing measurements
  into projections along each outcome. The same denotational semantics is
  fully abstract for the observational equivalence of programs defined as
  statistical indistinguishability of configurations.
%\end{remark}

\subsection{Adding arbitrary classical operations to \texorpdfstring{\(\StabLang\)}{SPL}}
By enlarging the permissible classical operations of \(\StabLang\) with arbitrary classical operations, we can express a larger class of quantum channels:

\pagebreak
\begin{definition}
    Let \(\nlStabLang\) denote the  language given by adding an additional operation: \(\reg{z} = \mul\affmul  (\reg{x},\reg{y})\), alongside the formation rule:
    \vspace{-6mm}
    \begin{mathpar}
    \inferrule {\quad}{ \reg{x} : \pittype, \reg{y}:\pittype, \Gamma \vdash \reg{z} = \mul\!\star\! (\reg{x},\reg{y}) \update \reg{x} : \pittype, \reg{y} : \pittype, \reg{z} : \pittype, \Gamma}
  \end{mathpar}
\end{definition}

\begin{definition}
  We extend the small-step operational semantics of \(\StabLang\) to \(\nlStabLang\) with the additional rule given by:
  \begin{align*}
    \aden{\reg{z}\!=\!\mul\star(\reg{x},\reg{y})}{\reg{x}:\pittype,\reg{y}:\pittype,\Gamma}{\reg{x}:\pittype,\reg{y}:\pittype,\reg{z}:\pittype,\Gamma}
    \!\coloneqq\!
      \Big(
      &\rho \mapsto \sum_{j,k\in \Zp} \dyad{j,k,j\cdot k}{j,k} \rho \dyad{j,k}{j,k,j\cdot k} \\
      &:
      (\mathcal{H}_p,\mathcal{E}_Z)\otimes(\mathcal{H}_p,\mathcal{E}_Z)
      \!\to\!
      (\mathcal{H}_p,\mathcal{E}_Z)\Big)
      _{(\reg{x},\reg{y}),\reg{z}}
  \end{align*}
\end{definition}

By construction it is immediate that:

\begin{lemma}
    The operational semantics \(\,\rightsquigarrow^*\) for \(\nlStabLang\) is sound, complete, and universal for the observational equivalence of stabiliser quantum channels with arbitrary classical operations.
\end{lemma}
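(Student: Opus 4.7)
The plan is to mirror the strategy of Theorem~\ref{theorem:full_abstraction}, which relied on the determinism of the operational semantics combined with the equivalence $\Caus(\Split(\CPM(\Stab_p)))\simeq \Total(\ARQ)$. First, I observe that adding a single reduction rule for $\reg{z} = \mul\affmul (\reg{x},\reg{y})$ preserves determinism of $\,\rightsquigarrow^*$, since the new typing rule has a unique matching reduction and its interpretation in $\QChan$ is a genuine CPTP map between the relevant classical $C^*$-subalgebras. By induction on derivations, every well-formed $\nlStabLang$ judgment therefore determines a unique quantum channel $\Conf{c}{-}\in\QChan(\langle\Gamma\rangle,\langle\Delta\rangle)$.

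Given this, soundness and completeness follow exactly as in the proof of Theorem~\ref{theorem:full_abstraction}: observational equivalence under $\,\rightsquigarrow^*$ coincides with equality $\Conf{c}{-}=\Conf{d}{-}$ of the two associated quantum channels in $\QChan$. Unlike in the $\StabLang$ case, there is no detour through $\Total(\ARQ)$ available to mediate this equivalence, because the proposition at the end of Section~\ref{measurement:nonlinear_control} precludes any functor $\Total(\NLQ)\to\StabChan_p$; however, this detour is not needed once the operational semantics has been verified to land in $\QChan$ directly.

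For universality, the target class consists of those channels in $\QChan$ generated, under sequential and parallel composition, by stabiliser channels together with arbitrary total (deterministic) classical functions on pit registers. Universality for the stabiliser fragment and the affine classical fragment is inherited from Theorem~\ref{theorem:full_abstraction}, so it suffices to show that every total function $f:\F_p^n\to\F_p^m$ is expressible in $\nlStabLang$. Since $\F_p$ is finite, every such function is a tuple of polynomials, and polynomials are built from $\F_p$-affine maps via $\vbreg{y}=A\affmul\vbreg{x}$ together with the new multiplication atom. Because both of these constructs preserve their input registers in their typing rules, intermediate values can be accumulated in fresh pit registers via $\pit$ and cleaned up via $\discard$ without violating linear usage, yielding a straightforward compositional encoding.

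The main obstacle is conceptual rather than technical: pinning down the precise statement of universality. Since $\Total(\NLQ)$ cannot be embedded into $\StabChan_p$, the appropriate notion of \textbf{stabiliser quantum channels with arbitrary classical operations} must be formulated as the smallest sub-SMC of $\QChan$ containing $\StabChan_p$ and every classical function channel, rather than as the image of some functor from $\Total(\NLQ)$. Once this class is fixed, universality reduces to the polynomial-expressibility argument sketched above, and soundness and completeness are immediate from determinism.
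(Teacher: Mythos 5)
Your argument is correct and matches the paper's approach: the paper states this lemma as immediate by construction, and the only substantive content is exactly what you supply --- determinism of the extended transition system, the fact that the new atom denotes a genuine CPTP map on classical registers, and the observation (made just before the lemma) that affine maps together with multiplication generate all functions \(\F_p^n \to \F_p^m\) as polynomials. One small correction: the result you are mirroring is the unnumbered soundness/completeness/universality theorem for \(\StabLang\)'s operational semantics, whose proof uses only determinism of \(\rightsquigarrow^*\); Theorem~\ref{theorem:full_abstraction} is the separate full-abstraction statement that detours through \(\Total(\ARQ)\), and, as you correctly note, no such detour is available (or needed) here.
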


On the other hand, \(\nlStabLang\) also admits a denotational semantics:

\begin{lemma}
  The denotational semantics of \(\StabLang\) into \(\Total(\ARQ)\) extends to a denotational semantics of \(\nlStabLang\) into \(\NLQ\) given by:
  \begin{center}
  \(\interp{\reg{z}\!=\!\mul\!\star\!(\reg{x}, \reg{y})} \coloneqq \Big\{ \big( \bigl[\begin{smallmatrix} a \\ b \end{smallmatrix}\bigr], a\cdot b \big) \in \Zp^2 \oplus \Zp\Big\}_{(\reg{x},\reg{y}), \reg{z}}:\Zp \oplus \Zp\to \Zp\)
  \end{center}
\end{lemma}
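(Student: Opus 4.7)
The plan is to leverage the fact that \(\NLQ\) was defined precisely by adjoining the non-affine relation \(\{(\bigl[\begin{smallmatrix} a \\ b \end{smallmatrix}\bigr], a\cdot b)\}:\Zp^2 \to \Zp\) to \(\ARQ\). Since \(\ARQ\) embeds faithfully as a \dag-compact-closed subcategory of \(\NLQ\), the denotations of all \(\StabLang\) generators listed in figure~\ref{fig:denotation} transport unchanged into \(\NLQ\). Consequently the extended assignment agrees with \(\interp{-}:\StabLang \to \Total(\ARQ)\) on the common fragment of the language, so ``extends'' is the right word.

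For the newly added generator \(\reg{z} = \mul \star (\reg{x}, \reg{y})\), I would first verify that the proposed interpretation is a legal morphism in \(\NLQ\) of the correct type. The ordered-subset notation \((-)_{(\reg{x},\reg{y}), \reg{z}}\) tensors the adjoined multiplication relation on the chosen registers with the identity on the rest of the context; since \(\NLQ\) is closed under monoidal product and composition, the result is indeed a morphism \(\interp{\reg{x}:\pittype,\reg{y}:\pittype,\Gamma}\to \interp{\reg{x}:\pittype,\reg{y}:\pittype,\reg{z}:\pittype,\Gamma}\) in \(\NLQ\). The inductive clauses for composite terms \(c\fby d\) are unchanged, so compositionality holds automatically.

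The main conceptual point is that the extension remains sound with respect to the extended operational semantics. The operational rule interprets \(\reg{z} = \mul \star (\reg{x}, \reg{y})\) as the CPTP map \(\rho \mapsto \sum_{j,k} \dyad{j,k,j\cdot k}{j,k}\,\rho\,\dyad{j,k}{j,k,j\cdot k}\) between commutative \(C^*\)-algebras, i.e.\ the classical channel realising the graph of multiplication \((j,k)\mapsto (j,k,j\cdot k)\) on pit-valued inputs. Under the representation of classical channels as set-theoretic relations between classical objects, this channel corresponds to the relation \(\{((a,b),(a,b,a\cdot b))\mid a,b\in\Zp\}\), which factors as the identity on \((a,b)\) paired with the adjoined multiplication relation on \(\Zp^2\to\Zp\). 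This is exactly the proposed denotation, so the observational-equivalence argument of Theorem~\ref{theorem:full_abstraction} carries over verbatim, with the only non-routine check being that the adjunction of the single non-affine generator does not create new equations between the inherited \(\StabLang\)-morphisms. This last point follows because \(\ARQ \hookrightarrow \NLQ\) is faithful by construction: \(\NLQ\) is obtained by freely adding a morphism, hence no new identifications among pre-existing morphisms are imposed.
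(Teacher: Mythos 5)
Your first two paragraphs establish exactly what the lemma asserts, and they match the paper's (implicit) reasoning: the paper in fact gives no proof of this lemma at all, treating it as immediate from the definition of \(\NLQ\), which adjoins precisely the multiplication relation \(\{(\bigl[\begin{smallmatrix} a \\ b \end{smallmatrix}\bigr], a\cdot b)\}:\Zp^2\to\Zp\) to \(\ARQ\) and closes under direct sum and composition. Checking that the new clause is a well-typed morphism of \(\NLQ\) and that the \(\StabLang\) clauses transport unchanged along \(\ARQ\hookrightarrow\NLQ\) is all that is required. One small inaccuracy: \(\NLQ\) is not obtained by \emph{freely} adjoining a morphism; it is a concrete category of set-relations (the paper notes its classical homsets are \emph{all} subsets of \(\Zp^n\oplus\Zp^m\)). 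Faithfulness of the inclusion \(\ARQ\hookrightarrow\NLQ\) therefore holds for the simpler reason that both are concrete categories of relations with relational composition, not because a free adjunction imposes no new identifications.

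Your third paragraph, however, contains a genuine error, albeit one that is not needed for the lemma. The claim that ``the observational-equivalence argument of Theorem~\ref{theorem:full_abstraction} carries over verbatim'' is false. That argument rests on the equivalence \(\Caus(\Split(\CPM(\Stab_p)))\simeq\Total(\ARQ)\) of Proposition~\ref{proposition:causal_total_split}, and no analogous equivalence exists for \(\NLQ\): the paper explicitly proves that there is \emph{no} functor \(\Total(\NLQ)\to\StabChan_p\) extending the one on \(\Total(\ARQ)\), because non-affine classical control produces mixtures with non-uniform weights that \(\Total(\NLQ)\) cannot track. Full abstraction for \(\nlStabLang\) (even with respect to the weakened, nondeterministic observational equivalence) is stated in the paper only as a conjecture described as ``nontrivial to prove.'' The lemma you are asked to prove asserts only that the denotational assignment exists and is compositional; if you delete the final paragraph, or reduce it to the observation that the new clause is a legal morphism of \(\NLQ\) and the inductive clause for \(\fby\) is unchanged, your proof is complete and correct.
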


We weaken observational equivalence to forget about the probability distribution of measurements, recording only the \emph{support} of measurements:

\begin{definition}
    Say that two quantum channels are \textbf{nondeterministically observationally} equivalent in case they produce the same \textbf{possible} measurement outcomes according to the Born rule when acting on arbitrary density matrices.
\end{definition}

This motivates the following conjecture:

\begin{conjecture}
Well-formed judgements \(c\) and \(d\) in \(\nlStabLang\) are nondeterministically observationally equivalent if and only  if \(\interp{c} = \interp{d}\).
\end{conjecture}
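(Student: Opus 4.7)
The plan is to extend the proof of Theorem~\ref{theorem:full_abstraction} to handle the classical multiplication $\mul$, following the same two-step pattern of soundness and definability, but weakening ``equality of quantum channels'' to ``equality of their nondeterministic supports''.

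For soundness (equal denotations imply nondeterministic observational equivalence), I would define a symmetric monoidal \emph{support} functor $\mathrm{supp}$ from the category of quantum channels generated by $\StabChan_p$ together with $\mul$ into $\NLQ$. It acts on objects by embedding $\StabChan_p \hookrightarrow \QChan$ (so that $\pittype$ is interpreted via $\mathcal{E}_Z$), and sends a channel $\mathcal{F} : A \to B$ to the relation recording, for each basis input, the set of outputs that occur with nonzero probability under the Born rule. On the stabiliser generators this agrees with $\rel$ by Proposition~\ref{proposition:causal_total_split}; on $\mul$ it agrees with its explicit denotation by direct calculation. A straightforward induction on term structure then gives $\interp{c} = \mathrm{supp}(\Conf{c}{-})$ for every well-formed judgement $c$, and since nondeterministic observational equivalence is by definition equality of measurement supports in every context, $\interp{c} = \interp{d}$ implies nondeterministic observational equivalence.

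For the converse (distinct denotations are distinguished by some context), suppose $\interp{c} \neq \interp{d}$ in $\NLQ$. Using compact closure of $\NLQ$ inherited from $\AR^Q$, each morphism is determined by its name, a state $I \to \interp{\Gamma}^* \otimes \interp{\Delta}$, so it suffices to build a distinguishing testing context around the cap. The classical subsystems admit arbitrary set-relations as states (by the lemma characterising $\NLQ$ on classical objects); every such relation $S \subseteq \F_p^n$ is expressible in $\nlStabLang$, since the indicator function of $S$ is a polynomial that can be built as a composition of affine maps and $\mul$, combined with classical postselection via $\measure$. The quantum subsystems can be probed by Clifford unitaries followed by Pauli measurements, as in the proof of Theorem~\ref{theorem:full_abstraction}. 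Combining the two, any witness of $\interp{c} \neq \interp{d}$ can be converted into a configuration whose support distinguishes $c$ from $d$.

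The main obstacle is the rigorous construction of the support functor $\mathrm{supp}$. Unlike in $\Total(\ARQ) \simeq \StabChan_p$, the proposition preceding the definition of $\nlStabLang$ shows that no such equivalence holds between $\NLQ$ and the channels generated by stabilisers and $\mul$, precisely because non-affine classical control can create non-uniform mixtures over basis states. One must therefore verify directly that Born-rule supports are preserved by sequential and parallel composition of stabiliser-plus-$\mul$ channels, and that the resulting relation always lies in $\NLQ$, i.e.\ can be generated from affine relations and $\mul$. This amounts to a careful analysis of how non-uniform mixtures behave when they interact with subsequent Pauli measurements and classically-controlled Paulis, so that nondeterministic reachability in the operational semantics coincides exactly with relational reachability in $\NLQ$; establishing this invariant is what turns the soundness direction from a plausible claim into a theorem.
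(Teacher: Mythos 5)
The statement you are trying to prove is explicitly left \emph{open} in the paper: it is stated as a conjecture, and the authors immediately remark that it ``seems nontrivial to prove'' and that the crux is ``if the nonlinearity of classical operations interacts well with the composition of quantum channels.'' So there is no proof in the paper to compare against, and your proposal does not supply one either: you yourself flag that ``the main obstacle is the rigorous construction of the support functor'' and that establishing the compositionality invariant ``is what turns the soundness direction from a plausible claim into a theorem.'' That obstacle is precisely the conjecture. In the purely stabiliser-plus-affine fragment, every reachable state is (proportional to) a \emph{uniform} mixture over an affine subspace, so a state is completely determined by its support and supports compose relationally; this is what makes Theorem~\ref{theorem:full_abstraction} go through via Proposition~\ref{proposition:causal_total_split}. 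Once \(\mul\) is admitted, reachable states include non-uniform mixtures that are no longer determined by their supports, and the paper's own no-go proposition (the non-existence of a functor \(\Total(\NLQ)\to\StabChan_p\)) tells you that the naive functorial route is blocked. Whether ``Born-rule support'' is nonetheless a compositional invariant of the stabiliser-plus-\(\mul\) channels — in particular when a non-uniformly weighted classical register, entangled or correlated with quantum registers, subsequently controls Pauli operators and feeds into further Pauli measurements — is exactly the open question, and asserting that ``a straightforward induction on term structure'' gives \(\interp{c}=\mathrm{supp}(\Conf{c}{-})\) assumes the conclusion.

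The definability direction of your plan also has an unexamined step: you claim every relation \(S\subseteq\F_p^n\) is expressible using affine maps, \(\mul\), and ``classical postselection via \(\measure\),'' but \(\nlStabLang\) has no postselection primitive — all its channels are causal/total, and \(\measure\) is a trace-preserving decoherence, not a projection onto a chosen outcome. Distinguishing contexts must therefore be built from total channels only, and one must argue that a difference of relations in \(\NLQ\) is always witnessed by a difference of \emph{possible} outcomes of such a total test. In short, your proposal is a reasonable research plan that reproduces the authors' own motivation for the conjecture, but it is not a proof; the two genuine gaps are (i) compositionality of supports under non-affine classical control and (ii) realisability of distinguishing tests without postselection.
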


This seems nontrivial to prove; however, it is well-motivated. Morphisms in \(\NLQ\) are relations, therefore they record the possible ways in which stabilisers are related to each other.  The question is if the nonlinearity of classical operations interacts well with the composition of quantum channels. Despite the breadth of research on stabiliser quantum mechanics, adding classically controlled Pauli operations dependent non-affinely on Pauli measurements is more complex than one might na\"ively assume.

\section{Conclusion}

We have developed a denotational semantics for stabiliser quantum programs
which allows for the manipulation of stabiliser codes, Pauli measurements,
and affine classical operations and classically controlled Pauli
operators. We demonstrated this semantics by giving a fully abstract
denotational semantics to a toy imperative stabiliser language. Furthermore,
we extended this language with arbitrary classical control, with a corresponding
denotational semantics. We conjecture that this is fully abstract with respect
to the equivalence relation induced by \emph{possible} measurement outcomes.

In the case of qubits, the affine, symplectic representation of stabiliser maps breaks down so that \(\Proj(\Stab_2) \not\simeq
\ALR[\F_2]\) \cite{comfort_graphical_2021}.
By restricting the unitary operations to be generated by the controlled-not gate, the Pauli group and the swap gate, we obtain the maximal subcategory of qubit stabiliser maps on which the symplectic representation still holds \cite[p. 156]{comfortthesis}.  This is the natural setting for CSS codes \cite{Calderbank,Steane}, which are widely used in QEC \cite{eczoo_qubit_css}.  

The language we have described in this paper is quite low-level; despite the abstract geometric structure of its denotational semantics. In future work, we intend to develop a higher level programming language with primitives reflecting the elegant structure of the semantics. For example, the ability to natively represent graph states and graph-like operations, correctable and detectable errors, and the ability to make use of the enrichment in partially ordered sets  would be very useful.

It is also future work to explore denotational semantics for stabiliser quantum programs using their graphical calculus. There is a complete ZX-calculus for affine Lagrangian relations \cite{booth_graphical_2024}, which is equivalent to the qupit ZX-calculus \cite{quopitzx,Por2023} modulo scalars. This is an interesting direction for future work because the ZX-calculus has already been successful for constructing fault-tolerant quantum circuits \cite{bombin}, and the design and verification of QEC codes \cite{Chancellor,Duncan2014}.

\bibliography{bibliography}

\appendix

\end{document}